%%%%%%%%%%%%%%%%%%%%%%%%%%%%%%%%%%%%%%%%%%%%%%%%%%%%%%%%%%%%%%%%%%%%%%%%%%%%
%% Author template for Transportation Science (trsc)
%% Mirko Janc, Ph.D., INFORMS, mirko.janc@informs.org
%% ver. 0.95, December 2010
%%%%%%%%%%%%%%%%%%%%%%%%%%%%%%%%%%%%%%%%%%%%%%%%%%%%%%%%%%%%%%%%%%%%%%%%%%%%
%\documentclass[trsc,blindrev]{informs3}
\documentclass[trsc,nonblindrev]{informs3} % current default for manuscript submission
\pdfoutput=1
% \SingleSpacedXI
\OneAndAHalfSpacedXI % current default line spacing
%%\OneAndAHalfSpacedXII
%%\DoubleSpacedXII
%%\DoubleSpacedXI

% If hyperref is used, dvi-to-ps driver of choice must be declared as
%   an additional option to the \documentclass. For example
%\documentclass[dvips,trsc]{informs3}      % if dvips is used
%\documentclass[dvipsone,trsc]{informs3}   % if dvipsone is used, etc.

% Private macros here (check that there is no clash with the style)

% Natbib setup for author-year style
\usepackage{amsmath, amsfonts, url, graphicx, balance}
\usepackage{mathrsfs,upgreek}
\usepackage[utf8x]{inputenc}
\usepackage[bookmarks]{hyperref}
\usepackage{xfrac}
\usepackage{subfig}
\usepackage{mathptmx}
\usepackage{pgf}
\usepackage[normalem]{ulem}
\usepackage{makecell}
\usepackage{pifont}% http://ctan.org/pkg/pifont
\usepackage{tabu}
\usepackage{float}
\usepackage{tikz}
\usepackage{array}
\usepackage{bm}
\usepackage{upgreek}
\usepackage{epstopdf}
\usepackage[overload]{empheq}
\newcolumntype{L}[1]{>{\raggedright\let\newline\\\arraybackslash\hspace{0pt}}m{#1}}
\newcolumntype{C}[1]{>{\centering\let\newline\\\arraybackslash\hspace{0pt}}m{#1}}
\newcolumntype{R}[1]{>{\raggedleft\let\newline\\\arraybackslash\hspace{0pt}}m{#1}}
\usepackage{natbib}
 \bibpunct[, ]{(}{)}{,}{a}{}{,}%
 \def\newblock{\ }%
 %

%% Setup of theorem styles. Outcomment only one. 
%% Preferred default is the first option.
\TheoremsNumberedThrough     % Preferred (Theorem 1, Lemma 1, Theorem 2)
%\TheoremsNumberedByChapter  % (Theorem 1.1, Lema 1.1, Theorem 1.2)

%% Setup of the equation numbering system. Outcomment only one.
%% Preferred default is the first option.
\EquationsNumberedThrough    % Default: (1), (2), ...
%\EquationsNumberedBySection % (1.1), (1.2), ...

% In the reviewing and copyediting stage enter the manuscript number.
\MANUSCRIPTNO{1} % When the article is logged in and DOI assigned to it,
                 %   this manuscript number is no longer necessary
%!TEX root = main.tex
%%%%%% Commands %%%%%%%%%%%%
%Checkmarks and crosses
%
\newcommand{\xmark}{\ding{55}}%

%Roman numerals
\newcommand*{\rom}[1]{\mathrm{\uppercase\expandafter{\romannumeral #1\relax}}}

%Math

%Bayesian formulation
\newcommand{\genGame}[0]{\Gamma}

%Players
\newcommand{\playerset}[0]{\mathcal{N}}
\newcommand{\playerindex}[0]{n}
\newcommand{\otherplayerindex}[0]{j}
\newcommand{\playertypeset}[0]{\mathrm{T}}

\newcommand{\playertype}[0]{t}
\newcommand{\numplayers}[0]{N}

\newcommand{\typeInf}[0]{\mathrm{H}}
\newcommand{\typeUninf}[0]{\mathrm{L}}

\newcommand{\playerFrac}[1]{\lambda^{\hspace{-0.2em}#1}}
\newcommand{\fracInf}[0]{\playerFrac{\typeInf}}
\newcommand{\fracUninf}[0]{\playerFrac{\typeUninf}}

\newcommand{\fracBoundSym}[0]{\Lambda}
\newcommand{\fracBound}[1]{{\fracBoundSym}_{#1}\hspace{-0.5pt}(\accuracy^\typeInf,\pInc)}
\newcommand{\fracBoundEta}[1]{{\fracBoundSym}_{#1}\hspace{-0.5pt}(\pInc)}

%Information system
\newcommand{\serviceset}[0]{\mathcal{I}}
\newcommand{\service}[0]{i}
\newcommand{\accuracy}[0]{\eta}
\newcommand{\obs}[0]{y}
\newcommand{\obsrv}{y}
%Demand

\newcommand{\totaldemand}[0]{\mathrm{D}}

%State
\newcommand{\stateset}[0]{\Omega}
\newcommand{\staterv}[0]{\theta}
\newcommand{\state}[0]{\theta}
\newcommand{\stateinc}[0]{\mathbf{a}}
\newcommand{\statenorm}[0]{\mathbf{n}}

%Incidents
\newcommand{\pInc}[0]{p}

%Network

%Action set
\newcommand{\actionset}[0]{A}
\newcommand{\actionindex}[0]{a}

%Routes
\newcommand{\routeset}[0]{\mathcal{R}}
\newcommand{\routeindex}[0]{\mathrm{r}}
\newcommand{\numroutes}[0]{M}

%Strategies

\newcommand{\strat}[0]{s}

%Utility function

 %player index or type, arguments
 %The variable to represent the bonus for committed players

%Loading
\newcommand{\loadset}[0]{\mathcal{Q}}

\newcommand{\totLoad}[0]{Q}
\newcommand{\load}[0]{\mathrm{q}}

%Cost Function

\newcommand{\latencysymb}[0]{\ell}
\newcommand{\latency}[3]{\latencysymb_{#1}^{#2}\big({#3}\big)} %routeindex, state, load
\newcommand{\slope}[2]{\alpha_{#1}^{#2}} %routeindex, state
\newcommand{\intercept}[1]{\beta_{#1}} %routeindex
 %routeindex, for elbow function, denotes location of the elbow
\newcommand{\costVar}[0]{C}
\newcommand{\valueVar}[0]{V}
\newcommand{\welfare}[0]{W}

%Solutions
\newcommand{\bneIndicator}{\ast}
\newcommand{\socOpt}{\dagger}

\newcommand{\splitfractionVar}[0]{x}
\newcommand{\splitfraction}[3]{\splitfractionVar_{#2}^{#1#3}} %type, route, state

\newcommand{\solConst}{K_0}
\newcommand{\regConst}{K}

%Matrix form

%probability
\newcommand{\commprior}[0]{\pi}
\newcommand{\belief}[0]{\mu}
\newcommand{\expect}[1]{\mathbb{E}[#1]}
\newcommand{\pr}[0]{\text{Pr}}
\newcommand{\prob}[1]{\pr\left({#1}\right)}

%Regimes
\newcommand{\regime}[1]{\phi_{#1}} %regime I,II,III,IV

%Style

%%%%%%%%%%%%%%%%
\begin{document}
%%%%%%%%%%%%%%%%

% Outcomment only when entries are known. Otherwise leave as is and 
%   default values will be used.
%\setcounter{page}{1}
%\VOLUME{00}%
%\NO{0}%
%\MONTH{Xxxxx}% (month or a similar seasonal id)
%\YEAR{0000}% e.g., 2005
%\FIRSTPAGE{000}%
%\LASTPAGE{000}%
%\SHORTYEAR{00}% shortened year (two-digit)
%\ISSUE{0000} %
%\LONGFIRSTPAGE{0001} %
%\DOI{10.1287/xxxx.0000.0000}%

% Author's names for the running heads
% Sample depending on the number of authors;
% \RUNAUTHOR{Jones}
% \RUNAUTHOR{Jones and Wilson}
% \RUNAUTHOR{Jones, Miller, and Wilson}
% \RUNAUTHOR{Jones et al.} % for four or more authors
% Enter authors following the given pattern:
\RUNAUTHOR{Liu, Amin, and Schwartz}

% Title or shortened title suitable for running heads. Sample:
% \RUNTITLE{Bundling Information Goods of Decreasing Value}
% Enter the (shortened) title:
\RUNTITLE{Information Heterogeneity in Bayesian Congestion Games}

% Full title. Sample:
% \TITLE{Bundling Information Goods of Decreasing Value}
% Enter the full title:
\TITLE{Effects of Information Heterogeneity in Bayesian Congestion Games}

% Block of authors and their affiliations starts here:
% NOTE: Authors with same affiliation, if the order of authors allows, 
%   should be entered in ONE field, separated by a comma. 
%   \EMAIL field can be repeated if more than one author
\ARTICLEAUTHORS{%
\AUTHOR{Jeffrey Liu, Saurabh Amin}
\AFF{Massachusetts Institute of Technology, \EMAIL{\{jeffliu,amins\}@mit.edu}}
\AUTHOR{Galina Schwartz}
\AFF{University of California, Berkeley, \EMAIL{schwartz@eecs.berkeley.edu}}
% Enter all authors
} % end of the block

\ABSTRACT{%
This article studies the value of information in route choice decisions when a fraction of players have access to high accuracy information about traffic incidents relative to others. To model such environments, we introduce a Bayesian congestion game, in which players have private information about incidents, and each player chooses her route on a network of parallel links. The links are prone to incidents that occur with an ex-ante known probability. The demand is comprised of two player populations: one with access to high accuracy incident information and another with low accuracy information, i.e. the populations differ only by their access to information. The common knowledge includes: (i) the demand and route cost functions, (ii) the fraction of highly-informed players, (iii) the incident probability, and (iv) the marginal type distributions induced by the information structure of the game. We present a full characterization of the Bayesian Wardrop Equilibrium of this game under the assumption that low information players receive no additional information beyond common knowledge. We also compute the cost to individual players and the social cost as a function of the fraction of highly-informed players when they receive perfectly accurate information. Our first result suggests that below a certain threshold of highly-informed players, both populations experience a reduction in individual cost, with the highly-informed players receiving a greater reduction. However, above this threshold, both populations realize the same equilibrium cost. Secondly, there exists another (lower or equal) threshold above which a further increase in the fraction of highly-informed players does not reduce the expected social costs. Thus, once a sufficiently large number of players are highly informed, wider distribution of more accurate information is ineffective at best, and otherwise socially harmful. 

}%

% Sample
%\KEYWORDS{deterministic inventory theory; infinite linear programming duality; 
%  existence of optimal policies; semi-Markov decision process; cyclic schedule}

% Fill in data. If unknown, outcomment the field
\KEYWORDS{Value of information, incomplete information games, route choice behavior}
\HISTORY{Submitted for review March 28, 2016}

\maketitle
%%%%%%%%%%%%%%%%%%%%%%%%%%%%%%%%%%%%%%%%%%%%%%%%%%%%%%%%%%%%%%%%%%%%%%

% Samples of sectioning (and labeling) in TRSC
% NOTE: (1) \section and \subsection do NOT end with a period
%       (2) \subsubsection and lower need end punctuation
%       (3) capitalization is as shown (title style).
%
%\section{Introduction.}\label{intro} %%1.
%\subsection{Duality and the Classical EOQ Problem.}\label{class-EOQ} %% 1.1.
%\subsection{Outline.}\label{outline1} %% 1.2.
%\subsubsection{Cyclic Schedules for the General Deterministic SMDP.}
%  \label{cyclic-schedules} %% 1.2.1
%\section{Problem Description.}\label{problemdescription} %% 2.

% Text of your paper here
%!TEX root = main.tex

\section{Introduction}

This article studies the effects of heterogeneous information on traffic route choices and travel time costs. Our work is motivated by the recent advancements in (and widespread adoption of) Traveler Information Systems (TIS). TIS inform their subscribed commuters of traffic conditions and routes based on historical and current measurements of the network. Recently, the increased penetration of smartphones and other GPS-enabled devices over the past decade has allowed for improved traffic data collection (\cite{herrera2010evaluation}) and has promoted the growth of traffic navigation services (\cite{balakrishna2013information}). Several public agencies also use road side infrastructure (for example, changeable message signs) located at select locations to alert to commuters about traffic conditions on onward routes (\cite{emmerink1996variable}, \cite{chen2003system}, \cite{mortazavi2009travel}).

However, even if all TIS providers typically want to provide the ``true state'' of the network, different providers use different technologies and have access to different data. Users may choose different TIS due to marketing, costs, availability, etc., and some may choose not to use TIS at all. This leads to an inherently heterogeneous information structure of the commuting populace, since some commuters have access to more accurate information than others. Furthermore, commuters may react to their beliefs of others' knowledge; e.g. a commuter may avoid a route because she believes that it will be congested because she believes that the subscribers of a certain TIS are likely to take that route. It is therefore important not only to model commuters' beliefs of the network state, but also their beliefs about the other commuters. This leads to the question: \emph{how does heterogeneous information about traffic incidents affect the commuters' equilibrium route choices and costs?}

To address this question, we introduce a Bayesian congestion game model which incorporates heterogeneous beliefs about the network state and about the other players (commuters). Under certain assumptions about the information structure, we characterize the equilibrium of the game and analyze the equilibrium costs. In particular, we focus on identifying the effects of providing incident information to a larger portion of the population. We examine the value of information to players with access to high accuracy information, and to those without this information. Finally, we analyze the effect of information on the social cost.

Our model is a static Bayesian congestion game on a parallel-route network, where the cost of each route is an increasing function of the number of players utilizing that route and the incident state. Each route has a commonly known prior probability of being in either an \emph{abnormal} (or incident) or \emph{normal} state. Players are divided into two populations: those subscribed to a ``high-accuracy'' TIS (population $\typeInf$) and those subscribed to a ``low-accuracy'' TIS (population $\typeUninf$). Each TIS sends out a private signal to all of its subscribers; the probability that the signal reports the true incident state is greater for the high-accuracy TIS than for the low-accuracy one. Each player then updates her belief of the incident state based on the signal she receives. We use the Bayesian Wardrop Equilibrium (BWE) as our solution concept. A strategy profile is a BWE if, for each player type, the expected costs of all utilized routes are equal and less than the expected cost of each unutilized route.

Previous work has identified the need to consider information in modeling route choices. \cite{ben1991dynamic} identify several potential phenomena that may occur due to commuters having access to traffic information, including \emph{concentration}, where commuters who receive the same information end up taking the same route and causing congestion; and \emph{overreaction}, where commuters may incorrectly estimate how others will react to information, leading to oscillations and suboptimal decisions. Additionally, \cite{ben1996impact} demonstrate that predictive traffic information provides a modest reduction in travel time for commuters over myopic information, and importantly, drivers responding to myopic information may do worse than those without any access to information. While our model captures the effect of concentration, we do not address dynamic aspects such as oscillations, since they do not show up in the equilibrium of our static model. Such analysis is better suited for dynamic models of traffic, such as repeated or multi-stage games. However, our model is able to capture the equilibrium effects of players having incorrect beliefs about other players or the network state. 

One important line of work is the seminal ``bottleneck'' model of \cite{vickery69}, which accounts for congestion costs due the time spent in queues and scheduling costs due to differences between desired and actual arrival times. Notably, \cite{arnott1991does} begin to address the effect of potentially noisy information about route capacities on route choices and congestion. They demonstrate that providing perfect information to commuters about the network state reduces congestion, but inaccurate information can increase costs. They also find that the utility of information is dependent on the probability of incident. To the best of our knowledge, there has been no analogous analysis for network congestion games. In addition, their results mainly focus on two boundary cases: one where only one commuter has access to information, and the other case where all commuters have access. We contribute to the current literature on the value of information in route choice problems by considering all fractions of informed commuters (players). Specifically, we develop a network congestion model that incorporates incident probability, information accuracy, and information distribution.

Additionally, \cite{mahmassani1991system} experimentally identify several effects of information on traffic congestion through simulation. From their results, we would like to emphasize three particular effects: (i) there exists an ``optimal'' fraction (less than 1) of players with information that results in the maximum reduction in social cost; (ii) the cost for players with information increases as the fraction of informed players increases; and (iii) even players without information receive a reduction in cost when others have access to information. Although our model is a static, analytic model instead of a dynamic simulation, we are still able to qualitatively capture these effects.

Another relevant body of literature is that of congestion games, which model the negative externalities resulting from selfish choices of players.  Classical results include the existence of a potential function in every congestion game~(\cite{rosenthal1973class}), and the isomorphism between congestion and potential games~(\cite{monderer1996potential}). More recently, \cite{Monderer07multipotentialgames} considers a generalization of congestion games with player-specific cost functions; also see~\cite{Milchtaich1996111} for an earlier work in this direction. In our work, players have identical preferences, and in the absence of information heterogeneity, our game reduces to the classical potential game. Still, the existence, uniqueness, and structural properties of our game's equilibrium cannot be derived by straightforward application of the known results.

Significant work has also gone into characterizing the efficiency of equilibrium in congestion games, including the well-known  ``price of anarchy'' metric (\cite{Koutsoupias}), (\cite{roughgarden2007routing}). Work on improving the performance in congestion games includes congestion pricing and Stackelberg routing. Congestion pricing schemes include the ``Pigouvian tax:'' a tolling scheme where players pay a toll equal to the externality they impose on the network (\cite{pigou1932economics}). \cite{beckmann56} built further on this idea, showing that for convex and increasing route functions, a social planner can reach a socially optimal outcome by using \emph{marginal cost pricing}. In Stackelberg routing, a central agency controls a portion of the total demand and routes it in a way that improves system performance (\cite{Korilis1997}), (\cite{roughgarden2004stackelberg}), (\cite{krichene2014stackelberg}). In addition to congestion pricing and Stackelberg routing, providing information to commuters offers another method for system performance improvement. As we demonstrate in our work, providing incident information to even a fraction of commuters can improve system performance significantly; however, providing information to too many commuters can be counterproductive. Thus, it is important to fully understand the effects of heterogeneous information on route choices and system performance.

Some work has addressed the effects of information heterogeneity in congestion games, but from different perspectives. For example, \cite{gairing2008selfish} extend the congestion game framework to games of incomplete information by introducing a Bayesian congestion game where players are uncertain about other players' demands. Importantly, \cite{acemoglu2016informational} define and explore the ``Informational Braess' Paradox,'' (IPB) a phenomenon where users who receive information about the existence of additional routes may be worse off. They show that the IPB cannot occur in networks that are composed of series of linearly independent (SLI) networks. For networks that are not SLI, there exists a configuration of edge-specific cost functions where IBP will occur. While their work focuses on information about network structure, ours focuses on (potentially noisy) information about network state. Both aspects must be considered in order to understand the effects of information in congestion games.

Our model also contributes to the existing literature on congestion games by incorporating all of the following factors: the probability of increased cost due to an incident on a route, the fraction of the population with access to information, and the accuracy of that information. In contrast to the boundary cases studied by \cite{arnott1991does}, we consider the full spectrum of information accuracy and distribution in a congestion game framework. We also provide a game theoretical model that can capture many of the experimental effects observed in \cite{mahmassani1991system}. Finally, our treatment complements the recent work of \cite{acemoglu2016informational} on information about network structure by analyzing the effects of noisy information about network state.

In this paper, we limit our attention to a single origin-destination pair connected by two parallel routes, where the main route is nominally shorter but prone to incidents and the alternative route is nominally longer but not prone to incidents. We assume that there is a commonly known inelastic total demand, and that population $\typeUninf$ receives an uninformative signal, i.e. population $\typeUninf$ only knows common knowledge. We also assume that players know the marginal type distribution of the other players, where type encapsulates the TIS and private signal received by the player. We also present other belief structures that can be considered in our modeling environment, although we do not analyze their equilibrium effects. Note that our model only considers route choice, and not departure time. Nonetheless, this decision does not unduly limit the practicality of our model, since real-time TIS now allow commuters to make route choices after they are already on the road.

Our main results are as follows: we present a full characterization of the equilibrium route choices under the aforementioned assumptions. Additionally, we analyze the individual and social costs for player compositions ranging from all-uninformed to all-informed, under the assumption that population $\typeInf$ receives the exact realization of the network state. Relative to the baseline cost where all players are uninformed, we find that there exists a threshold fraction of population $\typeInf$ players below which both populations experience a reduction in individual cost, with the highly informed players receiving a greater reduction. However, above this threshold, both populations experience an equal reduction in cost, and the \emph{relative value of information} (difference between the two populations' individual costs) is zero. Thus, below this threshold, a member of population $\typeUninf$ would benefit from gaining access to information (i.e. joining population $\typeInf$), but above this threshold, both populations are equally well off, so there is no incentive to gain access to the information.

We also characterize the social cost and observe that there is a reduction in social cost from increasing the fraction of highly informed players, but only up to a threshold. This can be viewed as the ``optimal'' level of information distribution for social cost reduction. Importantly, this second threshold is less than or equal to the first, depending on the route cost and incident probability parameters. If this threshold is equal to the first, then the relative value of information goes to zero and the social cost stops improving at the same point. However, if this threshold is lower than the first, then there exists a range of fractions where population $\typeUninf$ players would benefit if they became highly informed, but the social cost would increase if they became highly informed.

The paper is structured as follows: Section~\ref{sec:environment} describes the modeling environment; Section~\ref{sec:bayesian_congestion_game} presents the Bayesian congestion game model; Section~\ref{sec:beliefs} describes belief structures that can be considered in our modeling environment; Section~\ref{sec:equilibrium} characterizes the equilibrium, and Section~\ref{sec:cost_analysis} analyzes the equilibrium costs and value of information; Section~\ref{sec:discussion} presents our concluding remarks.

\section{Modeling Environment}
\label{sec:environment}

In Sec.~\ref{sub:environment}, we introduce a parallel-route road network, populated by a set of players (commuters), who incur the travel costs depending on the aggregate load in their chosen route and the state of the network. The network is in the normal condition (state~$\statenorm$) when there are no incidents on any route of the network. The presence of traffic incident on a route $\routeindex$ is represented by an abnormal condition (state~$\stateinc_\routeindex$) of the network. The state of the network is random, governed by a known distribution.  Next, in Sec.~\ref{sub:traffic_information_services}, we introduce a set of Traffic Information Services (TISs) available to the players. The TISs in this set have different levels of accuracy about the network state. Each player has access to a TIS of some exogenously given level of accuracy. While we describe the modeling environment for finite player settings, the same description extends to the population game of nonatomic players that we introduce in Section~\ref{ssub:populations_of_nonatomic_players}.
 
%For simplicity, we do not consider incident events that simultaneously affect more than one route. 
%two states (conditions): a normal condition or an abnormal one, respectively, and the abnormal state occurs due to the presence of traffic incidents on the route. 
%
%In this section,  we first describe the environment governing the travel costs of commuters (players) on a parallel-route road network, which randomly assumes a normal condition (state $\statenorm$) or an abnormal one (state $\stateinc$). The abnormal state occurs due to the presence of traffic incidents on the road network. Next, we introduce a set of traffic information services (TIS) with different levels of accuracy about the state of the network. 

\subsection{Effect of Random Incidents} % (fold)
\label{sub:environment}
Consider a set of $\playerset = \{1,\ldots,\numplayers\}$ players, where each player $\playerindex \in \playerset$ chooses among a set of $\numroutes$ parallel routes, $\routeset = \{\routeindex_1,\ldots,\routeindex_\numroutes\}$, connecting a single origin node $o$ and destination node $d$. Let $\totaldemand$ be the total traffic demand induced by the players. We assume the demand is inelastic (i.e., fixed), and each player induces the same demand on the network.

For each player $\playerindex\in\playerset$, denote the set of pure strategies (i.e. available route choices) as $\actionset^\playerindex \equiv \routeset$. A pure strategy profile is denoted $\actionindex = (\actionindex^1,\ldots,\actionindex^\numplayers)$. Let $\Delta(\actionset^\playerindex)$ denote the set of probability distributions over $\actionset^\playerindex$, and define $\Delta := \prod_\playerindex \Delta(\actionset^\playerindex)$. Let $\splitfractionVar^{\playerindex} \in \Delta(\actionset^\playerindex)$ denote a mixed strategy of player $\playerindex$, and $\splitfractionVar = (\splitfractionVar^1,\ldots,\splitfractionVar^\numplayers) \in \Delta$ a mixed strategy profile for the set of players. A mixed strategy of player $\playerindex$ is an $M$-dimensional probability vector $\splitfractionVar^\playerindex = (\splitfractionVar_1^\playerindex,\ldots,\splitfractionVar_\numroutes^\playerindex)$, where $\splitfractionVar_\routeindex^\playerindex$ denotes the player $\playerindex$'s probability of taking the route $\routeindex \in \routeset$.

The routes of the network are prone to random incidents. When a route~$\routeindex$ is experiencing an incident (abnormal state~$\stateinc_\routeindex$), each player's cost of traveling via this route is higher than the cost when there is no incident on it. We model the allocation of incidents in the network via a fictitious player ``Nature,'' who instantiates the ``state'' of the network (and thus, the state of the game) prior to players choosing their routes. More generally, in incomplete information games, the concept of fictitious player (a.k.a. Nature) is routinely used to instantiate all ``payoff-relevant'' parameters of the game (\cite{fudenberg1991game}). We denote the random state of network as $\left(\staterv_1,\dots,\staterv_\numroutes\right)$, which takes values in the set $\prod_{\routeindex=1}^\numroutes\{\statenorm_\routeindex,\stateinc_\routeindex\}$. The state $\statenorm:=\left(\statenorm_1,\ldots,\statenorm_\numroutes\right)$ represents the nominal state where there are no incidents on any route of the network. The presence of an incident (i.e., abnormal state) on a route~$\routeindex$ of the network is represented by $\staterv_\routeindex=\stateinc_\routeindex$. 

%The road network is prone to random incidents. \textcolor{red}{Each commuter's cost of traveling through the network in an incident (abnormal state) is higher than the cost when the network has no incident (normal state).} 

%
%We denote the state of network as $\staterv$, which takes values in the set $\stateset = \{\statenorm,\stateinc\}$. The outcomes ``$\statenorm$'' and ``$\stateinc$'' represent the ``normal'' and ``abnormal'' states of the network, respectively. The probability mass function of $\staterv$ follows a Bernoulli distribution with parameter $\pInc$, i.e.: 

We assume that route~$\routeindex$ is prone to an incident with an exogenously given Bernoulli probability $\pInc_\routeindex$, independent of all the other routes. Then, the probability of the network state in which the routes $\routeindex_1,\ldots,\routeindex_k$ are in the abnormal state (i.e. $\staterv_1=\stateinc_1,\dots,\staterv_k=\stateinc_k$), and the routes $\routeindex_{k+1},\ldots,\routeindex_{\numroutes}$ are in the nominal state (i.e. $\staterv_{k+1}=\statenorm_{k+1},\dots,\staterv_\numroutes=\statenorm_\numroutes$) is given by: 
\begin{align}
\pr(\staterv_1=\stateinc_1,\dots,\staterv_k=\stateinc_k,\staterv_{k+1}=\statenorm_{k+1},\dots,\staterv_\numroutes=\statenorm_\numroutes) = \prod_{\routeindex=1}^k \pInc_\routeindex \times\prod_{\routeindex={k+1}}^{\numroutes} (1-\pInc_\routeindex).
%\begin{cases}
%  \pInc_r, &\text{ if } \state = \stateinc_r, \quad r=1,\ldots,\numroutes\\
%  \vdots & \vdots\\
%  \pInc_\numroutes, &\text{ if } \state = \stateinc_\numroutes\\
%  1-\sum_{\routeindex}^{\numroutes}\pInc_\routeindex, &\text{ if } \state = \statenorm.
%\end{cases} 
\label{eq:state_prob_dist}
\end{align}

% affect only one route at a time, i.e. ``$\stateinc_\routeindex$'' represents that route $\routeindex$ is facing an incident, but all other routes are operating normally. The probability mass function of $\staterv$ is described by a discrete distribution with exogenously given parameters $\pInc_1,\ldots,\pInc_\numroutes$, i.e.: 
%\begin{align}
%\pr(\staterv) &= \begin{cases}
%  \pInc_r, &\text{ if } \state = \stateinc_r, \quad r=1,\ldots,\numroutes\\
%%  \vdots & \vdots\\
%%  \pInc_\numroutes, &\text{ if } \state = \stateinc_\numroutes\\
%  1-\sum_{\routeindex}^{\numroutes}\pInc_\routeindex, &\text{ if } \state = \statenorm.
%\end{cases} \label{eq:state_prob_dist}
%\end{align}

%\stateset = 

For simplicity, we assume that each route $\routeindex\in\routeset$ has a linear cost function with \emph{state-dependent} slopes $\slope{\routeindex}{\state} \in \mathbb{R}_{+}$, and \emph{state-independent} free-flow travel times $\intercept{\routeindex} \in \mathbb{R}_{+}$: 
\begin{align}
\forall \routeindex \in \routeset, \quad \latency{\routeindex}{\state}{\totLoad_\routeindex} &=
  \slope{\routeindex}{\state}\totLoad_\routeindex + \intercept{\routeindex},
\label{eq:state_latency_func}
\end{align}
where $\totLoad_\routeindex=\frac{\totaldemand}{\numplayers} \sum_{\playerindex=1}^\numplayers \delta_{\{\actionindex^\playerindex = \routeindex\}}$ is the total load induced by the players on route $\routeindex$. Our framework naturally extends to allow for generalized cost functions. We expect that results qualitatively similar to the ones in this paper hold for the case in which the cost function on each route is continuous and increasing in that route's load.
%:
%\begin{align}
%\totLoad_\routeindex(\actionindex) = \demand \sum_{\playerindex=1}^\numplayers \delta_{\{\actionindex^\playerindex = \routeindex\}}. \label{eq:total_demand}
%\end{align} 
It is conventional to impose that the slope of a route when it is facing an incident is greater than or equal to the slope when it is not facing an incident, i.e., $\forall \routeindex \in \routeset, \slope{\routeindex}{\stateinc_\routeindex} \geq \slope{\routeindex}{\statenorm}$. Similar assumptions are made by \cite{arnott1991does} regarding the capacity of routes affected by incidents in the bottleneck model.
%\begin{align}
%\forall \routeindex \in \routeset,\quad \slope{\routeindex}{\stateinc_\routeindex} \geq
%%\slope{\routeindex}{\stateinc_{\routeindex'\neq\routeindex}}=
%\slope{\routeindex}{\statenorm}, \label{eq:slope_comparison}
%\end{align} 
Thus, for a given demand, a route in the abnormal state is at least as costly as in the normal state. We assume that the free-flow travel time, $\intercept{\routeindex}$, is not affected by the state of network; this is mainly due to the observation that incidents occurring during free flow tend to have little to no impact on the travel time of the route (\cite{jin2014piecewise}).

%\textcolor{blue}{
%\begin{remark}{\textbf{[Network states]}}
%Note that although we restrict our discussion to $M+1$ network states, our setup is extensible to a broader range of incident events. For example, the state could be an $M$-dimensional vector taking values in $\prod_{\routeindex=1}^M\{\staterv_0,\dots,\staterv_{I_\routeindex}\}$ where $\staterv_0<\staterv_1\dots<\staterv_{I_\routeindex}$ represent the possible incident scenarios on route $\routeindex$, ranging from no incident ($\state_0$) to the most severe incident ($\staterv_{I_\routeindex}$).
%\end{remark}
%}
% \begin{align}
% \latency{\routeindex}{\staterv}{\load} = \begin{cases}
%   \slope{\routeindex}{\staterv_1}\load + \intercept{\routeindex}, &\text{ if }\staterv = \staterv_1\\
%   \vdots & \\
%   \slope{\routeindex}{\staterv_m}\load + \intercept{\routeindex}, &\text{ if }\staterv = \staterv_m\\
% \end{cases}
% \label{eq:state_latency_func}
% \end{align}
For the ease of exposition, we will limit our attention to a two-route network ($\numroutes = 2$) as shown in Fig.~\ref{fig:network_cartoon}, and impose the following assumption on the network state, and route-cost parameters: 

\begin{assumption}
\label{ass:route_structure} 
%We assume the network has the following properties:

\begin{itemize}
\item[$(A\ref{ass:route_structure})_a$] Only the first route, $\routeindex_1$, is prone to an incident with a probability~$\pInc$, and the state of network takes values in the set~$\stateset=\{\statenorm,\stateinc\}$. With a slight abuse of notation, we use $\state = \stateinc$ to represent an incident on $\routeindex_1$ and no incident on $\routeindex_2$. State $\state = \statenorm$ denotes that there are no incidents on any route.

\item[$(A\ref{ass:route_structure})_b$] The slopes in the route cost functions \eqref{eq:state_latency_func} satisfy the ordering $\slope{1}{\stateinc}>\slope{2}{}\geq\slope{1}{\statenorm}$, 
%\begin{align}
%\slope{1}{\stateinc}>\slope{2}{}>\slope{1}{\statenorm}, \label{eq:slope_order}
%\end{align}
i.e. the slope of the first route's cost function in the abnormal state is larger than the slope of the second route, which is in turn larger than or equal to the slope of the first route in the normal state. In addition, the free-flow travel time for the second route at least as high as that of the first route, i.e. $\intercept{2} \geq \intercept{1}$. 
%\item[$(A\ref{ass:route_structure})_c$] 
\end{itemize}
\end{assumption}
Fig.~\ref{fig:latency_funcs} illustrates $(A\ref{ass:route_structure})_a-(A\ref{ass:route_structure})_b$.

\begin{figure}[htb]
\centering
\subfloat[Two route network]{
\includegraphics[width=0.4\linewidth]{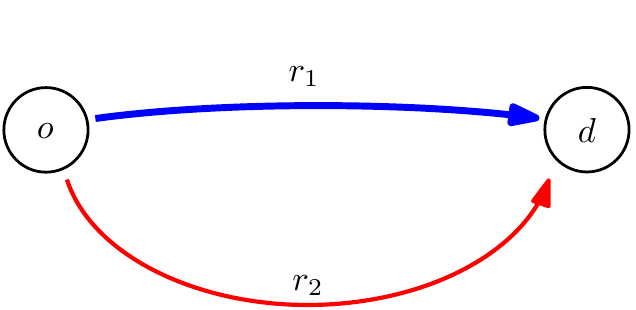}\label{fig:network_cartoon}}
\hspace{1 cm}
\subfloat[Route cost functions]{
\includegraphics[width=0.4\linewidth]{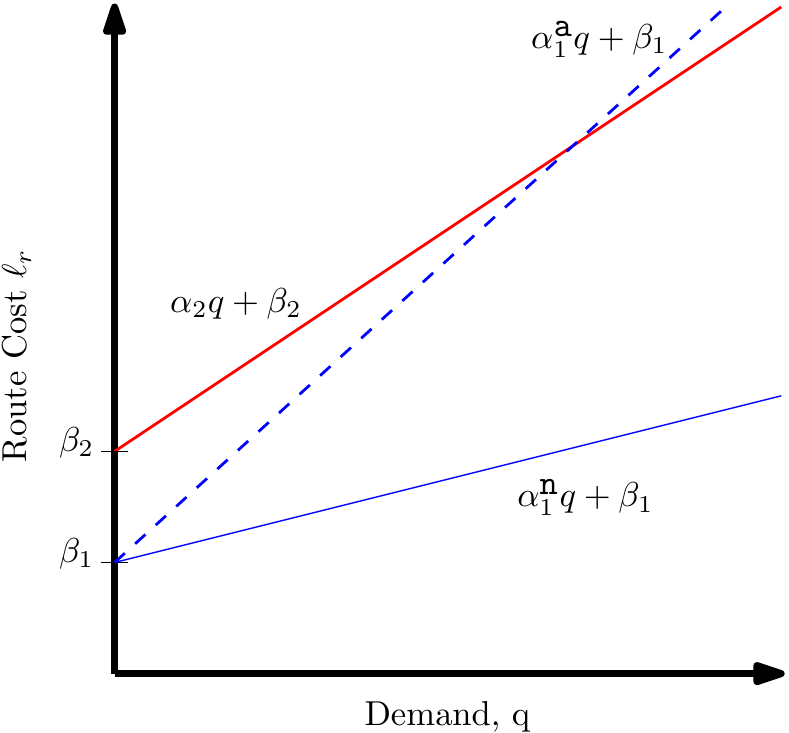}\label{fig:latency_funcs}}
\caption{Model environment.}
\label{fig:environment}
\end{figure}

% subsection environment (end)
\subsection{Effect of Heterogenenous Information} % (fold)
\label{sub:traffic_information_services}
Existing models of route choices in transportation networks approach incidents as purely stochastic events (\cite{Emmerink1998}). Thanks to traffic information services (TIS) such as GPS-enabled devices,  smartphone applications, and in-vehicle navigation systems, commuters are increasingly aware of when and where incidents happen. This  enables the commuters to \emph{strategically} choose their routes in response to current traffic conditions in the network.

A classical paradigm of modeling the effect of incident information is by symmetrically improving the information available across all players (\cite{de2012risk}). However, in reality, commuters use different TIS with different accuracies, or they may choose not to use TIS at all. Additionally, commuters may have different beliefs regarding other players' knowledge. This leads to a heterogeneous information structure, and it is therefore unrealistic to ignore the information heterogeneity. 
 
%Although all TIS providers might want to supply the true state of nature, these entities face limitations in terms of the availability and quality of data needed to estimate the network state with a high level of accuracy. In addition, commuters may choose one service over another due to brand loyalty, cost, or other factors, thus resulting in heterogenous levels of access or utilization.

Motivated by the need to account for heterogeneous traffic information, we approach modeling incident prone networks game theoretically. In our model, each player is subscribed exclusively to an information service $\service \in \serviceset \equiv \{\typeInf,\typeUninf\}$, where $\typeInf,\typeUninf$ denote ``High accuracy'' and ``Low accuracy'' TIS, respectively. 
%For simplicity, we assume that only the first route has a positive probability of incident, and all other routes are always functioning normally. Thus, the only states that can occur are $\stateinc_1$ and $\statenorm$. For notational ease, we drop the subscript of $\stateinc_1$ and simply refer to it as state $\stateinc$.
%
We model the accuracy of the information service $\service \in \serviceset$ by a scalar parameter $\accuracy^\service \in [0.5,1]$, which is the \emph{likelihood} that the TIS $\service$ will report the correct state of nature to any player subscribed to it. To account for a higher accuracy of TIS $\typeInf$ than TIS $\typeUninf$, we let: 
\begin{align}
0.5\leq\accuracy^\typeUninf<\accuracy^\typeInf\leq1. \label{eq:accuracy}
\end{align}
We will refer to players subscribed to information services $\service = \typeInf, \typeUninf $  as ``population $\typeInf$" and "population $\typeUninf$'', respectively. Each player belongs to one of these populations only. 

We let $\fracInf$ and $\fracUninf$ denote respective fractions of players in populations $\typeInf$ and $\typeUninf$, where $\fracUninf := 1-\fracInf$ since each player belongs to one of these populations only. Each player's TIS is assigned with Bernoulli probability $\fracInf$; i.e. $\forall \playerindex \in \playerset, \pr(\service^\playerindex = \typeInf) = \fracInf, \pr(\service^\playerindex = \typeUninf) = 1-\fracInf$. Note that this assignment is independent of the distribution of network state given in \eqref{eq:state_prob_dist}. Also note that for large $\numplayers$, $\fracInf$ can be viewed as the fraction of players subscribed to TIS $\typeInf$.
%\begin{align}
%\forall \playerindex \in \playerset, \quad\pr(\service^\playerindex = \typeInf) = \fracInf, \quad \pr(\service^\playerindex = \typeUninf) = 1-\fracInf. \label{eq:tech_prob_dist}
%\end{align}

%We refer to players subscribed to information service $\service = \typeInf$ as being players of ``population $\typeInf$,'' and likewise, we refer to those subscribed to the service $\service = \typeUninf$ as ``population $\typeUninf$.'' We denote the fraction of players in population $\typeInf$ as $\fracInf$, and the fraction of players in population $\typeUninf$ as $\fracUninf := 1-\fracInf$, and assign one of the two TIS to each player according to a Bernoulli process with parameter $\fracInf$; i.e.:

We model the signal reported by each service as an outcome from a random variable $\obsrv^\service$ defined over the set $\{\statenorm,\stateinc\}$. The likelihood of a service reporting the true state of nature is given by:
\begin{align}
\forall \service \in \{\typeInf,\typeUninf\}, \quad \begin{cases}
\pr(\obsrv^\service = \stateinc | \staterv = \stateinc) = \accuracy^\service, 
%&\pr(\obsrv^\service = \statenorm | \staterv = \stateinc) = 1-\accuracy^\service
\\
%\pr(\obsrv^\service = \statenorm | \staterv = \statenorm) = \accuracy^\service, 
%&
\pr(\obsrv^\service = \stateinc | \staterv = \statenorm) = 1 - \accuracy^\service.
\end{cases}\label{eq:likelihoods}
\end{align}
That is, if the true state of network is $\staterv = \stateinc$, it will report $\obs^\service=\stateinc$ with probability $\accuracy^\service$ and report $\obs^\service=\statenorm$ with complementary probability. Likewise, if the true state of network is $\statenorm$, the information service will report the signal $\obs^\service=\statenorm$ with probability $\accuracy^\service$, and report the signal $\obs^\service=\stateinc$ with complementary probability $1-\accuracy^\service$. Note that here we have assumed that that the accuracy of reporting the correct state is the same regardless of the true state of network, i.e. the accuracy of each TIS is not affected by the state of network.

With the likelihoods \eqref{eq:likelihoods} and the prior distribution on $\staterv$ in \eqref{eq:state_prob_dist}, each player calculates her posterior belief on the true state $\staterv$ given the received signal $\obsrv^\service$ from her TIS using Bayes' rule (see Sec.~\ref{sec:beliefs} for a more precise description on posterior beliefs):
\begin{align*}
\forall \playerindex \in \playerset, \service^\playerindex \in \{\typeInf,\typeUninf\}, \quad\quad  \prob{\staterv|\obsrv^{\service^\playerindex}} = \frac{\prob{\staterv}\prob{\obsrv^{\service^\playerindex}|\staterv}}{\prob{\obsrv^{\service^\playerindex}}}. 
%\label{eq:TIS_assignment}
\end{align*}

% Note that a perfectly informative service ($\accuracy^\service = 1$) will always report the true state of nature. Conversely, a completely uninformative service ($\accuracy^\service = 0.5$) will be equally likely to report either state, regardless of the true state of nature. 

We are now ready to introduce the notion of ``type,'' which captures all the private information available to each player; see \cite{harsanyi1967games}. In our modeling environment, the signal sent to each player constitutes the private information of the game, and thus defines the type for the player. We define type space for each respective TIS as follows: 
\begin{align}
\playertypeset_\typeInf := \{\typeInf\statenorm,\typeInf\stateinc\}, \quad
\playertypeset_\typeUninf := \{\typeUninf\statenorm,\typeUninf\stateinc\}. \label{eq:typeset}
\end{align}
That is, for $\playerindex \in \playerset$, if player $\playerindex$'s information service $\service^\playerindex = \typeInf$, then her type will be in the set $\playertypeset_\typeInf$; conversely, if $\service^\playerindex = \typeUninf$, then her type will be in the set $\playertypeset_\typeUninf$. If a player is assigned a type $\typeInf\statenorm$, we mean that the player is subscribed to the TIS $\service=\typeInf$ with accuracy parameter~$\accuracy^\typeInf$, and has received a signal $\obs^\service=\statenorm$ from it. Similarly, a player with type $\typeUninf\stateinc$ indicates that the TIS~$\typeUninf$ (with accuracy parameter $\accuracy^\typeUninf$) has provided a signal $\obs^\service=\stateinc$ to the player, and so on. Thus, the \emph{player type encapsulates the information service that the player is subscribed to, the corresponding accuracy parameter, and the signal she receives from the service}.

% subsection information_technologies (end)
%!TEX root = main.tex
\section{Bayesian Routing Game} % (fold)
\label{sec:bayesian_congestion_game}
We now present the Bayesian congestion game with uncertain state and heterogeneous information access about the state. We introduce two formulations of this model: (i) a game $\genGame_f$ with a large but finite set of atomic players, where all players have identical travel time preferences, and each player is subscribed to a TIS of high ($\typeInf$) or low ($\typeUninf$) accuracy; (ii) a game $\genGame_p$ with two populations of non-atomic players also with identical travel time preferences, but one population has access to a more accurate TIS than the other. In both formulations, players have private information about the state of the network, and make route choices to minimize their expected individual cost (travel time) from origin $o$ to destination $d$. We largely follow the notational conventions of Bayesian games (\cite{fudenberg1991game}).  

\subsection{Finite Atomic Players.} % (fold)
\label{ssub:finite_atomic_players}
We assume that the network state and route cost parameters are subject to Assumption~$(A\ref{ass:route_structure})$. Formally, the Bayesian congestion game of $\numplayers$ players is defined as follows:
\begin{align}
\genGame_f := \left(\playerset,\actionset,\stateset,\playertypeset,\costVar,\commprior \right),
\end{align}
where:
\begin{itemize}
  \item[-] $\playerset$ is the set of $\numplayers$ players (with generic member $\playerindex$)
  \item[-] $\actionset$ = $(\actionset^\playerindex)_{\playerindex\in\playerset}$ is the set of action profiles, where $\actionset^\playerindex \equiv \routeset$
  \item[-] $\stateset = \{\statenorm,\stateinc\}$ is the set of game states (with generic member $\state$)
  \item[-] $\playertypeset = (\playertypeset^\playerindex)_{\playerindex\in\playerset}$, where $\playertypeset^\playerindex$ is the type space of player~$\playerindex$, with $\playertypeset^\playerindex = \playertypeset_\typeInf$ if $\playerindex$ is subscribed to TIS $\typeInf$, and $\playertypeset^\playerindex = \playertypeset_\typeUninf$ otherwise. ($\playertypeset_\typeInf$ and $\playertypeset_\typeUninf$ are defined in~\eqref{eq:typeset}.)
  \item[-] $\costVar = (\latencysymb^\staterv_\routeindex)_{\routeindex\in\routeset}$ is the set of cost functions, where cost of each route is given in~\eqref{eq:state_latency_func}. The cost to an individual player is equal to the travel time of the route that she chooses.
  % \begin{align*}
  % \forall \state \in \stateset:
  % \costVar^\playerindex(\state,\actionindex^{\playerindex},\actionindex^{-\playerindex}) = \latencysymb_{\actionindex^\playerindex = \routeindex}^{\state}(\totLoad_\routeindex(\actionindex^{\playerindex},\actionindex^{-\playerindex}))
  % \end{align*}
  \item[-] $\commprior\in\Delta (\stateset \times \playertypeset)$ is a common prior which is a joint probability distribution $\commprior(\state,\playertype^\playerindex,\playertype^{-\playerindex})$ over the state of the network and player types. 
%  \item[-] $\belief = \left(\belief^\playerindex\right)_{\playerindex\in\playerset}$ is the set of beliefs $\belief^\playerindex(\cdot|\playertype^\playerindex)\in \Delta (\stateset \times \playertypeset^{-\playerindex})$ for each player $\playerindex$ over the state of the network, and other players' types, conditioned on player $\playerindex$'s type.
\end{itemize}

The common knowledge includes: the total demand $\totaldemand$; the parameter $\fracInf$ governing the distribution of players subscribed to each TIS; the set of routes $\routeset$ and the corresponding route cost parameters; the type space for each TIS, $\playertypeset_\typeInf, \playertypeset_\typeUninf$; and the common prior distribution~$\commprior$, which includes the probability of incident $\pInc$. Importantly, the TIS's accuracy parameters $\accuracy^\typeInf$ and $\accuracy^\typeUninf$ may or may not be common knowledge. We further explain the common prior specifications and the belief structure in Sec.~\ref{sec:beliefs}. 

%As we will see in~Sec.~\ref{sec:beliefs}, in our modeling environment, the common prior~$\commprior$ can be specified in two ways. The first specification includes the probability of an incident on the network $\pInc$, and the players' TIS conditional likelihoods. 
%Equivalently, in this specification,  The second specification includes the probability of an incident~$\pInc$ and the marginal type distributions. 

%; see Sec.~\ref{sec:beliefs} for more details. 
% Based on the common knowledge, we can specify a common prior distribution $\commprior_f$ as follows: 
% \begin{align}
% \commprior_f(\staterv,\playertype) &= \pr(\staterv)\prod_{\playerindex = 1}^\numplayers \pr(\playertype^\playerindex). \label{eq:comm_prior_form}
% \end{align}

\begin{figure}[htb]
\centering \setlength{\unitlength}{1cm} % zoom 
\begin{picture}(10,2) % second number is how tight together the lines are   \thicklines

        \put(0,1){\line(1,0){10}}

        %Date t
        %\put(1,1.5){\makebox(0,0){$a_{-}$}}
        %\%put(1,0.9){\line(0,1){0.2}}

      % \put(1,0.9){\line(0,1){0.2}}
       %\put(1,0.5){\makebox(0,0){$a_{-}$}}
        %\put(1,0){\makebox(0,0){policy set}}

        %Division 1 investment
       % \put(1,1.6){\makebox(0,0){\footnotesize{Period $1$} }}
   \put(0,1.3){\makebox(0,0){\textit{ex ante} }}

        % \put(0,1){\line(0,1){0.2}} %vertical line
            \put(0,1){\makebox(0,0){$\bullet$}}
        \put(0,0.8){\makebox(0,0){\footnotesize{Players are assigned TIS $\typeInf$ or $\typeUninf$}}}   
        \put(0,0.5){\makebox(0,0){\footnotesize{Nature draws $\staterv$}}}
        \put(0,0.15){\makebox(0,0){\footnotesize{TIS $\service$ reports $\obs^\service$ to its subscribers}}}
       % \put(2.4,0){\makebox(0,0){chosen}} 0.7 is higher than 0

        %Renegotiation
       % \put(3.8,1.5){\makebox(0,0){2}}

        %\put(3.8,0.9){\line(0,1){0.2}}
        %\put(3.8,0.5){\makebox(0,0){$a_{+}$}}
        %\put(5.8,0){\makebox(0,0){(re)negotiation}}

    \put(5.5,1){\makebox(0,0){$\bullet$}}
     \put(5.5,1.3){\makebox(0,0){{\textit{interim}}}}
 \put(5.5,0.8){\makebox(0,0){\footnotesize{Players: -know their type}}}
 \put(6.43,0.48){\makebox(0,0){\footnotesize{-obtain beliefs $\belief^\service(\cdot|\playertype^\service)$}}}
 \put(6,0.1){\makebox(0,0){\footnotesize{-play strategies}}}

        %Transfer Decision
        %\put(6,0.9){\line(0,1){0.2}}
        %\put(6,0.5){\makebox(0,0){$b_{-}$}}
        %\put(8.2,0){\makebox(0,0){decision}}

        %Division 2 invests
      \put(10,0.9){\line(0,1){0.2}}%vertical line
         \put(10,1){\makebox(0,0){$\bullet$}} % bullet on the timeline
       % \put(10,1.6){\makebox(0,0){\footnotesize{Period $2$} }}
        \put(10,1.3){\makebox(0,0){\textit{ex post}}}
        
        \put(10.0,0.8){\makebox(0,0){\footnotesize{Players realize costs}}}

     % \put(10,0.9){\line(0,1){0.2}}
        %\put(10,0.5){\makebox(0,0){$b_{+}$}}
        %\put(10.6,0){\makebox(0,0){chosen}}
\end{picture} 
\caption{Timing of the game.}
\label{fig:game_timing}

\end{figure}
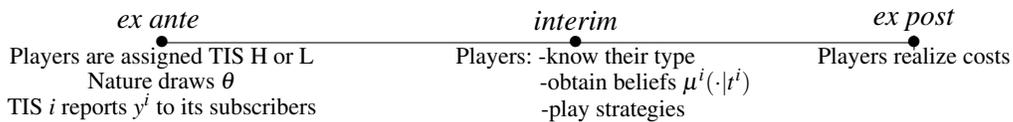

The game is played as follows, see Fig~\ref{fig:game_timing}: first, each player subscribes to one of the two TIS: a high accuracy service ($\typeInf$), or a low accuracy one ($\typeUninf$), according to Bernoulli probability $\fracInf$; this determines each players' TIS subscription $\service$. Nature then draws a realization of the network state $\staterv$ from the distribution \eqref{eq:state_prob_dist}. For a given realization of the state of the network ($\stateinc$ or $\statenorm)$, each TIS broadcasts a signal $\obs^\service$ to all its subscribers. For each player $\playerindex$, her TIS and signal recieved ($\obs^\service$)  determine the player's ``type'' $\playertype^\playerindex \in \playertypeset_\typeInf$ or $\playertypeset_\typeUninf$. The players then simultanously choose their route based on type-dependent (mixed) strategies, $\splitfractionVar^\playerindex:\playertypeset^\playerindex\rightarrow\Delta(\actionset^\playerindex)$, and realize their individual costs.

Following \cite{fudenberg1991game}, once each player's private information is realized, i.e., the players learn their types based on the signal received from their respective TIS, the game enters in an ``interim'' stage. Each player $\playerindex$ knows her own type $\playertype^\playerindex$ but does not know the network state $ \state$ or the other players' types $ \playertype^{-\playerindex}$. This incomplete information is represented by the ``interim'' belief $\belief^\playerindex(\cdot|\playertype^\playerindex)$ of each player $\playerindex$, which she obtains after observing her type $\playertype^\playerindex$, but before choosing her route (see Sec.~\ref{sec:beliefs} for the interim belief structures assumed in this article). We will use the notation $\belief = \left(\belief^\playerindex\right)_{\playerindex\in\playerset}$ to denote the set of beliefs $\belief^\playerindex(\cdot|\playertype^\playerindex)\in \Delta (\stateset \times \playertypeset^{-\playerindex})$ for each player $\playerindex$ over the state of the network and other players' types, conditioned on player $\playerindex$'s type.

The interim stage allows for an equivalent complete information formulation of the Bayesian game, where the game is played between the player types, i.e. the ``players'' of the interim game are the player types, and individual player costs can be calculated conditioned on the player types (\cite{fudenberg1991game}). This allows us to express the interim expected cost function for each player for any mixed strategy profile $\splitfractionVar(\playertype^\playerindex , \playertype^{-\playerindex} ) = (\splitfractionVar^\playerindex(\playertype^\playerindex),\splitfractionVar^{-\playerindex}(\playertype^{-\playerindex}))$ as follows:
\begin{align*}\forall \playerindex \in \playerset, \quad
\expect{\latencysymb_\routeindex^\staterv(\totLoad_\routeindex(\splitfractionVar^\playerindex,\splitfractionVar^{-\playerindex} )| \playertype^\playerindex} 
= \sum_{(\staterv,\playertype^{-\playerindex})}  \belief^{\playerindex}({\staterv,\playertype^{-\playerindex}|\playertype^\playerindex})\sum_{\actionindex \in \actionset}\left(\prod_{\otherplayerindex\in\playerset} \splitfractionVar^\otherplayerindex(\actionindex^\otherplayerindex|\playertype^\otherplayerindex)\right) \latencysymb^\staterv_\routeindex(\totLoad_\routeindex(\actionindex^\playerindex,\actionindex^{-\playerindex})),
\end{align*}
where $\belief^{\playerindex}({\staterv,\playertype^{-\playerindex}|\playertype^\playerindex})$ is player $\playerindex$'s belief on the state and other players' types given its own type.

We are now ready to define the Bayesian Wardrop Equilibrium for the game $\genGame_f$.

% Any Nash equilibrium of the interim game is also a Bayesian Nash equilibrium of $\genGame_f$. Since the interim game is finite, a mixed Nash equilibrium of the interim game always exists. It thus follows that a mixed Bayesian Nash equilibrium of $\genGame_f$ exists. \texttt{[Need citations]}

\begin{definition}{Bayesian Wardrop Equilibrium (BWE) for $\genGame_f$}
\label{def:BNE}

The mixed strategy profile $\splitfractionVar^\bneIndicator = (\splitfractionVar^{1\bneIndicator}(\playertype^1),\dots,\splitfractionVar^{\numplayers\bneIndicator}(\playertype^\numplayers)) \in \prod_{\actionset}\Delta(\actionset^\playerindex)$ is a BWE of the game $\genGame_f$ if for each player $\playerindex \in \playerset$, all routes that is played by player $\playerindex$ with positive probability have equal expected individual cost, which is less than the expected cost for any other route, where all expectations are taken over player $\playerindex$'s belief, $\belief^\playerindex$, i.e.:
\begin{align*}
\forall \playerindex \in \playerset: \splitfractionVar^{\playerindex\bneIndicator}_{\routeindex}(\actionindex^\playerindex=\routeindex|\playertype^\playerindex) & > 0 
\implies \\&\expect{\latency{\actionindex^\playerindex=\routeindex}{\state}{\totLoad_\routeindex(\splitfractionVar^{\playerindex\bneIndicator}(\playertype^\playerindex), \splitfractionVar^{-\playerindex\bneIndicator}(\playertype^{-\playerindex}))}|\playertype^\playerindex} \leq \expect{\latency{\actionindex^\playerindex=\routeindex'}{\state}{\totLoad_{\routeindex'}(\splitfractionVar^{\playerindex\bneIndicator}(\playertype^\playerindex),\splitfractionVar^{-\playerindex\bneIndicator}(\playertype^{-\playerindex}))}|\playertype^\playerindex}, \quad\forall \routeindex' \in \routeset
\end{align*}
\end{definition}

In the limit of large number of players, the contribution of a single player on the load, and thus to the travel time, is negligible. To succinctly model such interactions between large numbers of players, we now introduce in the following section a population game analog of $\genGame_f$, where each population is comprised of nonatomic players. The rest of the article will primarily consider the population game formulation. We refer the reader to \cite{sandholm2010population} for an excellent treatment of population games and the study of their equilibria in a broad range of environments.

%Thus, for large $\numplayers$, instead of undertaking the exhaustive task of computing equilibrium strategies of each individual player, it is more convenient to consider populations determined by TIS ownership.

\subsection{Populations of Nonatomic Players.} % (fold)
\label{ssub:populations_of_nonatomic_players}
Now consider the game played between two populations of nonatomic players: population $\typeInf$ and population $\typeUninf$. We index each population by their respective information service $\service$, i.e. the players in population $\typeInf$ are subscribed to a high accuracy service and those in population $\typeUninf$ are subscribed to a low accuracy one. 
With a slight abuse of notation, each population has a corresponding ``splittable'' demand governed by the parameter~$\fracInf$: 
\begin{align}
\totaldemand^\typeInf := \fracInf\totaldemand,\quad \totaldemand^\typeUninf := \fracUninf\totaldemand = (1-\fracInf)\totaldemand.
\end{align}

We assume that all players belonging to population~$\service$ receive the same realized signal $\obs^\service$, and the likelihood TIS~$\service$ reporting the true state is given by~\eqref{eq:likelihoods}. The type space for each population is defined as in \eqref{eq:typeset}, and all players in population $\service$ learn their type when they receive the signal $\obs^\service$ from their respective information service. 

Each member of a population routes her demand through the network in a way that minimizes her own expected travel cost. The resulting aggregate assignment of demands is referred to as the \emph{strategy distribution} of the population. With an abuse of notation, let $\splitfractionVar^\service: \playertypeset_\service \rightarrow \Delta(\actionset^\service)$ denote an admissible strategy distribution for population $\service$, where $\splitfractionVar^\service(\playertype^\service) = (\splitfractionVar^\service_1(\playertype^\service),\ldots,\splitfractionVar^\service_\numroutes(\playertype^\service))$ can be viewed as a vector of split fractions, with $\splitfractionVar^\service_\routeindex$ being the fraction of population $\service$'s demand that takes route $\routeindex$ when its type is~$\playertype^\service$. 

In the finite-player game, we used the notation $\splitfractionVar$ to denote a player's mixed strategy; in this game we use it to denote the split fraction for a population. We chose this abuse of notation because in the limit of large number of players, the expected fraction of a population taking a given route approaches the probability that a single representative member takes that route.

It is often convenient to consider population $\service$'s strategy distribution in terms of the vector of loads that it assigns to each route. For a strategy distribution $\splitfractionVar^\service(\playertype^\service)$ of population $\service$, let $\load^{\service}(\playertype^\service) = (\load_1^\service(\playertype^\service),\dots,\load_\numroutes^\service(\playertype^\service))$ denote the corresponding load vector, where 
\begin{align}
\forall \service \in \serviceset, \forall \routeindex \in \routeset,\quad\load_{\routeindex}^{\service}(\playertype^\service) := \splitfraction{\service}{\routeindex}{}(\playertype^\service)\totaldemand^\service. \label{eq:splitfraction_load_relation}
\end{align}
We use $(\load^\typeInf(\playertype^\typeInf),\load^{\typeUninf}(\playertype^\typeUninf))$ to denote a generic profile of loads and $(\splitfractionVar^\typeInf(\playertype^\typeInf),\splitfractionVar^{\typeUninf}(\playertype^\typeUninf))$ to denote a profile of strategy distributions (i.e. a profile of load split fractions). Let $\loadset^\service$ denote the set of all admissible load vectors of population $\service$, i.e, for any $\load^\service(\playertype^\service) \in \loadset^\service$ we have $\sum_\routeindex \load_\routeindex^\service(\playertype^\service) = \playerFrac{\service} \totaldemand$. 

%For any admissible ($\load^\typeInf,\load^\typeUninf$), the total load on a route $\routeindex$ can be written as $\totLoad_\routeindex = \load_\routeindex^\typeInf + \load_\routeindex^\typeUninf$. 

Formally, the Bayesian congestion game of populations of nonatomic players can be defined as:
\begin{align}
{\genGame_p} = \left(\serviceset,{\loadset},{\stateset},{\playertypeset},{\costVar},{\commprior} \right),
\end{align}
where:
\begin{itemize}
  \item[-] $\serviceset = \{\typeInf,\typeUninf\}$ is the set of player populations with generic population $\service$
  \item[-] $\loadset$ = $(\loadset^\service)_{\service\in\serviceset}$ is the set of load vectors for each population, i.e. ($\load_1^\service(\playertype^\service),\dots\load_\numroutes^\service(\playertype^\service)$) $\in \loadset^\service$
  \item[-] $\stateset = \{\statenorm,\stateinc\}$ is the set of game states with generic element $\staterv$
  \item[-] $\playertypeset = (\playertypeset_\service)_{\service\in\serviceset}$ where $\playertypeset_\typeInf$ and $\playertypeset_\typeUninf$ denote the type space for population $\typeInf$ and $\typeUninf$, respectively
  \item[-] $\costVar = (\latencysymb^\staterv_\routeindex)_{\routeindex\in\routeset}$ is the set of cost functions for each route governed by \eqref{eq:state_latency_func}.
  % \begin{align*}
  % \forall \state \in \stateset:
  % \costVar^\service(\state,\load^{\service},\load^{-\service}) = \latencysymb_{\load^\service = \routeindex}^{\state}(\totLoad_\routeindex(\load^{\service},\load^{-\service}))
  % \end{align*}

  % The utility function for player $\service$ takes the form given below:
  % \begin{align}
  % \utilfn^\service(\staterv,\splitfractionVar^{\service},\splitfractionVar^{-\service}) = \bonus-\sum_{\routeindex\in\routeset} \latency{\routeindex}{\staterv}{\splitfractionVar_\routeindex^\service\totaldemand^\service + \splitfractionVar_\routeindex^{-\service}\totaldemand^{-\service}} \label{eq:util_form_x}
% \end{align}
  \item[-] $\commprior\in\Delta (\stateset \times \playertypeset)$ is the common prior which is a joint probability distribution $\commprior(\state,\playertype^\typeInf,\playertype^\typeUninf)$ over the state of the network and player types. 
  % \item[-] $\belief = \left(\belief^\service\right)_{\service\in\serviceset}$ is the set of beliefs $\belief^\service(\state,\playertype^{-\service}|\playertype^\service)\in \Delta (\stateset \times \playertypeset_{-\service})$ for each population $\service$ over the state of nature $\staterv,$ and other populations' types $\playertype^{-\service}$, conditioned on the population $\service$'s type $\playertype^\service\in\playertypeset_\service$.
\end{itemize}

The game $\genGame_p$ differs from the interim formulation of the game $\genGame_f$ only by the assumption about player effects on the routes' loads. That is, each individual member of a population in $\genGame_p$ has negligible effect on the load. The common knowledge in $\genGame_p$ is identical to that of $\genGame_f$. Thus, the timing of the game $\genGame_p$ is also captured by Fig.~\ref{fig:game_timing}. 

%The parameters of common knowledge are identical in both games. That is, each population has an identical prior on the state of the network $\staterv$ (before receiving signal $\obsrv^\service$):
%\begin{align}
%\forall \service \in \serviceset,\quad \pr^\service(\state) \equiv \pr(\state)
%\end{align}
%where $\pr(\state)$ governs the distribution of the network state; see \eqref{eq:state_prob_dist}. In addition, the demand of each population $\totaldemand^\service$, is also common knowledge.

% It is convenient for us to reformulate the utility function in terms of load, $\load$, rather than split fraction, $\splitfractionVar$. Hence we combine \eqref{eq:load_def} and \eqref{eq:util_form_x} to get the following:
% \begin{align}
%   \utilfn^\service(\state,\load^{\service},\load^{-\service}) = \bonus-\sum_{\routeindex\in\routeset} \latency{\routeindex}{\state}{\load_\routeindex^\service + \load_\routeindex^{-\service}}\frac{\load_\routeindex^\service}{\totaldemand^\service} \label{eq:util_form_q}
% \end{align}

For a load profile $(\load^\typeInf(\playertype^\typeInf),\load^{\typeUninf}(\playertype^\typeUninf))\in\loadset^\typeInf \times \loadset^{\typeUninf}$, the expected cost of population $\service$ on a given route is:
\begin{align}
\forall \service \in \serviceset, \quad \expect{\latency{\routeindex}{\state}{\load^{\service}_\routeindex(\playertype^\service)+\load^{-\service}_\routeindex(\playertype^{-\service})}|{\playertype^\service}} = \sum_{(\state,\playertype^{-\service})} \latency{\routeindex}{\state}{\load^\service_\routeindex(\playertype^\service) + \load^{-\service}_\routeindex(\playertype^{-\service})} \belief^\service (\state,\playertype^{-\service}|\playertype^\service).\label{eq:expected_cost_population}
\end{align}
We are now ready to state the Bayesian Wardrop Equilibrium for the game $\genGame_p$.
\begin{definition}{Bayesian Wardrop Equilibrium (BWE) for $\genGame_p$}
\label{def:bayesian_wardrop}

A profile of load vectors $(\load^{\service\bneIndicator}(\playertype^\service),\load^{-\service\bneIndicator}(\playertype^{-\service})) \in \loadset^\service \times \loadset^{-\service}$, or the corresponding profile of strategy distributions $(\splitfractionVar^{\service\bneIndicator}(\playertype^\service),\splitfractionVar^{-\service\bneIndicator}(\playertype^{-\service}))$, is an equilibrium if, for each population $\service$ in $\serviceset$, all routes that are utilized by population $\service$ players have equal expected cost, which is less than the expected cost for any route not utilized by players of population $\service$, where all expectations are taken over population $\service$'s belief $\belief^\service$. That is:
\begin{align}
\forall \service \in \serviceset: \load^{\service\bneIndicator}_\routeindex > 0 \implies \expect{\latency{\routeindex}{\state}{\load^{\service\bneIndicator}_\routeindex(\playertype^\service) + \load^{-\service\bneIndicator}_\routeindex(\playertype^{-\service})}|\playertype^{\service}} \leq \expect{\latency{\routeindex'}{\state}{\load^{\service\bneIndicator}_{\routeindex'}(\playertype^{\service})+\load^{-\service\bneIndicator}_{\routeindex'}(\playertype^{-\service})}|\playertype^\service},\quad \forall \routeindex' \in \routeset. \label{eq:equilib_condition_popgame}
\end{align}
\end{definition}

We observe from the definition of the game $\genGame_p$ that the parameters $\pInc,\fracInf,\accuracy^\typeInf,$ and $\accuracy^\typeUninf$ govern the extend of information heterogeneity that is captured in the belief $\belief$. We will henceforth use the tuple $(\pInc,\fracInf,\accuracy^\typeInf,\accuracy^\typeUninf)$ to represent the \emph{information environment} of the game $\genGame_p$.
% subsubsection populations_of_nonatomic_players (end)

\subsection{Equilibrium costs} % (fold)
\label{sub:equilibrium_costs}
In any equilibrium profile $(\load^{\service\bneIndicator}(\playertype^\service),\load^{-\service\bneIndicator}(\playertype^{-\service})) \in \loadset^\service \times \loadset^{-\service}$ of game $\genGame_p$, each population will have an associated \emph{realized cost} of traveling through the network whose actual state was unknown at the time of making route choice decisions. The players realize their individual outcome after having played according to their equilibrium strategy. 

We define the \emph{equilibrium cost for a player of population $\service$ in state $\staterv$} as follows:
\begin{align}
 \costVar_{\staterv}^{\service\bneIndicator} &:= \sum_{\routeindex} \sum_{(\playertype^\service,\playertype^{-\service})}\splitfractionVar^{\service\bneIndicator}_{\routeindex}(\playertype^\service)\latency{\routeindex}{\staterv}{\load_\routeindex^{\service\bneIndicator}(\playertype^\service)+\load_\routeindex^{-\service\bneIndicator}(\playertype^{-\service}) } \pr(\playertype^\service|\staterv) \pr(\playertype^{-\service}|\staterv), \quad \service \in \serviceset, \staterv\in\stateset. \label{eq:state_type_cost_form}
\end{align}

One can average over populations and/or states to get various composite costs. We refer to the costs averaged over states as expected population-dependent costs; the costs averaged over populations as the state-dependent social cost; and the cost averaged over both populations and states as the expected social cost. More precisely, these costs are defined as follows: 

\begin{itemize}
\item[(i)] The \emph{expected population-dependent cost} in equilibrium is the average cost incurred by a player of a given population across all network states: 
\begin{align}
\costVar^{\service\bneIndicator} := \sum_{\staterv\in\stateset} \prob{\staterv} \costVar^{\service\bneIndicator}_{\staterv}, \quad \service \in \serviceset.\label{eq:exp_pop_cost}
\end{align}
\item[(ii)] The \emph{state-dependent social cost} in equilibrium is the average cost incurred by a player of any population for a given network state:
\begin{align}
\costVar_{\staterv}^{\bneIndicator} := \sum_{\service\in \serviceset} \playerFrac{\service} \costVar^{\service\bneIndicator}_{\staterv}, \quad \staterv\in\stateset.\label{eq:state_soc_cost}
\end{align}
\item[(iii)] The \emph{expected social cost} in equilibrium is the average cost incurred by a player of any population across all network states:
\begin{align}
\bar{\costVar}^{\bneIndicator} = \sum_{\service\in \serviceset}\playerFrac{\service} \sum_{\staterv\in\stateset} \prob{\staterv} \costVar^{\service\bneIndicator}_{\staterv}. \label{eq:exp_soc_cost}
\end{align}
\end{itemize}

For the sake of comparison, we define a \emph{baseline equilibrium cost} in a given state $\costVar_\staterv^{0\bneIndicator}$ as the equilibrium cost of a player of population $\typeUninf$ under the information environment $(\pInc,\fracInf=0,\accuracy^\typeInf,\accuracy^\typeUninf=0.5$). This baseline corresponds to the cost in the case where all players belong to population $\typeUninf,$ and TIS $\typeUninf$ is uninformative. Equivalently, it is the same as the realized cost for the corresponding subgame in the classical imperfect information game, where each network state corresponds to a subgame. Similarly, we define the \emph{baseline expected equilibrium cost} as $\costVar^{0\bneIndicator} = \sum_{\staterv\in\stateset} \prob{\staterv} \costVar_\staterv^{0\bneIndicator}$, which is also equivalent to the expected cost for the imperfect information game.

We now define the socially optimal cost. The socially optimal play is that which minimizes the social cost (\cite{Koutsoupias}). We distinguish the quantities associated with socially optimal play using the superscript dagger~$^\socOpt$. For a parallel-route network with linear route cost functions, the socially optimal loads $\load^{\staterv\socOpt}$ for each network state $\staterv\in\stateset$ are given by the solution to the quadratic program:
\begin{align}
\begin{split}
\min \quad&\frac{1}{2} (\load^{\staterv\socOpt})^\intercal S^{\staterv} \load^{\staterv\socOpt} + \intercept{}^\intercal \load^{\staterv\socOpt}\\
\text{subject to}\quad & \sum_\routeindex \load^{\staterv\socOpt}_\routeindex  = \totaldemand, \quad \load^{\staterv\socOpt}_\routeindex \geq 0,
\end{split}\label{eq:quad_prog_so}
\end{align}
where $S^{\staterv}$ is a diagonal matrix with elements $(2\slope{1}{\staterv},\dots,2\slope{\numroutes}{\staterv})$, and $\intercept{}$ is a column vector with elements $(\intercept{1},\dots,\intercept{\numroutes})$. The corresponding socially optimal split fraction is denoted as $\splitfractionVar^{\staterv\socOpt}$.

The \emph{state-dependent socially optimal cost} is defined as: 
\begin{align}
\costVar^\socOpt_\staterv = \sum_\routeindex \splitfractionVar^{\staterv\socOpt}_\routeindex \latency{\routeindex}{\staterv}{\load^{\staterv\socOpt}_\routeindex},\quad \staterv\in\stateset,\label{eq:socOpt_statecost}
\end{align}
and the \emph{socially optimal cost} for the game is defined as: 
\begin{align}
\costVar^\socOpt = \sum_\staterv \costVar^\socOpt_\staterv \pr(\staterv).\label{eq:socOpt_overallcost}
\end{align}
Note that the socially optimal loads and costs depend only on physical parameters of the game. We present the socially optimal loads for the two-route network in Fig.~\ref{fig:network_cartoon} for each state below:
\begin{align*}
\load_{1}^{\statenorm\socOpt} &= \frac{2 \slope{2}{} \totaldemand - \intercept{1}+\intercept{2}}{2 (\slope{1}{\statenorm}+\slope{2}{})}, \quad
 \load_{2}^{\statenorm\socOpt} = \totaldemand - \load_{1}^{\statenorm\socOpt}\\
\load_{1}^{\stateinc\socOpt} &= \frac{2 \slope{2}{} \totaldemand - \intercept{1}+\intercept{2}}{2 (\slope{1}{\stateinc}+\slope{2}{})}, \quad
\load_{2}^{\stateinc\socOpt} = \totaldemand - \load_{1}^{\stateinc\socOpt}.
\end{align*}
Thus, the socially optimal cost of the game is the $\pInc$-weighted average of the state-dependent socially optimal costs:
\begin{align}
\costVar^{\socOpt} &= \pInc \costVar_\stateinc^{\socOpt} + (1-\pInc) \costVar_\statenorm^{\socOpt}, \label{eq:socopt_cost}
\end{align}
where the state-dependent socially optimal costs are given by:
\begin{align*}
\costVar_\statenorm^{\socOpt} &= \splitfractionVar_{1}^{\statenorm\socOpt} \latency{1}{\statenorm}{\load_{1}^{\statenorm\socOpt} } + \splitfractionVar_{2}^{\statenorm\socOpt}\latency{2}{}{\load_{2}^{\statenorm\socOpt} } \\
\costVar_\stateinc^{\socOpt} &= \splitfractionVar_{1}^{\stateinc\socOpt} \latency{1}{\stateinc}{\load_{1}^{\stateinc\socOpt}} + \splitfractionVar_{2}^{\stateinc\socOpt} \latency{2}{}{\load_{2}^{\stateinc\socOpt} }.
\end{align*}

\subsection{Value of information} % (fold)
\label{sub:value_of_information}

% subsection value_of_information (end)
We now define the individual and social value of information for an information environment $(\pInc, \fracInf, \accuracy^\typeInf, \accuracy^\typeUninf)$. These quantities represent the change in equilibrium cost that an individual and society respectively receive from an information environment.

The \emph{individual value} of information, denoted $\valueVar_\staterv^\service(\pInc, \fracInf, \accuracy^\typeInf, \accuracy^\typeUninf)$, for population $\service$ in state $\staterv$ is the difference between the baseline equilibrium cost, $\costVar_\staterv^{0\bneIndicator}$, and the corresponding equilibrium cost in that state, $\costVar_\staterv^{\service\bneIndicator}$, given by \eqref{eq:state_type_cost_form}, under the information environment $(\pInc, \fracInf, \accuracy^\typeInf, \accuracy^\typeUninf)$:
\begin{align}
\valueVar_\staterv^\service(\pInc, \fracInf, \accuracy^\typeInf, \accuracy^\typeUninf) = \costVar_\staterv^{0\bneIndicator} - \costVar_\staterv^{\service\bneIndicator} \label{eq:ind_val_info_struct_state}
\end{align}
Similarly, the \emph{expected individual value} of information, denoted $\valueVar^\service(\pInc, \fracInf, \accuracy^\typeInf, \accuracy^\typeUninf)$, for population $\service$ is the difference between the baseline expected equilibrium cost and the expected equilibrium cost for a player in population $\service$ under the information environment $(\pInc, \fracInf, \accuracy^\typeInf, \accuracy^\typeUninf)$.
\begin{align}
\valueVar^\service(\pInc, \fracInf, \accuracy^\typeInf, \accuracy^\typeUninf) = \costVar^{0\bneIndicator} - \costVar^{\service\bneIndicator} \label{eq:ind_val_info_struct_exp}
\end{align}
The individual value of information represents the reduction in equilibrium cost for a player in each population due to the information environment when compared to a population of all uninformed players.

We also define the \emph{relative individual value} of information, denoted $\valueVar_\staterv(\pInc, \fracInf, \accuracy^\typeInf, \accuracy^\typeUninf)$, in a given state as the difference between the corresponding individual values for players in populations $\typeUninf$ and $\typeInf$ in that state. Similarly, we define the \emph{relative expected individual value}, denoted $\valueVar(\pInc, \fracInf, \accuracy^\typeInf, \accuracy^\typeUninf)$, as the difference between the expected individual values. Thus, these quantities can be written as follows:
\begin{align}
\valueVar_\staterv(\pInc, \fracInf, \accuracy^\typeInf, \accuracy^\typeUninf) := \valueVar_\staterv^\typeUninf(\pInc, \fracInf, \accuracy^\typeInf, \accuracy^\typeUninf)-\valueVar_\staterv^\typeInf(\pInc, \fracInf, \accuracy^\typeInf, \accuracy^\typeUninf), \label{eq:rel_ind_val_inf_state}\\
\valueVar(\pInc, \fracInf, \accuracy^\typeInf, \accuracy^\typeUninf) := \valueVar^\typeUninf(\pInc, \fracInf, \accuracy^\typeInf, \accuracy^\typeUninf)-\valueVar^\typeInf(\pInc, \fracInf, \accuracy^\typeInf, \accuracy^\typeUninf). \label{eq:rel_ind_val_inf}
\end{align}
The relative individual value represents the change in equilibrium cost that a player in population $\typeUninf$ would experience if she could become a member of population $\typeInf$, \emph{ceteris paribus.}

The \emph{social value} of information represents the change in social cost due to an information environment compared to the baseline equilibrium cost. Specifically, the social value of an information in a given state, denoted $\welfare_{\staterv}(\pInc, \fracInf, \accuracy^\typeInf, \accuracy^\typeUninf)$, is the difference between the baseline social cost $\costVar_\staterv^{0\bneIndicator}$ and the corresponding equilibrium social cost $\costVar_{\staterv}^\bneIndicator$ in that state:
\begin{align}
\welfare_{\staterv}(\pInc, \fracInf, \accuracy^\typeInf, \accuracy^\typeUninf) := \costVar_\staterv^{0\bneIndicator} - \costVar_{\staterv}^\bneIndicator. \label{eq:soc_val_of_info_state}
\end{align}
Similarly, we define \emph{expected social value} of information, denoted $\welfare(\pInc, \fracInf, \accuracy^\typeInf, \accuracy^\typeUninf)$, as the difference between the expected baseline equilibrium cost and the equilibrium expected social cost. Equivalently, the expected social value can be calculated as the average of the social value in each state, weighted by the probability of each state:
\begin{align}
\welfare(\pInc, \fracInf, \accuracy^\typeInf, \accuracy^\typeUninf) := \costVar^{0\bneIndicator} - \bar{\costVar}^\bneIndicator = \sum_{\staterv}\welfare_\staterv(\pInc, \fracInf, \accuracy^\typeInf, \accuracy^\typeUninf) \pr{(\staterv)}. \label{eq:exp_soc_val_of_info}
\end{align}

% subsection bayesian_congestion_game (end)
%!TEX root = main.tex

\section{Beliefs}
\label{sec:beliefs}

Recall that in Sec.~\ref{ssub:populations_of_nonatomic_players} the Bayesian congestion game $\genGame_p$ is defined terms of the interim beliefs $\belief^\typeInf\left(\state,\playertype^{\typeUninf}|\playertype^\typeInf\right)$ for population $\typeInf$ and $\belief^\typeUninf\left(\state,\playertype^{\typeInf}|\playertype^\typeUninf\right)$ for population $\typeUninf$, where the type~$\playertype^\service\in\playertypeset_\service$ encapsulates the private information of players in population~$\service$ about their TIS's accuracy and the received signal. Each player's interim belief can be viewed as the posterior distribution obtained using a Bayesian update to a common prior $\commprior\in\Delta(\stateset\times\playertypeset^\typeUninf\times \playertypeset^\typeInf)$ after receiving their type. In our presentation of interim beliefs and subsequent equilibrium characterization, we limit our attention to game $\genGame_p$ of non-atomic players; however, our set-up is extensible to the game $\genGame_f$ of finite-atomic players (Sec.~\ref{sec:bayesian_congestion_game}).

In this section, we present three different treatments (environments) for obtaining interim beliefs; each treatment supposes a different information structure: 
\begin{itemize}
	\item[(i)] In Sec.~\ref{sub:conditional_likelihoods_as_common_knowledge}, we consider the TIS likelihoods~\eqref{eq:likelihoods} as part of prior belief, i.e. both populations have a common prior belief $\tilde{\commprior}=\pr(\state)\cdot\pr(\playertype^{\typeInf}|\staterv)\cdot\pr(\playertype^{\typeUninf}|\staterv)$. This corresponds to an environment where each population knows the distribution of nature state, the likelihood distribution of its own TIS, and that of the other TIS. In this case, players from each population form their interim beliefs using the conditional distribution of the other population's type for each state of the nature.
 	\item[(ii)] In Sec.~\ref{sub:marginal_likelihoods_ck}, we suppose a more restrictive information structure in that the marginal type distributions $\pr(\playertype^\service)=\sum_{\state} \pr(\state)\pr(\playertype^\service|\state),\; \service\in\{\typeInf,\typeUninf\}$ are part of prior belief, i.e., the common prior belief is  $\hat{\commprior}=\pr(\state)\cdot\pr(\playertype^{\typeInf})\cdot\pr(\playertype^{\typeUninf})$. In this case, players do not know the TIS likelihood of those in the other population. Thus, each population's interim beliefs are constructed using the distribution of nature state, the likelihood distribution of its own TIS, and the \emph{marginal} type distribution of the other population. 
	\item[(iii)] Finally, in Sec.~\ref{sub:uninformative_tis_}, we consider a \emph{special case} of the information structure in Sec.~\ref{sub:marginal_likelihoods_ck} where the TIS~$\typeUninf$ is uninformative, i.e., $0.5=\accuracy^\typeUninf < \accuracy^\typeInf \leq 1$. The population $\typeUninf$'s information service does not give any information about Nature's state, i.e. TIS $\typeUninf$ is equally likely to report any state, regardless of the true state of nature, and population $\typeInf$ is better informed about the Nature state for any $\accuracy^\typeInf>0.5$.
	\end{itemize}
	
We will denote the population~$\service$'s interim beliefs for these cases as $\tilde{\belief}^\service$, $\hat{\belief}^\service$, and $\bar{\belief}^\service$, respectively. In all three cases, the interim beliefs can be generically expressed as follows:
\begin{align*}
\forall \service \in \serviceset, \quad \belief^\service\left(\state,\playertype^{-\service}|\playertype^\service\right) &=\frac{\pr^\service(\state,\playertype^\service,\playertype^{-\service})}{\pr(\playertype^\service)}\nonumber\\
&=\frac{\pr(\state)\pr^\service(\playertype^\service,\playertype^{-\service}|\state)}{\pr(\playertype^\service)}\nonumber\\
&= \underbrace{\frac{\pr(\state)\pr^\service(\playertype^{\service}|\state)}{\pr(\playertype^{\service})}}_{\pr^\service(\state|\playertype^\service)}\pr^\service(\playertype^{-\service}|\playertype^{\service},\state). 
%\label{eq:belief_form_w_bayes}
\end{align*}
%, $\pr^\service(\playertype^\service)$ is calculated as:
%\begin{align*}
%\pr^\service(\playertype^\service) = \sum_{\state} \pr(\state)\pr(\playertype^\service|\state).
%\end{align*}
%The expressions of $\pr(\playertype^\service)$ for all four types are given below:

Now notice that in our modeling environment $\playertype^\service$ and $\playertype^{-\service}$ are independent, conditioned on $\state$. Therefore, we can express the interim beliefs as follows: 
\begin{align}
\forall \service \in \serviceset, \quad\belief^\service\left(\state,\playertype^{-\service}|\playertype^\service\right)=\frac{\pr(\state)\pr^\service(\playertype^{\service}|\state)}{\pr(\playertype^{\service})}\pr^\service(\playertype^{-\service}|\state). \label{eq:belief_formimp}
\end{align}
%That is, given the state of Nature, knowledge of one's own type does not give any additional information about other players' types.

Before moving to the specific cases, we compute the marginal type distributions~$\pr(\playertype^{\service})$. From~\eqref{eq:likelihoods}, we note that $\pr(\playertype^\typeInf=\typeInf\stateinc|\state=\stateinc)=\pr(\obsrv^\typeInf|\state=\stateinc)=\accuracy^\typeInf$. Writing similar expressions for the conditional probability of each $\playertype^\service\in\playertypeset_\service$ given the occurrence of a nature state, we can express the marginal type distributions $\pr(\playertype^\typeInf = \typeInf\stateinc)$ and $\pr(\playertype^\typeUninf = \typeInf\stateinc)$ as follows: 
\begin{align}
\begin{split}
%\pr(\playertype^\typeInf &= \typeInf\statenorm) = \pInc (1-\accuracy^\typeInf) + (1-\pInc) \accuracy^\typeInf\\
\pr(\playertype^\typeInf &= \typeInf\stateinc) = \pInc\accuracy^\typeInf + (1-\pInc)(1- \accuracy^\typeInf)\\
%\pr(\playertype^\typeUninf &= \typeUninf\statenorm) = \pInc (1-\accuracy^\typeUninf) + (1-\pInc) \accuracy^\typeUninf\\
\pr(\playertype^\typeUninf &= \typeUninf\stateinc) = \pInc\accuracy^\typeUninf + (1-\pInc)(1- \accuracy^\typeUninf).\label{eq:prob_of_types}
\end{split}
\end{align}
Also note that $\pr(\playertype^\typeInf=\typeInf\statenorm) = 1-\pr(\playertype^\typeInf=\typeInf\stateinc)$, and $ \pr(\playertype^\typeUninf=\typeUninf\statenorm)= 1-\pr(\playertype^\typeUninf=\typeUninf\stateinc)$. For notational ease, we will henceforth refer to $\pr(\playertype^\typeInf = \typeInf\statenorm)$ as $\pr(\typeInf\statenorm)$, and $\pr(\playertype^\typeInf = \typeInf\stateinc)$ as $\pr(\typeInf\stateinc)$, etc. 

\subsection{TIS Conditional Likelihoods are Common Knowledge}
\label{sub:conditional_likelihoods_as_common_knowledge}

In this case, players in each population know the TIS likelihood distribution of the other population. Thus, $\pr^\typeUninf(\playertype^\typeInf=\typeInf\stateinc|\state=\stateinc)=\accuracy^\typeInf$ and $\pr^\typeUninf(\playertype^\typeInf=\typeInf\stateinc|\state=\statenorm)=(1-\accuracy^\typeInf)$. Substituting the expressions of $\pr^\typeUninf(\playertype^{\service}|\state)$ and $\pr^\typeUninf(\playertype^{\typeInf}|\state)$ in~\eqref{eq:belief_formimp}, the population~$\typeUninf$'s interim belief $\tilde{\belief}^\typeUninf$ can be written as: 
% 
%Suppose the TIS accuracies, or conditional likelihoods, are common knowledge; this allows players to explicitly compute \eqref{eq:belief_formimp}. This scenario corresponds to the case where players know the properties of not only their own TIS, but also that of the other players. 
%
%We present the expressions of the posterior belief distributions for both populations below and denote beliefs under common knowledge of conditional likelihoods as $\tilde{\belief}$. For population $\typeUninf$:
\begin{align*}
\tilde{\belief}^\typeUninf(\state = \stateinc,\playertype^\typeInf = \typeInf\stateinc | \playertype^\typeUninf = \typeUninf\stateinc) &= \frac{\pInc \accuracy^\typeUninf}{\pr(\typeUninf\stateinc)}\accuracy^\typeInf\\
\tilde{\belief}^\typeUninf(\state = \stateinc,\playertype^\typeInf = \typeInf\statenorm | \playertype^\typeUninf = \typeUninf\stateinc) &= \frac{\pInc \accuracy^\typeUninf}{\pr(\typeUninf\stateinc)}(1-\accuracy^\typeInf)\\
\tilde{\belief}^\typeUninf(\state = \statenorm,\playertype^\typeInf = \typeInf\stateinc | \playertype^\typeUninf = \typeUninf\stateinc) &= \frac{(1-\pInc)(1- \accuracy^\typeUninf)}{\pr(\typeUninf\stateinc)}(1-\accuracy^\typeInf)\\
\tilde{\belief}^\typeUninf(\state = \statenorm,\playertype^\typeInf = \typeInf\statenorm | \playertype^\typeUninf = \typeUninf\stateinc) &= \frac{(1-\pInc)(1- \accuracy^\typeUninf)}{\pr(\typeUninf\stateinc)}\accuracy^\typeInf\\
\tilde{\belief}^\typeUninf(\state = \stateinc,\playertype^\typeInf = \typeInf\stateinc | \playertype^\typeUninf = \typeUninf\statenorm) &= \frac{\pInc(1- \accuracy^\typeUninf)}{\pr(\typeUninf\statenorm)}\accuracy^\typeInf\\
\tilde{\belief}^\typeUninf(\state = \stateinc,\playertype^\typeInf = \typeInf\statenorm | \playertype^\typeUninf = \typeUninf\statenorm) &= \frac{\pInc(1- \accuracy^\typeUninf)}{\pr(\typeUninf\statenorm)}(1-\accuracy^\typeInf)\\
\tilde{\belief}^\typeUninf(\state = \statenorm,\playertype^\typeInf = \typeInf\stateinc | \playertype^\typeUninf = \typeUninf\statenorm) &= \frac{\pInc\accuracy^\typeUninf}{\pr(\typeUninf\statenorm)}(1-\accuracy^\typeInf)\\
\tilde{\belief}^\typeUninf(\state = \statenorm,\playertype^\typeInf = \typeInf\statenorm | \playertype^\typeUninf = \typeUninf\statenorm) &= \frac{\pInc\accuracy^\typeUninf}{\pr(\typeUninf\statenorm)}\accuracy^\typeInf.
\end{align*}
% \begin{align}
% \begin{split}
% \belief^\typeUninf(\state=\stateinc,\playertype^\typeInf=\typeInf\stateinc|\playertype^\typeUninf=\typeUninf) &= \pInc \accuracy^\typeInf\\
% \belief^\typeUninf(\state=\stateinc,\playertype^\typeInf=\typeInf\statenorm|\playertype^\typeUninf=\typeUninf) &= \pInc (1-\accuracy^\typeInf)\\
% \belief^\typeUninf(\state=\statenorm,\playertype^\typeInf=\typeInf\stateinc|\playertype^\typeUninf=\typeUninf) &= (1-\pInc) (1-\accuracy^\typeInf)\\
% \belief^\typeUninf(\state=\statenorm,\playertype^\typeInf=\typeInf\statenorm|\playertype^\typeUninf=\typeUninf) &= (1-\pInc) \accuracy^\typeInf
% \end{split}\label{eq:full_obj_beliefs_L}
% \end{align}
Similarly, under the knowledge of $\pr(\playertype^{\typeUninf}|\state)$, the population~$\typeInf$'s interim belief $\tilde{\belief}^\typeInf$ can be written as follows:
\begin{align*}
\tilde{\belief}^\typeInf(\state = \stateinc,\playertype^\typeUninf = \typeUninf\stateinc | \playertype^\typeInf = \typeInf\stateinc) &= \frac{\pInc \accuracy^\typeInf}{\pr(\typeInf\stateinc)}\accuracy^\typeUninf\\
\tilde{\belief}^\typeInf(\state = \stateinc,\playertype^\typeUninf = \typeUninf\statenorm | \playertype^\typeInf = \typeInf\stateinc) &= \frac{\pInc \accuracy^\typeInf}{\pr(\typeInf\stateinc)}(1-\accuracy^\typeUninf)\\
\tilde{\belief}^\typeInf(\state = \statenorm,\playertype^\typeUninf = \typeUninf\stateinc | \playertype^\typeInf = \typeInf\stateinc) &= \frac{(1-\pInc) (1-\accuracy^\typeInf)}{\pr(\typeInf\stateinc)}(1-\accuracy^\typeUninf)\\
\tilde{\belief}^\typeInf(\state = \statenorm,\playertype^\typeUninf = \typeUninf\statenorm | \playertype^\typeInf = \typeInf\stateinc) &= \frac{(1-\pInc) (1-\accuracy^\typeInf)}{\pr(\typeInf\stateinc)}\accuracy^\typeUninf\\
\tilde{\belief}^\typeInf(\state = \stateinc,\playertype^\typeUninf = \typeUninf\stateinc | \playertype^\typeInf = \typeInf\statenorm) &= \frac{\pInc (1-\accuracy^\typeInf)}{\pr(\typeInf\statenorm)}\accuracy^\typeUninf\\
\tilde{\belief}^\typeInf(\state = \stateinc,\playertype^\typeUninf = \typeUninf\statenorm | \playertype^\typeInf = \typeInf\statenorm) &= \frac{\pInc (1-\accuracy^\typeInf)}{\pr(\typeInf\statenorm)}(1-\accuracy^\typeUninf)\\
\tilde{\belief}^\typeInf(\state = \statenorm,\playertype^\typeUninf = \typeUninf\stateinc | \playertype^\typeInf = \typeInf\statenorm) &= \frac{(1-\pInc) \accuracy^\typeInf}{\pr(\typeInf\statenorm)}(1-\accuracy^\typeUninf)\\
\tilde{\belief}^\typeInf(\state = \statenorm,\playertype^\typeUninf = \typeUninf\statenorm | \playertype^\typeInf = \typeInf\statenorm) &= \frac{(1-\pInc) \accuracy^\typeInf}{\pr(\typeInf\statenorm)}\accuracy^\typeUninf.
\end{align*}

% Now, for a strategy profile $(\load^\service,\load^{-\service}) \in \loadset^\service \times \loadset^{-\service}$, we can use the beliefs specified in \eqref{eq:full_obj_beliefs_L}-\eqref{eq:full_obj_beliefs_H} to compute the expected route costs for each population \eqref{eq:expected_cost_population}.

% For population $\typeUninf$, we have the expected route costs as follows:
% \begin{align}
% \mathbb{E}[\latency{\routeindex}{\state}{\load_\routeindex^\typeUninf+\load_\routeindex^\typeInf}|\playertype^\service = \typeUninf] =& \pInc\accuracy^\typeInf \latency{\routeindex}{\stateinc}{\load_\routeindex^\typeUninf+\load_\routeindex^{\typeInf\stateinc}} + \pInc(1-\accuracy^\typeInf)\latency{\routeindex}{\stateinc}{\load_\routeindex^\typeUninf+\load_\routeindex^{\typeInf\statenorm}}\nonumber \\
% &+ (1-\pInc)(1-\accuracy^\typeInf)\latency{\routeindex}{\statenorm}{\load_\routeindex^\typeUninf+\load_\routeindex^{\typeInf\stateinc}} + (1-\pInc)\accuracy^\typeInf \latency{\routeindex}{\statenorm}{\load_\routeindex^\typeUninf+\load_\routeindex^{\typeInf\statenorm}}\label{eq:exp_latency_typeL_obj}.
% \end{align}
% For population $\typeInf$, we must account for the expected route costs for both $\typeInf\stateinc$ \& $\typeInf\statenorm$. 

The common knowledge of TIS conditional likelihoods might be an appropriate assumption when TIS providers are competitive firms who have researched their competitor's TIS. However, in a large class of route choice scenarios faced by commuters, a more realistic assumption on the information structure is the case when only marginal type distributions are common knowledge, which we consider next. 

\subsection{TIS Marginal Type Distributions are Common Knowledge}
\label{sub:marginal_likelihoods_ck}
In this case, each population does not know the conditional distribution of their opponent's type, but knows the opponent's marginal type distribution. In contrast to~\eqref{eq:belief_formimp}, each player's interim beliefs under this restrictive information structure can be expressed as follows: 
\begin{align}\label{eq:belief_marg_formimp}
\forall \service \in \serviceset, \quad\hat{\belief}^\service\left(\state,\playertype^{-\service}|\playertype^\service\right)=\frac{\pr(\state)\pr^\service(\playertype^{\service}|\state)}{\pr(\playertype^{\service})}\pr(\playertype^{-\service}).
\end{align}

%We now address the scenario where the conditional likelihoods are not common knowledge, and players only know the marginal likelihoods of encountering an opponent of a given type. We denote these beliefs as $\hat{\belief}$. In this case, players cannot compute \eqref{eq:belief_formimp} directly, as they do not know $\pr(\playertype^{-\service}|\staterv)$. Instead, they substitute the marginal likelihood $\pr(\playertype^{-\service}$ and compute their belief as:

Using~\eqref{eq:prob_of_types} in~\eqref{eq:belief_marg_formimp}, we can write the population $\typeUninf$'s interim belief $\hat{\belief}^\typeUninf$ as follows: 
%The full posteriors under this knowledge structure are calculated below. For population $\typeUninf$, the posterior beliefs are:
\begin{align*}
\hat{\belief}^\typeUninf(\state = \stateinc,\playertype^\typeInf = \typeInf\stateinc | \playertype^\typeUninf = \typeUninf\stateinc) &= \frac{\pInc \accuracy^\typeUninf}{\pr(\typeUninf\stateinc)}\pr(\typeInf\stateinc)\\
\hat{\belief}^\typeUninf(\state = \stateinc,\playertype^\typeInf = \typeInf\statenorm | \playertype^\typeUninf = \typeUninf\stateinc) &= \frac{\pInc \accuracy^\typeUninf}{\pr(\typeUninf\stateinc)}\pr(\typeInf\statenorm)\\
\hat{\belief}^\typeUninf(\state = \statenorm,\playertype^\typeInf = \typeInf\stateinc | \playertype^\typeUninf = \typeUninf\stateinc) &= \frac{(1-\pInc)(1- \accuracy^\typeUninf)}{\pr(\typeUninf\stateinc)}\pr(\typeInf\stateinc)\\
\hat{\belief}^\typeUninf(\state = \statenorm,\playertype^\typeInf = \typeInf\statenorm | \playertype^\typeUninf = \typeUninf\stateinc) &= \frac{(1-\pInc)(1- \accuracy^\typeUninf)}{\pr(\typeUninf\stateinc)}\pr(\typeInf\statenorm)\\
\hat{\belief}^\typeUninf(\state = \stateinc,\playertype^\typeInf = \typeInf\stateinc | \playertype^\typeUninf = \typeUninf\statenorm) &= \frac{\pInc(1- \accuracy^\typeUninf)}{\pr(\typeUninf\statenorm)}\pr(\typeInf\stateinc)\\
\hat{\belief}^\typeUninf(\state = \stateinc,\playertype^\typeInf = \typeInf\statenorm | \playertype^\typeUninf = \typeUninf\statenorm) &= \frac{\pInc(1- \accuracy^\typeUninf)}{\pr(\typeUninf\statenorm)}\pr(\typeInf\statenorm)\\
\hat{\belief}^\typeUninf(\state = \statenorm,\playertype^\typeInf = \typeInf\stateinc | \playertype^\typeUninf = \typeUninf\statenorm) &= \frac{\pInc\accuracy^\typeUninf}{\pr(\typeUninf\statenorm)}\pr(\typeInf\stateinc)\\
\hat{\belief}^\typeUninf(\state = \statenorm,\playertype^\typeInf = \typeInf\statenorm | \playertype^\typeUninf = \typeUninf\statenorm) &= \frac{\pInc\accuracy^\typeUninf}{\pr(\typeUninf\statenorm)}\pr(\typeInf\statenorm).
\end{align*}
Similarly, when the population $\typeInf$ knows the marginal type distribution~$\pr(\playertype^{\typeUninf})$ but does not know $\pr(\playertype^\typeUninf|\state)$, population $\typeInf$'s interim belief distribution $\hat{\belief}^\typeInf$ can be expressed as follows:
\begin{align*}
\hat{\belief}^\typeInf(\state = \stateinc,\playertype^\typeUninf = \typeUninf\stateinc | \playertype^\typeInf = \typeInf\stateinc) &= \frac{\pInc \accuracy^\typeInf}{\pr(\typeInf\stateinc)}\pr(\typeUninf\stateinc)\\
\hat{\belief}^\typeInf(\state = \stateinc,\playertype^\typeUninf = \typeUninf\statenorm | \playertype^\typeInf = \typeInf\stateinc) &= \frac{\pInc \accuracy^\typeInf}{\pr(\typeInf\stateinc)}\pr(\typeUninf\statenorm)\\
\hat{\belief}^\typeInf(\state = \statenorm,\playertype^\typeUninf = \typeUninf\stateinc | \playertype^\typeInf = \typeInf\stateinc) &= \frac{(1-\pInc) (1-\accuracy^\typeInf)}{\pr(\typeInf\stateinc)}\pr(\typeUninf\stateinc)\\
\hat{\belief}^\typeInf(\state = \statenorm,\playertype^\typeUninf = \typeUninf\statenorm | \playertype^\typeInf = \typeInf\stateinc) &= \frac{(1-\pInc) (1-\accuracy^\typeInf)}{\pr(\typeInf\stateinc)}\pr(\typeUninf\statenorm)\\
\hat{\belief}^\typeInf(\state = \stateinc,\playertype^\typeUninf = \typeUninf\stateinc | \playertype^\typeInf = \typeInf\statenorm) &= \frac{\pInc (1-\accuracy^\typeInf)}{\pr(\typeInf\statenorm)}\pr(\typeUninf\stateinc)\\
\hat{\belief}^\typeInf(\state = \stateinc,\playertype^\typeUninf = \typeUninf\statenorm | \playertype^\typeInf = \typeInf\statenorm) &= \frac{\pInc (1-\accuracy^\typeInf)}{\pr(\typeInf\statenorm)}\pr(\typeUninf\statenorm)\\
\hat{\belief}^\typeInf(\state = \statenorm,\playertype^\typeUninf = \typeUninf\stateinc | \playertype^\typeInf = \typeInf\statenorm) &= \frac{(1-\pInc) \accuracy^\typeInf}{\pr(\typeInf\statenorm)}\pr(\typeUninf\stateinc)\\
\hat{\belief}^\typeInf(\state = \statenorm,\playertype^\typeUninf = \typeUninf\statenorm | \playertype^\typeInf = \typeInf\statenorm) &= \frac{(1-\pInc) \accuracy^\typeInf}{\pr(\typeInf\statenorm)}\pr(\typeUninf\statenorm).
\end{align*}
% \begin{align}
% \begin{split}
% \belief^\typeInf(\state=\stateinc, \playertype^\typeUninf = \typeUninf|\playertype^\typeInf = \typeInf\stateinc) &= \frac{\pInc\accuracy^\typeInf}{\prob{\typeInf\stateinc}}\\
% \belief^\typeInf(\state=\statenorm, \playertype^\typeUninf = \typeUninf|\playertype^\typeInf = \typeInf\stateinc) &=\frac{(1-\pInc)(1-\accuracy^\typeInf)}{\prob{\typeInf\stateinc}}\\
% \belief^\typeInf(\state=\stateinc, \playertype^\typeUninf = \typeUninf|\playertype^\typeInf = \typeInf\statenorm) &= \frac{\pInc(1-\accuracy^\typeInf)}{\prob{\typeInf\statenorm}}\\
% \belief^\typeInf(\state=\statenorm, \playertype^\typeUninf = \typeUninf|\playertype^\typeInf = \typeInf\statenorm) &= \frac{(1-\pInc)\accuracy^\typeInf}{\prob{\typeInf\statenorm}}.
% \end{split}
% \end{align}
% The posterior beliefs of population $\typeInf$ on population $\typeUninf$ are identical to the beliefs from the case where the conditional likelihoods were common knowledge \eqref{eq:full_obj_beliefs_H}. This is because under Assumption \ref{ass:type_L_acc}, the conditional and marginal likelihoods of TIS $\typeUninf$ are identical: $\pr(\playertype^\typeUninf=\typeUninf|\staterv) = \pr(\playertype^\typeUninf=\typeUninf)=1$.

\subsection{Uninformative TIS $\typeUninf$ ($\accuracy^\typeUninf=0.5$)}
\label{sub:uninformative_tis_}

% We argue that this information structure provides a more realistic model of commuter traffic because it does not require players to know the accuracy of the other players' TIS. Of course, there are scenarios where it is appropriate to assume common knowledge of conditional likelihoods, e.g. competition between firms that have researched their competitor's TIS. However, such models fall outside the scope of this article.

%\subsection{Uninformative TIS $\typeUninf$} % (fold)

%
%The beliefs presented in Sec.~\ref{sub:conditional_likelihoods_as_common_knowledge} and Sec.~\ref{sub:marginal_likelihoods_ck} capture information heterogeneity for general $\accuracy^\typeInf$ and $\accuracy^\typeUninf$ (subject to \eqref{eq:accuracy}).

We now consider a specific case of Sec.~\ref{sub:marginal_likelihoods_ck} when $\accuracy^\typeUninf=0.5$, i.e. $\pr(\playertype^\typeUninf)=\pr(\playertype^\typeUninf|\state)$. This case is of special interest to us because it allows a population $\typeUninf$ that knowns no more about the nature state than the prior distribution $\pr(\state)$, and is also uniformed about the TIS $\typeInf$'s conditional likelihood $\pr(\playertype^\typeInf|\state)$. Hence, in this case, population $\typeUninf$ receives no private information.

%\footnote{In these assumptions are applied to the interim beliefs in Sec.~\ref{sub:conditional_likelihoods_as_common_knowledge} and Sec.~\ref{sub:marginal_likelihoods_ck}, we obtain that $\tilde{\belief}^\typeInf=\hat{\belief}^\typeInf$, but $\tilde{\belief}^\typeUninf\neq\hat{\belief}^\typeUninf$.}

%This captures environments in which population $\typeUninf$'s information service does not give any information about Nature's state, i.e. TIS $\typeUninf$ is equally likely to report any state, regardless of the true state of nature. 

%Formally, we will maintain the following assumption throughout the rest of the article:
%\begin{assumption} \label{ass:type_L_acc} TIS $\typeUninf$ is completely uninformative, i.e.:
%\begin{align}
%0.5=\accuracy^\typeUninf < \accuracy^\typeInf \leq 1.\label{eq:accuracy_uninf}
%\end{align}
%\end{assumption}

It is easy to argue that when $\accuracy^\typeUninf = 0.5$, the signal that any player of type $\typeUninf\stateinc$ or $\typeUninf\statenorm$ receives is completely uninformative of the state of nature; hence, the belief of any player subscribed to TIS $\typeUninf$ is no more informative than common prior belief~$\hat\commprior$. Since the commuters of type $\typeUninf\statenorm$ and $\typeUninf\stateinc$ have identical beliefs about the nature state and about the population $\typeInf$'s type, they also have identical strategies. Furthermore, commuters in the population~$\typeInf$ have identical beliefs about type $\typeUninf\statenorm$ and $\typeUninf\stateinc$. Thus, $\typeUninf\stateinc$ and $\typeUninf\statenorm$ no longer need to be distinguished. Henceforth, we will consider only a single equivalent type that represents both $\typeUninf\stateinc$ and $\typeUninf\statenorm$; we refer to it as type $\typeUninf$. Indeed, if $\accuracy^\typeUninf > 0.5$, it is not possible make this simplification.

%Under Assumption \ref{ass:type_L_acc}, 
Under the aforementioned simplification, population $\typeUninf$'s interim belief $\bar{\belief}^\typeUninf$ can be expressed as follows:
\begin{align}
\begin{split}
\bar{\belief}^\typeUninf(\state=\stateinc,\playertype^\typeInf=\typeInf\stateinc|\playertype^\typeUninf=\typeUninf) &= \pInc \pr(\typeInf\stateinc)\\
\bar{\belief}^\typeUninf(\state=\stateinc,\playertype^\typeInf=\typeInf\statenorm|\playertype^\typeUninf=\typeUninf) &= \pInc \pr(\typeInf\statenorm)\\
\bar{\belief}^\typeUninf(\state=\statenorm,\playertype^\typeInf=\typeInf\stateinc|\playertype^\typeUninf=\typeUninf) &= (1-\pInc) \pr(\typeInf\stateinc)\\
\bar{\belief}^\typeUninf(\state=\statenorm,\playertype^\typeInf=\typeInf\statenorm|\playertype^\typeUninf=\typeUninf) &= (1-\pInc) \pr(\typeInf\statenorm).
\end{split}\label{eq:full_subj_beliefs_L}
\end{align}
Finally, when the population $\typeInf$ players still know the marginal type distribution~$\pr(\playertype^{\typeUninf})$ and face an uninformed population $\typeUninf$ ($\accuracy^\typeUninf = 0.5$), their interim belief distribution $\bar{\belief}^\typeInf$ can be written as follows: 
\begin{align}
\begin{split}
\bar{\belief}^\typeInf(\state = \stateinc,\playertype^\typeUninf = \typeUninf | \playertype^\typeInf = \typeInf\stateinc) &= \frac{\pInc \accuracy^\typeInf}{\prob{\typeInf\stateinc}}\\
\bar{\belief}^\typeInf(\state = \statenorm,\playertype^\typeUninf = \typeUninf| \playertype^\typeInf = \typeInf\stateinc) &= \frac{(1-\pInc) (1-\accuracy^\typeInf)}{\prob{\typeInf\stateinc}}\\
\bar{\belief}^\typeInf(\state = \stateinc,\playertype^\typeUninf = \typeUninf | \playertype^\typeInf = \typeInf\statenorm) &= \frac{\pInc (1-\accuracy^\typeInf)}{\prob{\typeInf\statenorm}}\\
\bar{\belief}^\typeInf(\state = \statenorm,\playertype^\typeUninf = \typeUninf | \playertype^\typeInf = \typeInf\statenorm) &= \frac{(1-\pInc) \accuracy^\typeInf}{\prob{\typeInf\statenorm}}
\end{split}\label{eq:full_subj_beliefs_H}
\end{align}

%!TEX root = main.tex

\section{Equilibrium}
\label{sec:equilibrium}
In this section, we present a full characterization the Bayesian Wardrop Equilibrium (BWE) of the game $\genGame_p=\left(\serviceset,{\loadset},{\stateset},{\playertypeset},{\costVar},{\commprior} \right)$ with the interim beliefs $\bar{\belief}$ as in \eqref{eq:full_subj_beliefs_L} and \eqref{eq:full_subj_beliefs_H}, i.e. under the information structure induced by $\accuracy^\typeUninf = 0.5$ and when TIS marginal type distributions are common knowledge. In our subsequent analysis, we use the notation $\load_\routeindex^{\playertype^{\service}}$ to denote the load $\load_\routeindex^{\service}(\playertype^\service)$ on route $\routeindex$ of population $\service$ when they are type $\playertype^\service$. Similarly, we use the notation $\splitfraction{\playertype^\service}{1}{}$ to denote the strategy distribution (or split fraction) $\splitfraction{\playertype^\service}{1}{}(\playertype^\service)$ on route $\routeindex$ of population $\service$ when they are type $\playertype^\service$. Recall that load and split fraction are related by \eqref{eq:splitfraction_load_relation}. 

Without loss of generality, we will assume that the total demand is sufficiently large that the cost of placing all of the demand on the first route in the normal state is costlier than the free-flow cost of the second route, i.e.:
\begin{align}
\totaldemand > \frac{\intercept{2}-\intercept{1}}{\slope{1}{\statenorm}} \label{eq:demand_assumption}.
\end{align}
This assumption guarantees that there is no equilibrium where all the demand is routed on the same route.

\subsection{Equilibrium Characterization} % (fold)
\label{sub:characterization}
In the two-route network (Fig~\ref{fig:network_cartoon}), the BWE conditions (Definition \ref{def:bayesian_wardrop}) can be interpreted as follows: if all players in a population play exclusively one route in equilibrium when they are type $\playertype^\service$, i.e. $\splitfraction{\playertype^\service}{1}{\bneIndicator}=$ 0 or 1, then the expected cost of the route taken must be less than that of the other route. On the other hand, if the players of a population split and take both routes when they are type $\playertype^\service$, i.e. $\splitfraction{\playertype^\service}{1}{\bneIndicator} \in (0,1)$, then the expected cost on both routes are equal. This creates three qualitatively different strategy distributions for each player type: exclusively take $\routeindex_1$, exclusively take $\routeindex_2$, or split some demand on each. These are expressed below:
\begin{subequations}
\begin{align}[left={\forall \service \in \serviceset, \playertype^\service \in \playertypeset_\service, \empheqlbrace\,}]
&\splitfraction{\playertype^\service}{1}{\bneIndicator}=0 \implies \mathbb{E}_{\bar{\belief}}[\latency{1}{\state}{\load_1^{\playertype^{-\service}\bneIndicator}}|\playertype^\service] < \mathbb{E}_{\bar{\belief}}[\latency{2}{\state}{\totaldemand^{\playertype\bneIndicator}+\load_2^{\playertype^{-\service}\bneIndicator}}|\playertype^\service]\label{eq:bne_ineq_0}\\
&\splitfraction{\playertype^\service}{1}{\bneIndicator}=1 \implies \mathbb{E}_{\bar{\belief}}[\latency{1}{\state}{\totaldemand^{\playertype^\service\bneIndicator}+\load_1^{\playertype^{-\service}\bneIndicator}}|\playertype^\service] > \mathbb{E}_{\bar{\belief}}[\latency{2}{\state}{\load_2^{\playertype^{-\service}\bneIndicator}}|\playertype^\service ]\label{eq:bne_ineq_1}\\
&\splitfraction{\playertype^\service}{1}{\bneIndicator}\in(0,1) \implies \mathbb{E}_{\bar{\belief}}[\latency{1}{\state}{\load_1^{\playertype^\service\bneIndicator}+\load_1^{\playertype^{-\service}\bneIndicator}}|\playertype^\service] = \mathbb{E}_{\bar{\belief}}[\latency{2}{\state}{\load_2^{\playertype^\service\bneIndicator}+\load_2^{\playertype^{-\service}\bneIndicator}}|\playertype^\service ].\label{eq:bne_equality}
\end{align}
\end{subequations}
Since there are three player types ($\typeUninf, \typeInf\statenorm, \typeInf\stateinc$), this leads to $3^3=27$ combinations of qualitatively different strategy distribution profiles that must be considered.

The expected route costs \eqref{eq:expected_cost_population} according to the interim beliefs $\bar{\belief}$ given in \eqref{eq:full_subj_beliefs_L} for players in population $\typeUninf$ on route $\routeindex$ can be expressed as follows:
\begin{align}
\mathbb{E}_{\bar{\belief}}[\latency{\routeindex}{\state}{\load_\routeindex^\typeUninf+\load_\routeindex^{\playertype^\typeInf}}|\playertype^\typeUninf= \typeUninf] =& \bar{\belief}^\typeUninf(\staterv=\stateinc,\playertype^\typeInf = \typeInf\stateinc) \latency{\routeindex}{\stateinc}{\load_\routeindex^\typeUninf+\load_\routeindex^{\typeInf\stateinc}} + \bar{\belief}^\typeUninf(\staterv=\stateinc,\playertype^\typeInf = \typeInf\statenorm) \latency{\routeindex}{\stateinc}{\load_\routeindex^\typeUninf+\load_\routeindex^{\typeInf\statenorm}} \nonumber\\
&+ \bar{\belief}^\typeUninf(\staterv=\statenorm,\playertype^\typeInf = \typeInf\stateinc) \latency{\routeindex}{\statenorm}{\load_\routeindex^\typeUninf+\load_\routeindex^{\typeInf\stateinc}}+ \bar{\belief}^\typeUninf(\staterv=\statenorm,\playertype^\typeInf = \typeInf\statenorm) \latency{\routeindex}{\statenorm}{\load_\routeindex^\typeUninf+\load_\routeindex^{\typeInf\statenorm}}\nonumber\\
=& 
\pInc\prob{\typeInf\stateinc} \latency{\routeindex}{\stateinc}{\load_\routeindex^\typeUninf+\load_\routeindex^{\typeInf\stateinc}} + \pInc\prob{\typeInf\statenorm}\latency{\routeindex}{\stateinc}{\load_\routeindex^\typeUninf+\load_\routeindex^{\typeInf\statenorm}} \nonumber\\
&+ (1-\pInc)\prob{\typeInf\stateinc}\latency{\routeindex}{\statenorm}{\load_\routeindex^\typeUninf+\load_\routeindex^{\typeInf\stateinc}}+ (1-\pInc)\prob{\typeInf\statenorm} \latency{\routeindex}{\statenorm}{\load_\routeindex^\typeUninf+\load_\routeindex^{\typeInf\statenorm}}.\label{eq:exp_latency_typeL}
\end{align}
Similarly, the expected route costs for players in population $\typeInf$ given the belief $\bar{\belief}$ given in \eqref{eq:full_subj_beliefs_H} can be expressed as:
\begin{align}
\mathbb{E}_{\bar{\belief}}[\latency{\routeindex}{\state}{\load_\routeindex^{\typeInf\stateinc}+\load_\routeindex^\typeUninf}|\playertype^\typeInf = \typeInf\stateinc] &= \frac{\pInc\accuracy^\typeInf}{\prob{\typeInf\stateinc}}\latency{\routeindex}{\stateinc}{\load_\routeindex^{\typeInf\stateinc}+\load_\routeindex^\typeUninf} +  \frac{(1-\pInc)(1-\accuracy^\typeInf)}{\prob{\typeInf\stateinc}}\latency{\routeindex}{\statenorm}{\load_\routeindex^{\typeInf\stateinc}+\load_\routeindex^\typeUninf},\label{eq:exp_latency_typeHa}
\end{align}
for type $\typeInf\stateinc$, and:
\begin{align}
\mathbb{E}_{\bar{\belief}}[\latency{\routeindex}{\state}{\load_\routeindex^{\typeInf\statenorm}+\load_\routeindex^\typeUninf}|\playertype^\typeInf = \typeInf\statenorm] &= \frac{\pInc(1-\accuracy^\typeInf)}{\prob{\typeInf\statenorm}}\latency{\routeindex}{\stateinc}{\load_\routeindex^{\typeInf\statenorm}+\load_\routeindex^\typeUninf} + \frac{(1-\pInc)\accuracy^\typeInf}{\prob{\typeInf\statenorm}}\latency{\routeindex}{\statenorm}{\load_\routeindex^{\typeInf\statenorm}+\load_\routeindex^\typeUninf}.\label{eq:exp_latency_typeHn}
\end{align} 
for type $\typeInf\statenorm$.

% Propositions
We now explore how the equilibrium is affected by the non-physical parameters of the game, i.e. those which are not characteristic of the underlying network: the probability of incident ($\pInc$), the fraction of players subscribed to TIS $\typeInf$ ($\fracInf$), and the accuracy of TIS $\typeInf$ ($\accuracy^\typeInf$). 
% Specifically, we focus on $\fracInf$ and $\accuracy^\typeInf$ because as TIS become more widespread and accurate, it is important to understand its effects on equilibrium route choices. We include $\pInc$ to ensure that our results hold under a range of incident scenarios. 
Before moving forward, let us define the four following parameter regimes: 
\begin{align}
\begin{split}
\regime{1} &= \left\{(\pInc,\fracInf,\accuracy^\typeInf) \in (0,1) \times [0,1] \times (0.5,1]\  |\ 0 \leq \fracInf < \fracBound{1} \right\}\\
\regime{2} &= \left\{(\pInc,\fracInf,\accuracy^\typeInf) \in (0,1) \times [0,1] \times (0.5,1]\  |\ \fracBound{1} \leq \fracInf \leq \fracBound{2} \right\}\\
\regime{3} &= \left\{(\pInc,\fracInf,\accuracy^\typeInf) \in (0,1) \times [0,1] \times (0.5,1]\  |\ \fracBound{2} < \fracInf < \fracBound{3} \right\}\\
\regime{4} &= \left\{(\pInc,\fracInf,\accuracy^\typeInf) \in (0,1) \times [0,1] \times (0.5,1]\  |\ \fracBound{3} \leq \fracInf \leq 1 \right\}\\
\end{split}
\end{align}
where:
\begin{align} \label{eq:regimeBounds}
\begin{split}
\fracBound{1} &:= \dfrac{\regConst_1(\widehat{\slope{1}{}}-\bar{\slope{1}{}}\prob{\typeInf\stateinc})}{\totaldemand\prob{\typeInf\statenorm}(\widehat{\slope{1}{}}+\slope{2}{}\prob{\typeInf\stateinc})},\\
\fracBound{2} &:= \dfrac{\regConst_1-\prob{\typeInf\stateinc}\regConst_2}{\totaldemand\prob{\typeInf\statenorm}},\\
\fracBound{3} &:= \dfrac{\regConst_3}{\totaldemand},\\
\bar{\slope{1}{}} &:= (1-\pInc)\slope{1}{\statenorm} + \pInc \slope{1}{\stateinc},\\
\widehat{\slope{1}{}} &:= (1-\pInc)(1-\accuracy^\typeInf)\slope{1}{\statenorm} + \pInc\accuracy^\typeInf \slope{1}{\stateinc},\\
\widetilde{\slope{1}{}} &:= (1-\pInc)\accuracy^\typeInf\slope{1}{\statenorm} + \pInc(1-\accuracy^\typeInf) \slope{1}{\stateinc},\\
\solConst &:= \slope{2}{} \totaldemand - \intercept{1} + \intercept{2},\\
\regConst_1 &:= \frac{\solConst}{\bar{\slope{1}{}}+\slope{2}{}},\\
\regConst_2 &:= \frac{\solConst\prob{\typeInf\stateinc}}{\widehat{\slope{1}{}}+\prob{\typeInf\stateinc}\slope{2}{}},\\
\regConst_3 &:=\dfrac{\solConst \prob{\typeInf\statenorm}}{\widetilde{\slope{1}{}}+\prob{\typeInf\statenorm}\slope{2}{}}.
\end{split}
\end{align}

\begin{figure}[H]
\centering
\subfloat[Population $\typeInf$ receives signal $\statenorm$]{
\includegraphics[width=0.4\linewidth]{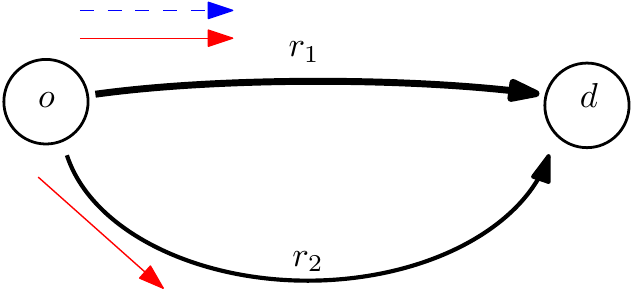}}\\
\subfloat[Population $\typeInf$ receives signal $\stateinc$]{\includegraphics[width=0.4\linewidth]{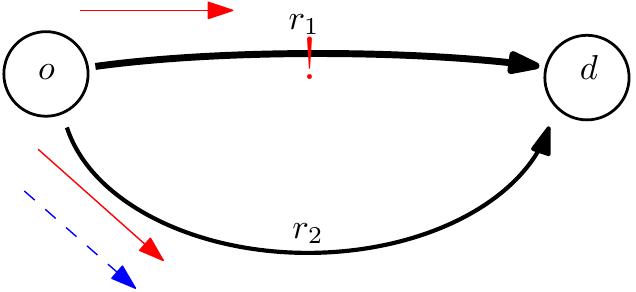}}
\caption{Illustration of the BWE in $\regime{1}$. Dashed blue represents population $\typeInf$, and solid red represents $\typeUninf$.}
\label{fig:reg1_cartoon} 
\end{figure}

The following proposition gives the BWE in Regime $\regime{1}$. This corresponds to the case where the fraction of players that belong to population $\typeInf$ is small, and all players in this population take $\routeindex_1$ when they receive signal $\statenorm$ and route $\routeindex_2$ when they receive signal $\stateinc$. Figure~\ref{fig:reg1_cartoon} illustrates this play.
\begin{proposition}{[Equilibrium for Regime $\regime{1}$]}
\label{prop:reg1equilibrium}

For $(\pInc,\fracInf,\accuracy^\typeInf)\in \regime{1}$, the BWE of $\genGame_p$ is: 
\begin{align}
(\splitfraction{\typeUninf}{1}{\bneIndicator},(\splitfraction{\typeInf}{1}{\statenorm\bneIndicator},\splitfraction{\typeInf}{1}{\stateinc\bneIndicator})) = \left(\frac{\regConst_1}{(1-\fracInf)\totaldemand}-\frac{\prob{\typeInf\statenorm}\fracInf}{1-\fracInf},(1, 0)\right),
\label{eq:bne_regime_1_rhos}
\end{align}
i.e., the fraction $\splitfraction{\typeUninf}{1}{\bneIndicator}$ of players in population $\typeUninf$ choose $\routeindex_1$; if players in population $\typeInf$ receive the signal $\statenorm$ (resp. $\stateinc$), they all take $\routeindex_1$ (resp. $\routeindex_2$).
\end{proposition}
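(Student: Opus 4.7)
The plan is to verify directly that the proposed profile $(\splitfraction{\typeUninf}{1}{\bneIndicator},\,(\splitfraction{\typeInf}{1}{\statenorm\bneIndicator},\splitfraction{\typeInf}{1}{\stateinc\bneIndicator})) = (\splitfraction{\typeUninf}{1}{\bneIndicator},(1,0))$ is a BWE by checking the conditions of Definition~\ref{def:bayesian_wardrop} for each of the three active types $\typeUninf$, $\typeInf\statenorm$, $\typeInf\stateinc$. Since the strategies of population $\typeInf$ are pinned down as pure responses and the value $\splitfraction{\typeUninf}{1}{\bneIndicator}$ lies in $(0,1)$, population $\typeUninf$'s indifference condition \eqref{eq:bne_equality} yields the formula for $\splitfraction{\typeUninf}{1}{\bneIndicator}$, and the two inequality conditions for population $\typeInf$ determine the regime bound $\fracBound{1}$. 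Using \eqref{eq:splitfraction_load_relation}, the proposed pure strategies translate to $\load_1^{\typeInf\statenorm}=\fracInf\totaldemand$, $\load_2^{\typeInf\statenorm}=0$, $\load_1^{\typeInf\stateinc}=0$, $\load_2^{\typeInf\stateinc}=\fracInf\totaldemand$, and leave $\load_1^\typeUninf$ as the unknown, with $\load_2^\typeUninf = (1-\fracInf)\totaldemand - \load_1^\typeUninf$.

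First I would use \eqref{eq:exp_latency_typeL} together with the linear form \eqref{eq:state_latency_func} to write the type-$\typeUninf$ expected costs on each route. After substituting the four $\typeInf$-type loads and collecting terms, the $\pInc$-weighted state averages collapse using $\bar{\slope{1}{}}=(1-\pInc)\slope{1}{\statenorm}+\pInc\slope{1}{\stateinc}$ and $\prob{\typeInf\stateinc}+\prob{\typeInf\statenorm}=1$. Setting the two costs equal yields a linear equation in $\load_1^\typeUninf$ whose solution is $\load_1^\typeUninf = \regConst_1 - \prob{\typeInf\statenorm}\fracInf\totaldemand$ with $\regConst_1=\solConst/(\bar{\slope{1}{}}+\slope{2}{})$ as in \eqref{eq:regimeBounds}. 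Dividing by $\totaldemand^\typeUninf=(1-\fracInf)\totaldemand$ reproduces \eqref{eq:bne_regime_1_rhos}.

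Next I would check the two best-response inequalities for population $\typeInf$. For type $\typeInf\statenorm$ (playing pure $\routeindex_1$), using \eqref{eq:exp_latency_typeHn} the belief-weighted slope on route $\routeindex_1$ simplifies to $\widetilde{\slope{1}{}}/\prob{\typeInf\statenorm}$; substituting the expression just derived for $\load_1^\typeUninf$ reduces the inequality to a linear expression in $\fracInf$ which is satisfied by the route-parameter ordering in $(A\ref{ass:route_structure})_b$ together with $\accuracy^\typeInf>0.5$ (which forces $\widetilde{\slope{1}{}}\le\widehat{\slope{1}{}}$). For type $\typeInf\stateinc$ (playing pure $\routeindex_2$), using \eqref{eq:exp_latency_typeHa} the corresponding weighted slope is $\widehat{\slope{1}{}}/\prob{\typeInf\stateinc}$; substituting $\load_1^\typeUninf$ and rearranging gives an inequality of the form $\fracInf\,\totaldemand\prob{\typeInf\statenorm}(\widehat{\slope{1}{}}+\slope{2}{}\prob{\typeInf\stateinc})\le \regConst_1(\widehat{\slope{1}{}}-\bar{\slope{1}{}}\prob{\typeInf\stateinc})$, i.e.\ $\fracInf\le \fracBound{1}$ exactly as defined in \eqref{eq:regimeBounds}. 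This inequality is the binding one and defines $\regime{1}$; the type-$\typeInf\statenorm$ inequality is slack under it.

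Finally I would confirm $\splitfraction{\typeUninf}{1}{\bneIndicator}\in(0,1)$. The lower bound is implied directly by $\fracInf<\fracBound{1}$ (since $\fracBound{1}\le \regConst_1/(\prob{\typeInf\statenorm}\totaldemand)$, which one shows using $\widehat{\slope{1}{}}-\bar{\slope{1}{}}\prob{\typeInf\stateinc}\le \widehat{\slope{1}{}}+\slope{2}{}\prob{\typeInf\stateinc}$), and the upper bound follows from the demand assumption \eqref{eq:demand_assumption} combined with $\bar{\slope{1}{}}\ge\slope{1}{\statenorm}$, which guarantees $\totaldemand-\regConst_1>0$. The main obstacle I anticipate is in the algebraic Step~3 above: tracking the belief-weighted slopes $\widehat{\slope{1}{}}$ and $\widetilde{\slope{1}{}}$ through the expectation, substituting the derived $\load_1^\typeUninf$, and showing that the resulting inequality matches the specific fractional expression for $\fracBound{1}$ in \eqref{eq:regimeBounds}, as opposed to a simpler but weaker bound.
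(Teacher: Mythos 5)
Your overall strategy is the same as the paper's: pin down the population $\typeInf$ pure strategies, use population $\typeUninf$'s indifference condition to solve $\load_{1}^{\typeUninf\bneIndicator}=\regConst_1-\prob{\typeInf\statenorm}\fracInf\totaldemand$, and then reduce the two population-$\typeInf$ best-response inequalities to the regime bound. The paper phrases those two inequalities as the sandwich $\regConst_2<\load_{1}^{\typeUninf\bneIndicator}<\regConst_3-\fracInf\totaldemand$ and then asserts ``one can check'' that both hold precisely when $\fracInf<\fracBound{1}$; you go slightly further by exhibiting the type-$\typeInf\stateinc$ condition explicitly as the binding one and by verifying $\splitfraction{\typeUninf}{1}{\bneIndicator}\in(0,1)$, which the paper leaves implicit. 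That part of your argument matches the paper and is correct.

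One supporting claim in your treatment of the type-$\typeInf\statenorm$ inequality is wrong, however: $\accuracy^\typeInf>0.5$ does \emph{not} force $\widetilde{\slope{1}{}}\le\widehat{\slope{1}{}}$. A direct computation gives
\begin{align*}
\widehat{\slope{1}{}}-\widetilde{\slope{1}{}}=(2\accuracy^\typeInf-1)\left(\pInc\slope{1}{\stateinc}-(1-\pInc)\slope{1}{\statenorm}\right),
\end{align*}
which is negative whenever $\pInc\slope{1}{\stateinc}<(1-\pInc)\slope{1}{\statenorm}$; with the paper's own Table~\ref{tab:param_vals} parameters and $\pInc=0.2$, $\accuracy^\typeInf=1$ one gets $\widehat{\slope{1}{}}=0.6<\widetilde{\slope{1}{}}=0.8$. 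The conclusion you want (that the $\typeInf\statenorm$ condition is slack on all of $\regime{1}$) is still true, but it should be argued differently: that condition reduces to $\regConst_1+\prob{\typeInf\stateinc}\fracInf\totaldemand\le\regConst_3$, and the relevant positivity fact is
\begin{align*}
\prob{\typeInf\statenorm}\,\bar{\slope{1}{}}-\widetilde{\slope{1}{}}=\pInc(1-\pInc)(2\accuracy^\typeInf-1)\left(\slope{1}{\stateinc}-\slope{1}{\statenorm}\right)\ge 0,
\end{align*}
which gives $\regConst_3\ge\regConst_1$; one then checks that the resulting threshold $(\regConst_3-\regConst_1)/(\prob{\typeInf\stateinc}\totaldemand)$ dominates $\fracBound{1}$. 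With that repair your proof goes through and is, if anything, more explicit than the paper's.
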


\begin{proof}{Proof of Proposition \ref{prop:reg1equilibrium}.}

For the split fractions \eqref{eq:bne_regime_1_rhos} to be an equilibrium strategy, it must satisfy the Wardrop conditions: type $\typeUninf$'s expected cost must satisfy \eqref{eq:bne_equality}; type $\typeInf\statenorm$'s expected cost must satisfy \eqref{eq:bne_ineq_0}, and type $\typeInf\stateinc$'s expected cost must satisfy \eqref{eq:bne_ineq_1}. Since $\splitfraction{\typeInf}{1}{\statenorm\bneIndicator} = 1$ and $\splitfraction{\typeInf}{1}{\stateinc\bneIndicator} = 0$, we obtain that $\load_{1}^{\typeInf\statenorm\bneIndicator} = \fracInf \totaldemand$, and $\load_{1}^{\typeInf\stateinc\bneIndicator} = 0$. This yields the following conditions:
\begin{align*}
\load_{1}^{\typeUninf\bneIndicator} &= \regConst_1 - \fracInf\totaldemand\prob{\typeInf\statenorm} \\
\regConst_2 < \load_{1}^{\typeUninf\bneIndicator} &< \regConst_3 - \fracInf\totaldemand
\end{align*}
One can check that the aforementioned bounds on $\load_{1}^{\typeUninf\bneIndicator}$ are satisfied when $\fracInf < \fracBound{1}$, i.e. $(\pInc,\fracInf,\accuracy^\typeInf) \in \regime{1}$. \hfill \Halmos
\end{proof}

\begin{figure}[H]
\centering
\subfloat[Population $\typeInf$ receives signal $\statenorm$]{
\includegraphics[width=0.4\linewidth]{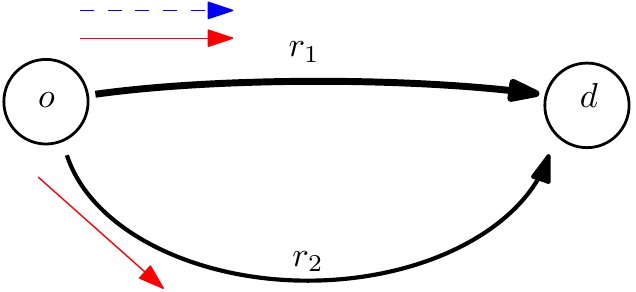}}\\
\subfloat[Population $\typeInf$ receives signal $\stateinc$]{\includegraphics[width=0.4\linewidth]{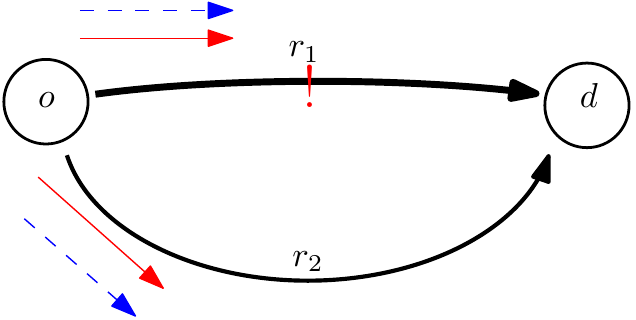}}
\caption{Illustration of the BWE in $\regime{2}$. Dashed blue represents population $\typeInf$, and solid red represents $\typeUninf$.}
\label{fig:reg2_cartoon} 
\end{figure}

Secondly, the following proposition gives the BWE in Regime $\regime{2}$. This corresponds to the case where population $\typeInf$ is sufficiently large that its players can no longer exclusively take $\routeindex_2$ when they receive signal $\stateinc$. However, the size of population $\typeInf$ is still small enough that its players can exclusively take $\routeindex_1$ when they receive signal $\statenorm$. Fig~\ref{fig:reg2_cartoon} illustrates this play.
\begin{proposition}{[Equilibrium for Regime $\regime{2}$]}
\label{prop:reg2equilibrium}

For $(\pInc,\fracInf,\accuracy^\typeInf)\in \regime{2}$, the BWE of $\genGame_p$ is:  
\begin{align}
(\splitfraction{\typeUninf}{1}{\bneIndicator},(\splitfraction{\typeInf}{1}{\statenorm\bneIndicator},\splitfraction{\typeInf}{1}{\stateinc\bneIndicator}))=\left(\frac{\regConst_1 - \fracInf\totaldemand\prob{\typeInf\statenorm} - \prob{\typeInf\stateinc}\regConst_2}{(1-\fracInf)\totaldemand\prob{\typeInf\statenorm}},\left(1, \frac{\fracInf\totaldemand\prob{\typeInf\statenorm} + \prob{\typeInf\stateinc}\regConst_2 - \regConst_1}{\fracInf\totaldemand\prob{\typeInf\statenorm}}\right)\right),
\label{eq:bne_regime_2_rhos}
\end{align}
i.e. the fraction $\splitfraction{\typeUninf}{1}{\bneIndicator}$ of players in population L choose $\routeindex_1$; if players of population $\typeInf$ receive the signal $\statenorm$, they all take $\routeindex_1$, and if they receive the signal $\stateinc$, the fraction $\splitfraction{\typeInf}{1}{\stateinc\bneIndicator}$ of population $\typeInf$ chooses the first route.
\end{proposition}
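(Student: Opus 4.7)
The approach mirrors the proof of Proposition~\ref{prop:reg1equilibrium}: verify that the candidate strategy profile~\eqref{eq:bne_regime_2_rhos} satisfies the Bayesian Wardrop conditions in Definition~\ref{def:bayesian_wardrop}, and check that the bounds $\fracBound{1}\leq\fracInf\leq\fracBound{2}$ defining $\regime{2}$ emerge exactly from the feasibility and optimality constraints on this candidate. The hypothesized play has $\splitfraction{\typeInf}{1}{\statenorm\bneIndicator}=1$ (so $\load_1^{\typeInf\statenorm\bneIndicator}=\fracInf\totaldemand$), while the two splitting types $\typeUninf$ and $\typeInf\stateinc$ must each satisfy the Wardrop equality between the expected costs of routes~$1$ and~$2$. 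I would first substitute this known load into the expected-cost expressions~\eqref{eq:exp_latency_typeL} and~\eqref{eq:exp_latency_typeHa}, thereby obtaining a $2\times 2$ linear system whose unknowns are $\load_1^{\typeUninf\bneIndicator}$ and $\load_1^{\typeInf\stateinc\bneIndicator}$ (equivalently, the split fractions $\splitfraction{\typeUninf}{1}{\bneIndicator}$ and $\splitfraction{\typeInf}{1}{\stateinc\bneIndicator}$).

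The next step is to solve this system explicitly and match the solution to~\eqref{eq:bne_regime_2_rhos}. The abbreviations $\bar{\slope{1}{}}$, $\widehat{\slope{1}{}}$, $\solConst$, $\regConst_1$, and $\regConst_2$ in~\eqref{eq:regimeBounds} are designed precisely for this collapse: the type-$\typeUninf$ equality weights the state--type scenarios by the marginal probabilities $\pInc\,\prob{\playertype^\typeInf}$ and $(1-\pInc)\,\prob{\playertype^\typeInf}$, producing the effective route-$1$ slope $\bar{\slope{1}{}}$ and the constant $\regConst_1$, while the type-$\typeInf\stateinc$ equality uses the posterior weights proportional to $\pInc\accuracy^\typeInf$ and $(1-\pInc)(1-\accuracy^\typeInf)$, producing the effective slope $\widehat{\slope{1}{}}$ and the constant $\regConst_2$. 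Subtracting the two resulting equalities isolates $\splitfraction{\typeInf}{1}{\stateinc\bneIndicator}$, and back-substitution yields $\splitfraction{\typeUninf}{1}{\bneIndicator}$.

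Finally, I would verify three conditions: (i)~$\splitfraction{\typeInf}{1}{\stateinc\bneIndicator}\in[0,1]$, (ii)~$\splitfraction{\typeUninf}{1}{\bneIndicator}\in[0,1]$, and (iii)~the Wardrop inequality for type~$\typeInf\statenorm$, whose play is $\splitfraction{\typeInf}{1}{\statenorm\bneIndicator}=1$ and so must weakly prefer route~$1$ at the computed loads. I expect bound~(i) to yield the lower endpoint $\fracBound{1}$ (the entry point from $\regime{1}$ at which pop~$\typeInf\stateinc$ first starts splitting), and bound~(ii) to yield the upper endpoint $\fracBound{2}$ (where the pop~$\typeUninf$ load on route~$1$ vanishes and the system transitions into $\regime{3}$); together, (i) and (ii) delineate $\regime{2}$. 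Condition~(iii) should hold automatically throughout $\regime{2}$, since the signal-$\statenorm$ posterior of a type-$\typeInf\statenorm$ player puts strictly more weight on the low-slope state~$\statenorm$ than does the type-$\typeInf\stateinc$ posterior, which already admits the equal-cost equality on route~$1$. The main obstacle will be the bookkeeping in solving the $2\times 2$ system: keeping the marginal and posterior weightings aligned with the correct abbreviations in~\eqref{eq:regimeBounds} so that the algebra telescopes into the compact closed form~\eqref{eq:bne_regime_2_rhos}, and then matching the endpoints at which $\splitfraction{\typeInf}{1}{\stateinc\bneIndicator}$ and $\splitfraction{\typeUninf}{1}{\bneIndicator}$ hit their respective boundaries to the defined $\fracBound{1}$ and $\fracBound{2}$.
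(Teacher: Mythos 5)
Your proposal follows essentially the same route as the paper's proof: fix $\load_1^{\typeInf\statenorm\bneIndicator}=\fracInf\totaldemand$, reduce the two Wardrop equalities for types $\typeUninf$ and $\typeInf\stateinc$ to a $2\times 2$ linear system in $\load_1^{\typeUninf\bneIndicator}$ and $\load_1^{\typeInf\stateinc\bneIndicator}$ via the effective slopes $\bar{\slope{1}{}},\widehat{\slope{1}{}}$ and constants $\regConst_1,\regConst_2$, and then read off the regime boundaries from feasibility, with nonnegativity of $\load_1^{\typeInf\stateinc\bneIndicator}$ giving $\fracInf\geq\fracBound{1}$ and nonnegativity of $\load_1^{\typeUninf\bneIndicator}$ giving $\fracInf\leq\fracBound{2}$, exactly as in the paper. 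The only cosmetic difference is that the paper carries the type-$\typeInf\statenorm$ inequality explicitly as the constraint $\load_1^{\typeUninf\bneIndicator}\leq\regConst_3-\fracInf\totaldemand$ rather than asserting it holds automatically, but it is likewise non-binding in the final bounds.
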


\begin{proof}{Proof of Proposition \ref{prop:reg2equilibrium}.}

For the split fractions \eqref{eq:bne_regime_2_rhos} to be an equilibrium strategy, it must satisfy the Wardrop conditions: type $\typeUninf$'s expected cost must satisfy \eqref{eq:bne_equality}; type $\typeInf\statenorm$'s expected cost must satisfy \eqref{eq:bne_ineq_0}, and the expected costs for type $\typeInf\stateinc$ must satisfy \eqref{eq:bne_equality}. Since $ \splitfraction{\typeInf}{1}{\statenorm\bneIndicator} = 1$, we obtain $\load_{1}^{\typeInf\statenorm\bneIndicator} = \fracInf \totaldemand$. Combined with the Wardrop conditions, this yields:
\begin{align}
\load_{1}^{\typeUninf\bneIndicator} + \load_{1}^{\typeInf\stateinc\bneIndicator}\prob{\typeInf\stateinc} &= \regConst_1 - \fracInf\totaldemand\prob{\typeInf\statenorm},\nonumber \\
\load_{1}^{\typeUninf\bneIndicator} + \load_{1}^{\typeInf\stateinc\bneIndicator} &= \regConst_2,\nonumber \\
\load_{1}^{\typeUninf\bneIndicator} &\leq \regConst_3 - \fracInf\totaldemand. \label{eq:prop_2_q1_bound}
\end{align}
Solving for $\load_{1}^{\typeUninf\bneIndicator}$ and $\load_{1}^{\typeInf\stateinc\bneIndicator}$ gives the following expressions: 
\begin{align*}
\load_{1}^{\typeUninf\bneIndicator} &= \frac{\regConst_1 - \fracInf\totaldemand\prob{\typeInf\statenorm} - \prob{\typeInf\stateinc}\regConst_2}{\prob{\typeInf\statenorm}},\\
\load_{1}^{\typeInf\stateinc\bneIndicator} &=\frac{\fracInf\totaldemand\prob{\typeInf\statenorm} + \prob{\typeInf\stateinc}\regConst_2 - \regConst_1}{\prob{\typeInf\statenorm}}
\end{align*}
In order for $\load_{1}^{\typeUninf\bneIndicator}$ and $\load_{1}^{\typeInf\stateinc\bneIndicator}$ to be non-negative and satisfy \eqref{eq:prop_2_q1_bound}, we find that $\fracBound{1} \leq \fracInf \leq \fracBound{2}$, i.e. $(\pInc,\fracInf,\accuracy^\typeInf) \in \regime{2}$. \hfill \Halmos
\end{proof}

\begin{figure}[H]
\centering
\subfloat[Population $\typeInf$ receives signal $\statenorm$]{
\includegraphics[width=0.4\linewidth]{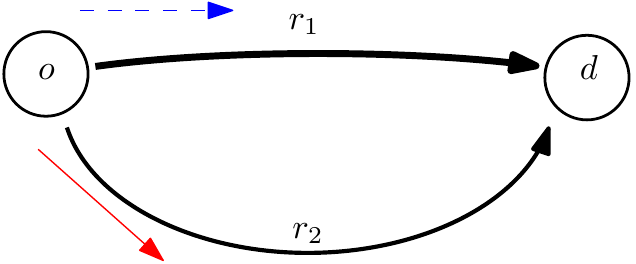}}\\
\subfloat[Population $\typeInf$ receives signal $\stateinc$]{\includegraphics[width=0.4\linewidth]{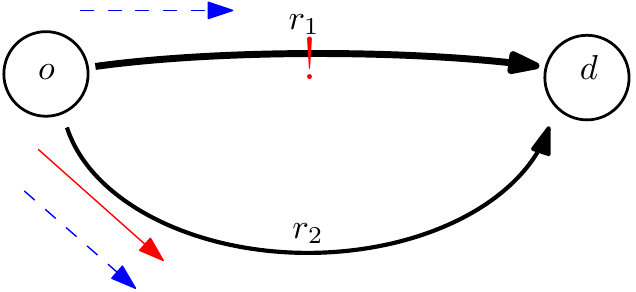}}
\caption{Illustration of the BWE in $\regime{3}$. Dashed blue represents population $\typeInf$, and solid red represents $\typeUninf$.}
\label{fig:reg3_cartoon} 
\end{figure}

The following proposition gives the BWE in Regime $\regime{3}$. This corresponds to the case where the players in population $\typeUninf$ exclusively choose $\routeindex_2$ in order to ``avoid'' the significant expected load on $\routeindex_1$ due to population $\typeInf$ players. Figure~\ref{fig:reg3_cartoon} illustrates this play.
\begin{proposition}{[Equilibrium for Regime $\regime{3}$]}
\label{prop:reg3equilibrium}

For $(\pInc,\fracInf,\accuracy^\typeInf)\in \regime{3}$, the BWE of $\genGame_p$ is:   
\begin{align}
(\splitfraction{\typeUninf}{1}{\bneIndicator},(\splitfraction{\typeInf}{1}{\statenorm\bneIndicator},\splitfraction{\typeInf}{1}{\stateinc\bneIndicator}))=\left(0,\left(1, \frac{\regConst_2}{\fracInf\totaldemand}\right) \right),
\label{eq:bne_regime_3_rhos}
\end{align}
i.e. all players in population $\typeUninf$ take $\routeindex_2$; if players in population $\typeInf$ receive the signal $\statenorm,$ they all take $\routeindex_1$, and if they receive the signal $\stateinc$, the fraction $\splitfraction{\typeInf}{1}{\stateinc\bneIndicator}$ of population $\typeInf$ take $\routeindex_1$.
\end{proposition}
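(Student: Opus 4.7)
The plan is to follow the template used in the proofs of Propositions~\ref{prop:reg1equilibrium} and~\ref{prop:reg2equilibrium}: substitute the candidate strategy profile into the BWE conditions \eqref{eq:bne_ineq_0}--\eqref{eq:bne_equality}, use the single indifference equation to pin down the only free load, and then check that the two exclusive-use inequalities translate precisely into the advertised regime bounds $\fracBound{2}<\fracInf<\fracBound{3}$.

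First, since $\splitfraction{\typeUninf}{1}{\bneIndicator}=0$ and $\splitfraction{\typeInf}{1}{\statenorm\bneIndicator}=1$, relation~\eqref{eq:splitfraction_load_relation} immediately fixes $\load_1^{\typeUninf\bneIndicator}=0$ and $\load_1^{\typeInf\statenorm\bneIndicator}=\fracInf\totaldemand$, leaving $\load_1^{\typeInf\stateinc\bneIndicator}$ as the only unknown. Applying the indifference condition \eqref{eq:bne_equality} to the mixing type $\typeInf\stateinc$, and using the expected-cost form \eqref{eq:exp_latency_typeHa} together with the identity $\pInc\accuracy^\typeInf+(1-\pInc)(1-\accuracy^\typeInf)=\pr(\typeInf\stateinc)$, collapses the problem to a single linear equation. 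Its slope coefficient simplifies to $\widehat{\slope{1}{}}/\pr(\typeInf\stateinc)$ on the $\routeindex_1$ side and to $\slope{2}{}$ on the $\routeindex_2$ side (where the total $\routeindex_2$ load perceived by this type equals $\totaldemand-\load_1^{\typeInf\stateinc\bneIndicator}$ since $\load_1^{\typeUninf\bneIndicator}=0$), yielding $\load_1^{\typeInf\stateinc\bneIndicator}=\solConst\pr(\typeInf\stateinc)/(\widehat{\slope{1}{}}+\pr(\typeInf\stateinc)\slope{2}{})=\regConst_2$. Dividing by $\fracInf\totaldemand$ gives the claimed $\splitfraction{\typeInf}{1}{\stateinc\bneIndicator}=\regConst_2/(\fracInf\totaldemand)$.

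Second, I would verify that the two remaining exclusive-use inequalities each deliver exactly one of the regime endpoints. For type $\typeInf\statenorm$, substituting $\load_1^{\typeInf\statenorm\bneIndicator}=\fracInf\totaldemand$ and $\load_1^{\typeUninf\bneIndicator}=0$ into \eqref{eq:exp_latency_typeHn} and comparing with the corresponding cost on $\routeindex_2$ reduces, after applying the analogous identity for $\pr(\typeInf\statenorm)$, to $(\widetilde{\slope{1}{}}/\pr(\typeInf\statenorm)+\slope{2}{})\fracInf\totaldemand\leq\solConst$, equivalently $\fracInf\leq\fracBound{3}$. For type $\typeUninf$, I would evaluate \eqref{eq:exp_latency_typeL} at the just-computed loads and compare with the corresponding cost on $\routeindex_2$; using $\pr(\typeInf\stateinc)+\pr(\typeInf\statenorm)=1$ and the definition of $\bar{\slope{1}{}}$, the inequality simplifies to $\pr(\typeInf\stateinc)\regConst_2+\pr(\typeInf\statenorm)\fracInf\totaldemand\geq\regConst_1$, which rearranges to $\fracInf\geq\fracBound{2}$. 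Together these place $(\pInc,\fracInf,\accuracy^\typeInf)$ in $\regime{3}$.

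The main obstacle is algebraic rather than conceptual. Population $\typeUninf$'s expected cost on each route mixes over four combinations of state and $\typeInf$-type, so the $\typeUninf$ inequality is heavier than anything appearing in the proof of Proposition~\ref{prop:reg2equilibrium}. I plan to manage this by introducing the aggregate $X:=\pr(\typeInf\stateinc)\load_1^{\typeInf\stateinc\bneIndicator}+\pr(\typeInf\statenorm)\load_1^{\typeInf\statenorm\bneIndicator}$, the ``type-averaged'' perceived load on $\routeindex_1$ from $\typeUninf$'s vantage point. Using \eqref{eq:exp_latency_typeL}, the $\typeUninf$-indifference curve reduces to $X=\regConst_1$, so the $\splitfraction{\typeUninf}{1}{\bneIndicator}=0$ condition becomes $X\geq\regConst_1$. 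Plugging in $\load_1^{\typeInf\stateinc\bneIndicator}=\regConst_2$ and $\load_1^{\typeInf\statenorm\bneIndicator}=\fracInf\totaldemand$ then produces $\fracInf\geq\fracBound{2}$, making transparent why $\fracBound{2}$ is precisely the transition point between Regimes $\regime{2}$ and $\regime{3}$.
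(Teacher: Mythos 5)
Your proposal is correct and follows essentially the same route as the paper's proof: fix $\load_1^{\typeUninf\bneIndicator}=0$ and $\load_1^{\typeInf\statenorm\bneIndicator}=\fracInf\totaldemand$ from the candidate profile, solve the single indifference condition for type $\typeInf\stateinc$ to get $\load_1^{\typeInf\stateinc\bneIndicator}=\regConst_2$, and show the two exclusive-use inequalities reduce to $\fracBound{2}<\fracInf<\fracBound{3}$. You simply carry out explicitly the algebra (including the type-averaged load $X=\regConst_1$ reduction for population $\typeUninf$) that the paper's proof leaves implicit; the only cosmetic discrepancy is that you write weak inequalities at the regime endpoints where the paper uses strict ones.
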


\begin{proof}{Proof of Proposition \ref{prop:reg3equilibrium}.}

For the split fractions \eqref{eq:bne_regime_3_rhos} to be an equilibrium strategy, it must satisfy the Wardrop conditions: population $\typeUninf$'s expected cost must satisfy \eqref{eq:bne_ineq_0}; type $\typeInf\statenorm$'s expected cost must satisfy \eqref{eq:bne_ineq_1} and the expected cost for $\typeInf\stateinc$ must satisfy \eqref{eq:bne_equality}. Since $\splitfraction{\typeUninf}{1}{\bneIndicator}=0$ and $\splitfraction{\typeInf}{1}{\statenorm\bneIndicator}=1$, we obtain $\load_{1}^{\typeUninf\bneIndicator} = 0$ and $\load_{1}^{\typeInf\statenorm\bneIndicator} = \fracInf \totaldemand$. Substituting into the Wardrop conditions yields:
\begin{align*}
\load_{1}^{\typeInf\stateinc\bneIndicator} &= \regConst_2, \\
\frac{\regConst_1 - \prob{\typeInf\stateinc}\regConst_2}{\prob{\typeInf\statenorm}\totaldemand} < \fracInf &< \frac{\regConst_3}{\totaldemand}.
\end{align*}
The aforementioned bounds on $\fracInf$ correspond to $\fracBound{2} < \fracInf < \fracBound{3}$, i.e. $(\pInc,\fracInf,\accuracy^\typeInf) \in \regime{3}$. \hfill \Halmos
\end{proof}

The following proposition gives the BWE in Regime $\regime{4}$. This corresponds to the case where population $\typeInf$ is sufficiently large that its players can no longer exclusively take $\routeindex_1$ when they receive signal $\statenorm$ and thus a fraction of the players take $\routeindex_2$ when they receive $\statenorm$. Figure~\ref{fig:reg1_cartoon} illustrates this play.

\begin{proposition}{[Equilibrium for Regime $\regime{4}$]}
\label{prop:reg4equilibrium}

For $(\pInc,\fracInf,\accuracy^\typeInf)\in \regime{3}$, the BWE of $\genGame_p$ is:  
\begin{align}
(\splitfraction{\typeUninf}{1}{\bneIndicator},(\splitfraction{\typeInf}{1}{\statenorm\bneIndicator},\splitfraction{\typeInf}{1}{\stateinc\bneIndicator})) = \left(0,\left(\dfrac{\regConst_3}{\fracInf\totaldemand}, \dfrac{\regConst_2}{\fracInf\totaldemand}\right)\right),
\label{eq:bne_regime_4_rhos}
\end{align}
i.e. all players in population $\typeUninf$ take $\routeindex_2$; if players in population $\typeInf$ receive the signal $\statenorm$ (resp. $\stateinc$), the fraction $\splitfraction{\typeInf}{1}{\statenorm\bneIndicator}$ (resp. $\splitfraction{\typeInf}{1}{\stateinc\bneIndicator}$) take $\routeindex_1$.
\end{proposition}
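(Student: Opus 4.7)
The plan is to verify that the strategy profile~\eqref{eq:bne_regime_4_rhos} satisfies the Wardrop conditions of Definition~\ref{def:bayesian_wardrop} for each of the three types in turn. Since $\splitfraction{\typeUninf}{1}{\bneIndicator}=0$, I must verify the strict inequality~\eqref{eq:bne_ineq_0} for type~$\typeUninf$; since $\splitfraction{\typeInf}{1}{\statenorm\bneIndicator}$ and $\splitfraction{\typeInf}{1}{\stateinc\bneIndicator}$ both lie in $(0,1)$, I must verify the equality condition~\eqref{eq:bne_equality} for both highly-informed types. The key structural observation is that setting $\load_1^{\typeUninf\bneIndicator}=0$ decouples the two indifference equations, since neither type~$\typeInf$'s expected costs~\eqref{eq:exp_latency_typeHa}--\eqref{eq:exp_latency_typeHn} then depends on the other $\typeInf$-type's load.

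For type~$\typeInf\stateinc$, after substituting $\load_2^{\typeInf\stateinc}=\fracInf\totaldemand-\load_1^{\typeInf\stateinc}$ and $\load_2^{\typeUninf}=(1-\fracInf)\totaldemand$ into \eqref{eq:bne_equality} and multiplying through by $\prob{\typeInf\stateinc}$, the equation collapses to $(\widehat{\slope{1}{}}+\prob{\typeInf\stateinc}\slope{2}{})\load_1^{\typeInf\stateinc}=\prob{\typeInf\stateinc}\solConst$, yielding $\load_1^{\typeInf\stateinc\bneIndicator}=\regConst_2$. The analogous manipulation for type~$\typeInf\statenorm$ gives $\load_1^{\typeInf\statenorm\bneIndicator}=\regConst_3$. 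Dividing these loads by $\fracInf\totaldemand$ recovers the split fractions claimed in~\eqref{eq:bne_regime_4_rhos}.

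Next I would check feasibility. Positivity of both loads follows from $\solConst>0$, which is implied by the demand assumption~\eqref{eq:demand_assumption}. The upper bound $\splitfraction{\typeInf}{1}{\statenorm\bneIndicator}=\regConst_3/(\fracInf\totaldemand)\leq 1$ is equivalent to $\fracInf\geq\fracBound{3}$, matching the regime definition. The companion bound $\splitfraction{\typeInf}{1}{\stateinc\bneIndicator}\leq 1$ then follows once one shows $\regConst_2\leq\regConst_3$, which reduces to comparing the posterior mean route-1 slopes $\widehat{\slope{1}{}}/\prob{\typeInf\stateinc}$ and $\widetilde{\slope{1}{}}/\prob{\typeInf\statenorm}$; the desired ordering is a direct consequence of $\slope{1}{\stateinc}>\slope{1}{\statenorm}$ (from $(A\ref{ass:route_structure})_b$) together with $\accuracy^\typeInf>0.5$, since the signal~$\stateinc$ then raises the posterior probability of the higher-slope state.

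The main obstacle is verifying the strict inequality for type~$\typeUninf$. Expanding~\eqref{eq:exp_latency_typeL} with the derived loads and invoking the readily verified identity $\widehat{\slope{1}{}}+\widetilde{\slope{1}{}}=\bar{\slope{1}{}}$, the condition reduces after mechanical cancellation to $\prob{\typeInf\stateinc}\regConst_2+\prob{\typeInf\statenorm}\regConst_3\geq\regConst_1$. Remarkably, this inequality contains no $\fracInf$ at all, so it must hold structurally rather than by tuning the regime. Setting $A:=\widehat{\slope{1}{}}+\prob{\typeInf\stateinc}\slope{2}{}$ and $B:=\widetilde{\slope{1}{}}+\prob{\typeInf\statenorm}\slope{2}{}$, one has $A+B=\bar{\slope{1}{}}+\slope{2}{}$, and the inequality becomes $\prob{\typeInf\stateinc}^2/A+\prob{\typeInf\statenorm}^2/B\geq 1/(A+B)$. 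This is the Cauchy--Schwarz (Engel form) inequality applied to $(\prob{\typeInf\stateinc},\prob{\typeInf\statenorm})$ and $(A,B)$, and it is strict whenever $\prob{\typeInf\stateinc}/A\neq\prob{\typeInf\statenorm}/B$; under $(A\ref{ass:route_structure})_b$ and $\accuracy^\typeInf>0.5$ this strictness holds, completing the verification that \eqref{eq:bne_regime_4_rhos} is the BWE on Regime~$\regime{4}$.
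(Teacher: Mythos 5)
Your proposal is correct and follows the same route as the paper's proof: set $\load_{1}^{\typeUninf\bneIndicator}=0$, solve the two decoupled indifference conditions to obtain $\load_{1}^{\typeInf\statenorm\bneIndicator}=\regConst_3$ and $\load_{1}^{\typeInf\stateinc\bneIndicator}=\regConst_2$, and check feasibility of the loads against $\fracInf\geq\fracBound{3}$. The one place you go beyond the paper is the population-$\typeUninf$ condition $\prob{\typeInf\statenorm}\regConst_3+\prob{\typeInf\stateinc}\regConst_2>\regConst_1$: the paper simply asserts that it holds whenever $0<\pInc<1$ and $0.5<\accuracy^\typeInf\leq 1$, whereas your Engel-form Cauchy--Schwarz argument (using $A+B=\bar{\slope{1}{}}+\slope{2}{}$, with strictness because the posterior mean slopes $\widehat{\slope{1}{}}/\prob{\typeInf\stateinc}$ and $\widetilde{\slope{1}{}}/\prob{\typeInf\statenorm}$ differ under $(A\ref{ass:route_structure})_b$) actually proves it.
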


\begin{figure}[H]
\centering
\subfloat[Population $\typeInf$ receives signal $\statenorm$]{
\includegraphics[width=0.4\linewidth]{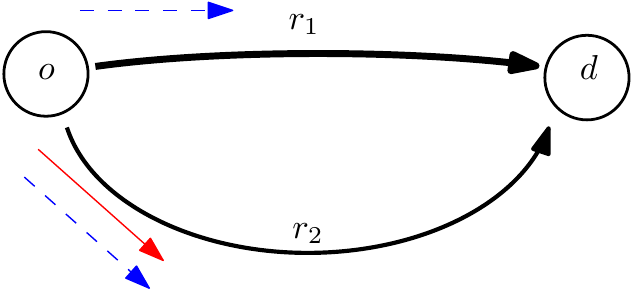}}\\
\subfloat[Population $\typeInf$ receives signal $\stateinc$]{\includegraphics[width=0.4\linewidth]{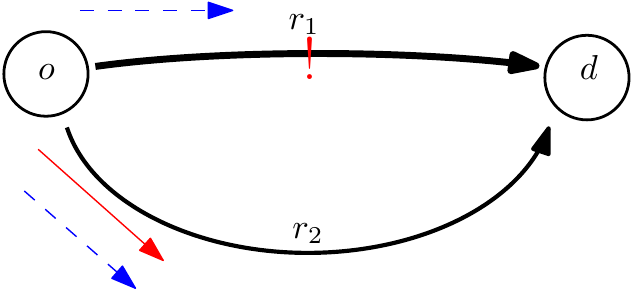}}
\caption{Illustration of the BWE in $\regime{4}$. Dashed blue represents population $\typeInf$, and solid red represents $\typeUninf$.}
\label{fig:reg4_cartoon} 
\end{figure}

\begin{proof}{Proof of Proposition \ref{prop:reg4equilibrium}.}

For the split fractions \eqref{eq:bne_regime_3_rhos} to be an equilibrium strategy, it must satisfy the Wardrop conditions: the expected cost of population $\typeUninf$ players must satisfy \eqref{eq:bne_ineq_1}, and both $\typeInf\statenorm$ and $\typeInf\stateinc$'s expected costs must satisfy \eqref{eq:bne_equality}. Since $\splitfraction{\typeUninf}{1}{\bneIndicator}=0$, we obtain $\load_{1}^{\typeUninf\bneIndicator} = 0$. Combining this with the Wardrop conditions yields:
\begin{align}
\load_{1}^{\typeInf\statenorm\bneIndicator} &= \regConst_3\nonumber \\
\load_{1}^{\typeInf\stateinc\bneIndicator} &= \regConst_2\nonumber \\
\prob{\typeInf\statenorm}\load_{1}^{\typeInf\statenorm\bneIndicator} + \prob{\typeInf\stateinc}\load_{1}^{\typeInf\stateinc\bneIndicator} &> \regConst_1 \label{eq:prop4_qH_cond}
\end{align}
The condition \eqref{eq:prop4_qH_cond} is satisfied if $0<\pInc<1$ and $0.5<\eta^H\leq1$, which is true by assumption. The equilibrium demands must be no greater than the total demand of that population, i.e. $\load_{1}^{\typeInf\statenorm\bneIndicator}, \load_{1}^{\typeInf\stateinc\bneIndicator} \leq \playerFrac{\typeInf} \totaldemand$. This is satisfied if $\fracInf \geq \fracBound{3}$, i.e. $(\pInc,\fracInf,\accuracy^\typeInf) \in \regime{4}$. \hfill \Halmos
\end{proof}

We find that the rest of the 27 possible strategy profiles are not equilibrium profiles. Some of these profiles are infeasible, such as profiles that result in negative loads on routes. Other strategies are strictly dominated. Let us argue that strategy profiles where $\splitfraction{\typeInf}{1}{\stateinc\bneIndicator}\geq\splitfraction{\typeInf}{1}{\statenorm\bneIndicator}$ are strictly dominated by those where $\splitfraction{\typeInf}{1}{\stateinc\bneIndicator}<\splitfraction{\typeInf}{1}{\statenorm\bneIndicator}$. Since population $\typeUninf$ always plays exactly the same strategy, population $\typeInf$ expects the same demand on $\routeindex_1$ due to the population $\typeUninf$ players. Additionally, population $\typeInf$ expects the slope of $\routeindex_1$ to be greater when they receive signal $\stateinc$, since they now know it's more likely that $\routeindex_1$ is in the incident state. Thus, their expected cost of routing the same demand on $\routeindex_1$ will be higher when they receive signal $\stateinc$ than when they receive signal $\statenorm$. Therefore strategies where $\splitfraction{\typeInf}{1}{\stateinc\bneIndicator}\geq\splitfraction{\typeInf}{1}{\statenorm\bneIndicator}$ are dominated by those where $\splitfraction{\typeInf}{1}{\stateinc\bneIndicator}<\splitfraction{\typeInf}{1}{\statenorm\bneIndicator}$.

We omit an exhaustive analysis of all 27 profiles for the sake of brevity. Table~\ref{tab:profile_summary} summarizes the strategy profiles and whether or not they can exist in equilibrium. Interestingly enough, we find that for all parameter values, at least one population type exclusively takes one route in equilibrium. That is, under the information structure assumed in $\bar{\belief}$, the game $\genGame_p$ does not admit an equilibrium where all types split their demand: there are no parameter values where all of $\splitfraction{\typeUninf}{1}{\bneIndicator}, \splitfraction{\typeInf}{1}{\statenorm\bneIndicator},\splitfraction{\typeInf}{1}{\stateinc\bneIndicator} \in (0,1)$. 

\begin{table}[h]
\centering
\subfloat[$\typeUninf$ play $\routeindex_1$ exclusively ($\splitfraction{\typeUninf}{1}{\bneIndicator} = 1$)]{
\begin{tabular}{l|c|c|c|}
    \diaghead(2,-1){$\strat_\typeInf(\stateinc)--$}%
    {$\splitfraction{\typeInf}{1}{\statenorm\bneIndicator}$}{$\splitfraction{\typeInf}{1}{\stateinc\bneIndicator}$}&  $=0$  & $\in (0,1) $& $= 1 $\\    \hline
    $=0$& \xmark& \xmark & \xmark\\ \hline
    $\in (0,1)$& \xmark &\xmark& \xmark\\ \hline
    $=1$&\xmark&\xmark&\xmark \\ \hline
  \end{tabular}}
\subfloat[$\typeUninf$ split between $\routeindex_1, \routeindex_2$ ($\splitfraction{\typeUninf}{1}{\bneIndicator} \in (0,1)$)]{
  \begin{tabular}{l|c|c|c|}
    \diaghead(2,-1){$\strat_\typeInf(\stateinc)--$}%
    {$\splitfraction{\typeInf}{1}{\statenorm\bneIndicator}$}{$\splitfraction{\typeInf}{1}{\stateinc\bneIndicator}$}&  $=0$  & $\in (0,1) $& $= 1 $\\    \hline
    $=0$& \xmark& \xmark & \xmark\\ \hline
    $\in (0,1)$& \xmark &\xmark& \xmark\\ \hline
    $=1$& $\regime{1}$ & $\regime{2}$ &\xmark \\ \hline
  \end{tabular}}
\subfloat[$\typeUninf$ play $\routeindex_2$ exclusively ($\splitfraction{\typeUninf}{1}{\bneIndicator} = 0$)]{
  \begin{tabular}{l|c|c|c|}
    \diaghead(2,-1){$\strat_\typeInf(\stateinc)--$}%
    {$\splitfraction{\typeInf}{1}{\statenorm\bneIndicator}$}{$\splitfraction{\typeInf}{1}{\stateinc\bneIndicator}$}&  $=0$  & $\in (0,1) $& $= 1 $\\    \hline
    $=0$& \xmark& \xmark & \xmark\\ \hline
    $\in (0,1)$& \xmark &$\regime{4}$& \xmark\\ \hline
    $=1$&\xmark&$\regime{3}$&\xmark \\ \hline
  \end{tabular}}
\caption{Strategy profiles for the Bayesian Congestion Game. Equilibrium strategy profiles are marked with their corresponding parameter regimes. Non-equilibrium profiles are marked with a cross.}
\label{tab:profile_summary}
\end{table}

% \begin{theorem}{[Existence and Uniqueness of equilibrium]}
% \label{thm:exist_unique_bne}

% There exists a unique equilibrium of $\genGame$ for every value of $\fracInf$ and $\pInc$.
% \end{theorem}
% \begin{proof}{Proof of Theorem \ref{thm:exist_unique_bne}}

% Propositions \ref{prop:reg1equilibrium} -- \ref{prop:reg4equilibrium} show that there exist four regimes where equilibrium exist. It can be shown that the union of these spans the entire $\fracInf,\pInc$ parameter space. Figure \ref{fig:bne_regions} illustrates these regimes. Therefore, an equilibrium exists for all values $\fracInf,\pInc$.

% The four regimes $\regime{1}$--$\regime{4}$ do not overlap, and furthermore, there are no other strategy profiles that are equilibrium (Proposition \ref{prop:dominated_strats}, Table \ref{tab:profile_summary}). Therefore the equilibrium is unique. \hfill \Halmos
% \end{proof}
\begin{table}
\centering
\begin{tabular}{r|l|l|l}
\hline
Quantity & symbol & value & units\\ \hline \hline
$\routeindex_1$ slope, state $\statenorm$ & $\slope{1}{\statenorm}$ & 1& min/(veh hr$^{-1}$)\\
$\routeindex_1$ slope, state $\stateinc$ & $\slope{1}{\stateinc}$ & 3 &min/(veh hr$^{-1}$)\\
$\routeindex_2$ slope & $\slope{2}{}$ & 2&min/(veh hr$^{-1}$)\\
$\routeindex_1$ intercept & $\intercept{1}$ & 19 & min\\
$\routeindex_2$ intercept & $\intercept{2}$ & 21 & min\\
Total demand & $\totaldemand$ & 5 & $10^3$ veh hr$^{-1}$\\
\hline
% Incident probability & $\pInc$ & 0.2\\
\end{tabular}
\caption{Parameter values for two-route network example.}
\label{tab:param_vals}
\end{table}

\begin{figure}[htb]
\centering
\subfloat[Equilibrium regimes in the $\fracInf-\pInc$ plane]{
\includegraphics[trim={2.5cm 0cm 2.5cm 1cm},clip,width=0.45\linewidth]{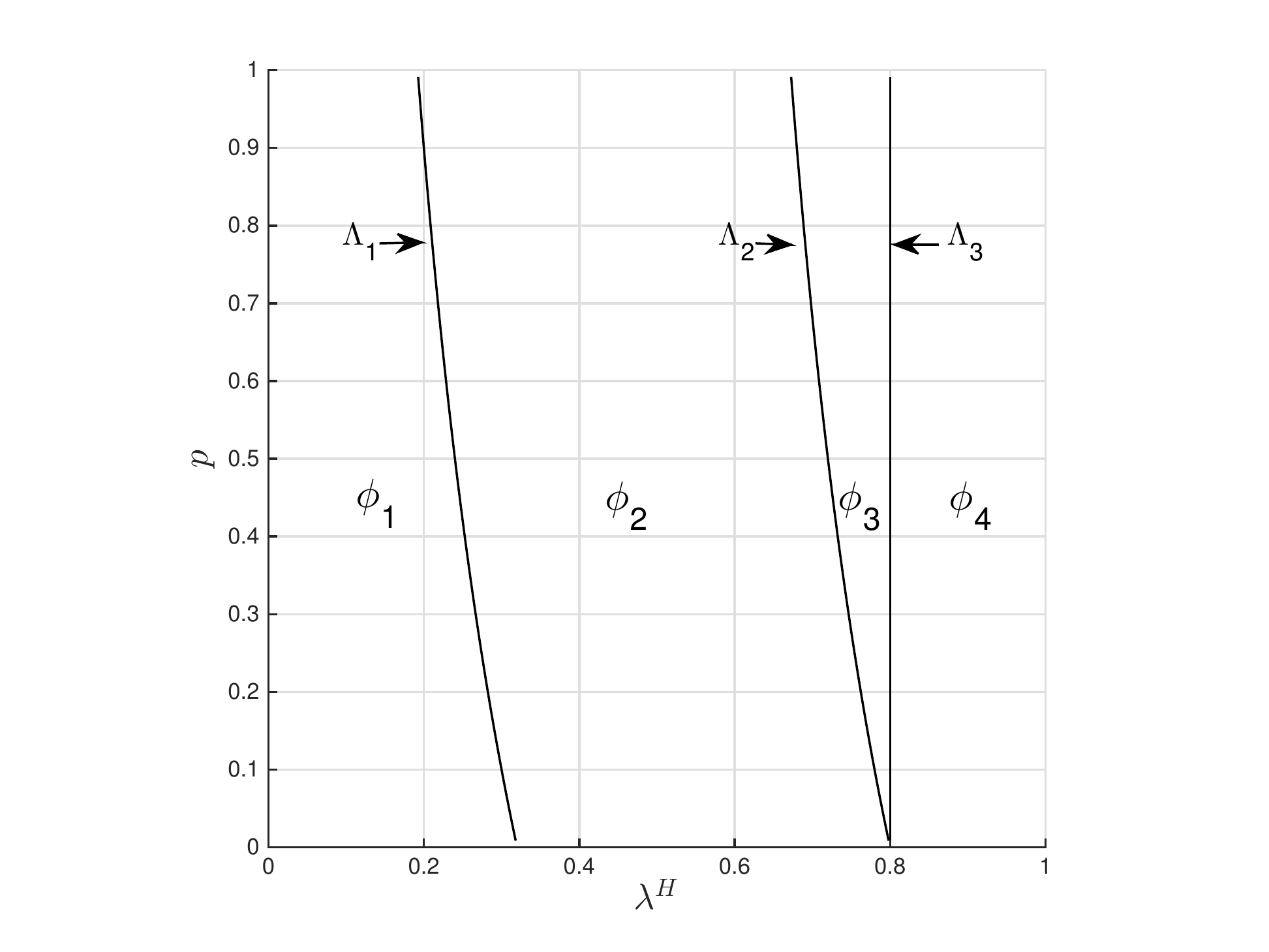}\label{fig:regBound_etaH1}}
\subfloat[Equilibrium split fractions vs. $\fracInf$, $\pInc=0.2$]{
\includegraphics[trim={2.5cm 0cm 2.5cm 1cm},clip,width=0.45\linewidth]{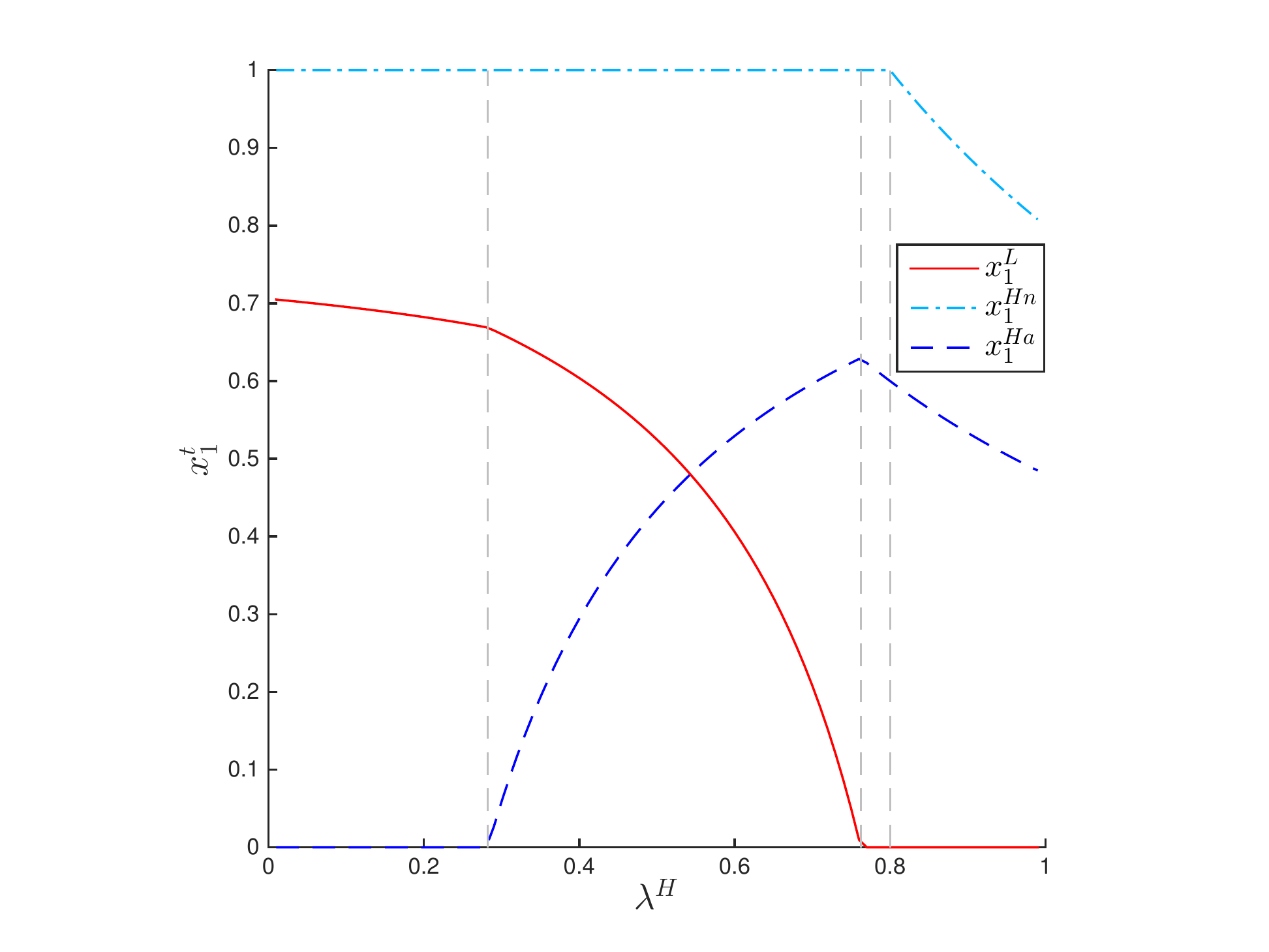}\label{fig:splitfrac_etaH1}}
\caption{Equilibrium regimes and split fractions for $\accuracy^\typeInf=1$.}
\label{fig:equilib_etaH_1} 
\end{figure}

\subsection{Remarks on Equilibrium Structure} % (fold)
\label{sub:equilibrium_interpretations}

We illustrate in Figure~\ref{fig:regBound_etaH1} the four equilibrium regimes under the assumption that population $\typeInf$ has perfectly accurate reports of the state, i.e. $\accuracy^\typeInf = 1$. Unless otherwise specified, the parameters for the two route network in our numerical examples take the values given in Table \ref{tab:param_vals}. Notably, the width of $\regime{1}$ decreases and the width of $\regime{3}$ increases as $\pInc$ increases. Additionally, the right boundary of $\regime{3}$, $\fracBound{3}$, is a constant with respect to incident probability, $\pInc$.

We illlustrate the equilibrium split fraction in all four regimes for each population and state signals in Figure~\ref{fig:splitfrac_etaH1}. In regime $\regime{1}$, the players of population $\typeUninf$ take both routes; in this numerical example, approximately 70\% of the population $\typeUninf$ players choose the first route, and the rest take the second route. Increasing the fraction of population $\typeInf$ players in this regime has a small effect of linearly decreasing the fraction of population $\typeUninf$ that chooses $\routeindex_1$. In regime $\regime{2}$, increasing the fraction of population $\typeInf$ has a more dramatic effect, and the population $\typeUninf$ players shift toward taking the second route, until in regimes $\regime{3}$ and $\regime{4}$, the population $\typeUninf$ players take route 2 exclusively.

For population $\typeInf$ players, their play depends on the signal received. When they receive the signal $\statenorm$, population $\typeInf$ players take route 1 exclusively for the first three regimes. Unlike population $\typeUninf$ players, they know that the first route will be uncongested and thus it will be less costly than taking the second route. It is only when the fraction of population $\typeInf$ players gets sufficiently high in regime $\regime{4}$, that population $\typeInf$ players have to start taking second route.

On the other hand, when population $\typeInf$ players receive the signal $\stateinc$, the equilibrium strategy in regime $\regime{1}$ is to exclusively take $\routeindex_2$. In this regime, the size of the population $\typeInf$ population is small enough that they can all take the second route and avoid the congestion due to the incident. However, in regime $\regime{2}$, population $\typeInf$ is large enough that it's no longer possible for its players to exclusively take $\routeindex_2$. This reflects the phenomenon of \emph{concentration} in the sense of \cite{ben1991dynamic}, where commuters that receive the same information all take the same route, increasing the cost. Thus, in regime $\regime{2}$, we observe that the split fraction increases as some population $\typeInf$ players take route $\routeindex_1$ even when they receive the signal $\stateinc$. In regimes $\regime{3}$--$\regime{4}$, the population $\typeInf$ players react to the population $\typeUninf$ players exclusively taking route 2; in these regimes, the split fraction decreases as $\fracInf$ increases.

We note that we recover the classical game formulations at the boundaries of our game. First, in the limiting case where $\fracInf=0$, we recover the classical imperfect information formulation of the game, where all players belong to population $\typeUninf$. The equilibrium split fraction for population $\typeUninf$ approaches $\regConst_1/\totaldemand$ as $\fracInf\rightarrow 0$, which is the Wardrop equilibrium strategy for the classical imperfect information case. Second, we recover the classical perfect information formulation of the game in the limiting case of $\fracInf = 1$ and $\accuracy^\typeInf = 1$, i.e., everyone belongs to population $\typeInf$ and receives perfectly accurate information. Under these conditions, the equilibrium split fractions for each of the player types $\splitfraction{\typeInf}{1}{\statenorm\bneIndicator},\splitfraction{\typeInf}{1}{\stateinc\bneIndicator}$ correspond to the Wardrop equilibrium of subgames for $\state=\statenorm$ and $\state=\stateinc$ respectively: taking the limits $\fracInf\rightarrow 1$ and $\accuracy^\typeInf \rightarrow 1$ yield $\splitfraction{\typeInf}{1}{\statenorm\bneIndicator} = \dfrac{\solConst}{\totaldemand(\slope{1}{\statenorm}+\slope{2}{})},$ and $\splitfraction{\typeInf}{1}{\stateinc\bneIndicator}= \dfrac{\solConst}{\totaldemand(\slope{1}{\stateinc}+\slope{2}{})}$.

% subsubsection  (end)

% \begin{figure}[htb]
% \centering
% \includegraphics[trim={1cm 6cm 1cm 6cm},clip,width=0.8\linewidth]{qs_n.pdf}
% \includegraphics[trim={1cm 6cm 1cm 6cm},clip,width=0.8\linewidth]{qs_a.pdf}
% \caption{Equilibrium demands in state $\statenorm$ (top) and state $\stateinc$ (bottom) respectively}
% \label{fig:eq_demands} 
% \end{figure}

% subsection equilibrium_interpretations (end)
%!TEX root = main.tex

\section{Value of Information}
\label{sec:cost_analysis}

In this section, we evaluate the equilibrium strategies presented in the previous section. First, we compute the equilibrium costs for each population in Section \ref{sub:equilibrium_costs}. Next, we analyze the individual value of information in Section \ref{sub:individual_value_of_information}. Finally, we examine the social value of information in Section \ref{sub:social_value_of_information}.

For simplicity and ease of presentation, we restrict our attention to information environments where $\accuracy^\typeInf = 1$ and $\accuracy^\typeUninf=0.5$, i.e. population $\typeInf$ players receive the exact state of nature, while population $\typeUninf$ players only know common knowledge. However, our analysis below easily generalizes to cases where $\accuracy^\typeUninf=0.5<\accuracy^\typeInf<1$. Recall that when $\accuracy^\typeInf=1$ and the fraction of players belonging to population $\typeInf$ equals unity ($\fracInf = 1)$, we recover the classical complete and perfect information game. Similarly, when the fraction of players belonging to population $\typeInf$ equals zero $(\fracInf = 0)$ we recover the classical imperfect information game. For notational ease, we denote the regime boundaries given in \eqref{eq:regimeBounds}, $\fracBoundSym_j(\accuracy^\typeInf=1,\pInc)$, as $\fracBoundEta{j}$, for $j \in \{1,2,3\}$.

For $\accuracy^\typeInf=1,$ the marginal type distributions \eqref{eq:prob_of_types} simplify to $\prob{\typeInf\stateinc} = \pInc $ and $
\prob{\typeInf\statenorm} = 1-\pInc$. The beliefs $\bar{\belief}$ from \eqref{eq:full_subj_beliefs_L} and \eqref{eq:full_subj_beliefs_H} simplify to:
\begin{align}
\begin{split}
\bar{\belief}^\typeUninf(\state=\stateinc,\playertype^\typeInf=\typeInf\stateinc|\playertype^\typeUninf=\typeUninf) &= \pInc^2,\\
\bar{\belief}^\typeUninf(\state=\stateinc,\playertype^\typeInf=\typeInf\statenorm|\playertype^\typeUninf=\typeUninf) &= \pInc(1-\pInc),\\
\bar{\belief}^\typeUninf(\state=\statenorm,\playertype^\typeInf=\typeInf\stateinc|\playertype^\typeUninf=\typeUninf) &= \pInc(1-\pInc),\\
\bar{\belief}^\typeUninf(\state=\statenorm,\playertype^\typeInf=\typeInf\statenorm|\playertype^\typeUninf=\typeUninf) &= (1-\pInc)^2,
\end{split}\label{eq:full_subj_beliefs_L_simplified}
\end{align}
for population $\typeUninf$, and
\begin{align}
\begin{split}
\bar{\belief}^\typeInf(\state = \stateinc,\playertype^\typeUninf = \typeUninf | \playertype^\typeInf = \typeInf\stateinc) &= 1,\\
\bar{\belief}^\typeInf(\state = \statenorm,\playertype^\typeUninf = \typeUninf| \playertype^\typeInf = \typeInf\stateinc) &= 0,\\
\bar{\belief}^\typeInf(\state = \stateinc,\playertype^\typeUninf = \typeUninf | \playertype^\typeInf = \typeInf\statenorm) &= 0,\\
\bar{\belief}^\typeInf(\state = \statenorm,\playertype^\typeUninf = \typeUninf | \playertype^\typeInf = \typeInf\statenorm) &= 1,
\end{split}\label{eq:full_subj_beliefs_H_simplified}
\end{align}
for population $\typeInf$. Additionally, the quantities defined in \eqref{eq:regimeBounds} can be particularized as follows:
\begin{align*}
\tilde{\slope{1}{}} &= (1-\pInc)\slope{1}{\statenorm}\\
\widehat{\slope{1}{}} &= \pInc\slope{1}{\stateinc}\\
\regConst_1 &= \frac{\solConst}{\bar{\slope{1}{}}+\slope{2}{}}\\
\regConst_2 &= \frac{\solConst}{\slope{1}{\stateinc}+\slope{2}{}}\\
\regConst_3 &= \frac{\solConst}{\slope{1}{\statenorm}+\slope{2}{}}.
\end{align*}

% Plugging in the values from the information services' accuracies and the common prior to population $\typeUninf$'s belief give us:
% \begin{align*}
% \belief_s^{\typeUninf}(\playertype^\typeInf = \typeInf\stateinc|\playertype^\typeUninf = \typeUninf) &= \pInc,\quad
% \belief_s^{\typeUninf}( \playertype^\typeInf = \typeInf\statenorm|\playertype^\typeUninf = \typeUninf ) = 1 - \pInc
% \end{align*}

\subsection{Equilibrium costs} % (fold)
\label{sub:equilibrium_costs}
Utilizing the equilibrium strategy distributions from Propositions \ref{prop:reg1equilibrium}--\ref{prop:reg4equilibrium}, we obtain the following expressions for the equilibrium cost, $\costVar_{\staterv}^{\service\bneIndicator}$, for a player of population $\service$ in each state, as given by \eqref{eq:state_type_cost_form}:
\begin{align}
\costVar^{\typeUninf\bneIndicator}_\statenorm &=
\begin{cases}
  \left(\dfrac{\regConst_1}{(1-\fracInf)\totaldemand}-\dfrac{(1-\pInc)\fracInf}{1-\fracInf}\right)\left(\slope{1}{\statenorm}\left(\regConst_1 + \pInc \fracInf\totaldemand\right)+\intercept{1}\right)\\
  +\left(1-\left(\dfrac{\regConst_1}{(1-\fracInf)\totaldemand}-\dfrac{(1-\pInc)\fracInf}{1-\fracInf}\right)\right)\left(\slope{2}{} \left(\totaldemand - (\regConst_\typeUninf^\rom{1} + \pInc \fracInf\totaldemand) \right) +\intercept{2}\right), &\text{ in } \regime{1}\\
  \left(\dfrac{\regConst_4}{(1-\fracInf)\totaldemand} - \dfrac{\fracInf}{1-\fracInf} \right)\left(\slope{1}{\statenorm}\left(\regConst_4\right) + \intercept{1} \right)\\
  + \left(1 - \left(\dfrac{\regConst_4}{(1-\fracInf)\totaldemand}-\dfrac{\fracInf}{1-\fracInf}\right) \right)\left(\slope{2}{}\left(\totaldemand - \regConst_4\right) + \intercept{2} \right) , &\text{ in } \regime{2}\\
  \slope{2}{}(1- \fracInf)\totaldemand+\intercept{2}, &\text{ in } \regime{3}\\
  \slope{2}{}\left(\totaldemand-\regConst_3\right)+\intercept{2}, &\text{ in } \regime{4}
\end{cases}\label{eq:cost_L_n}\\
\costVar^{\typeUninf\bneIndicator}_\stateinc &=
\begin{cases}
  \left(\dfrac{\regConst_1}{(1-\fracInf)\totaldemand}-\dfrac{(1-\pInc)\fracInf}{1-\fracInf}\right)\left(\slope{1}{\stateinc}\left(\regConst_1-(1-\pInc)\fracInf\totaldemand \right)+\intercept{1}\right)\\ 
  + \left(1- \left(\dfrac{\regConst_1}{(1-\fracInf)\totaldemand}-\dfrac{(1-\pInc)\fracInf}{1-\fracInf}\right)\right)\left(\slope{2}{} \left(\totaldemand - \left(\regConst_1-(1-\pInc)\fracInf\totaldemand \right)\right)+\intercept{2}\right), &\text{ in } \regime{1}\\
  \slope{1}{\stateinc}\left(\regConst_2\right) + \intercept{1}
  , &\text{ in } \regime{2},\regime{3},\regime{4}
\end{cases}\label{eq:cost_L_a}\\
\costVar^{\typeInf\bneIndicator}_\statenorm &= 
\begin{cases}
  \slope{1}{\statenorm}\left(\regConst_1 + p\fracInf\totaldemand\right) + \intercept{1}, &\text{ in }\regime{1}\\
  \slope{1}{\statenorm}\left(\regConst_4\right) + \intercept{1}, &\text{ in }\regime{2}\\
  \slope{1}{\statenorm}\left(\fracInf\totaldemand\right)+\intercept{1}, &\text{ in }\regime{3}\\
  \slope{1}{\statenorm}\left(\regConst_3 \right) + \intercept{1} , &\text{ in }\regime{4}\\
\end{cases}\label{eq:cost_Hn}\\
\costVar^{\typeInf\bneIndicator}_\stateinc &= \begin{cases}
  \slope{2}{}\left(\totaldemand - \left(\regConst_1 - (1-\pInc)\fracInf\totaldemand\right)\right)+\intercept{2}, &\text{ in } \regime{1}\\
  \slope{1}{\stateinc}\left(\regConst_2\right) + \intercept{1}, &\text{ in } \regime{2},\regime{3},\regime{4}
\end{cases}\label{eq:cost_Ha}
\end{align}
where $\regConst_4:=\regConst_2\left(1+\dfrac{\slope{1}{\stateinc}-\slope{1}{\statenorm}}{\bar{\slope{1}{}}+\slope{2}{}}\right)$.

The expressions for the expected population-dependent cost ($\costVar^{\service\bneIndicator}$), the state-dependent social cost ($\costVar^\bneIndicator_\staterv$), and the expected social cost ($\bar{\costVar}^\bneIndicator$) can be computed by plugging the expressions \eqref{eq:cost_L_n}--\eqref{eq:cost_Ha} into \eqref{eq:exp_pop_cost}, \eqref{eq:state_soc_cost}, and \eqref{eq:exp_soc_cost}, respectively.

% Recall that the social optimum cost is the minimum cost for a given set of parameters. The price of anarchy (PoA) for the Bayesian congestion game is the aggregate equilibrium cost normalized by the social optimum cost. We calculated the social optimum for our choice of parameters in each state. In our plots, we normalize all costs by the respective socially optimal cost. This allows us to compare the individual and social effect of information on welfare against a ``best-case'' benchmark.

% subsubsection type_and_state_dependent_cost (end)

\subsection{Individual value of information} % (fold)
\label{sub:individual_value_of_information}
We now analyze the relative individual value of the information environment $(\pInc,\fracInf,\accuracy^\typeInf=1,\accuracy^\typeUninf=0.5)$ and its relationship with $\fracInf$. 

First, we present two useful lemmas. The first lemma provides that if the players in population $\typeInf$ exclusively choose one route in a given state, then the realized cost for a population $\typeInf$ player in that state is less than that of a population $\typeUninf$ player:

\begin{lemma}
\label{lem:ineq_equilib_cost}
Consider the information environment $(\pInc,\fracInf,\accuracy^\typeInf=1,\accuracy^\typeUninf=0.5)$. In a given state $\staterv\in\{\stateinc,\statenorm\}$, if $\splitfraction{\playertype^\typeInf}{1}{\bneIndicator} = 0$ or $1$, then $\costVar_{\staterv}^{\typeUninf\bneIndicator} > \costVar_{\staterv}^{\typeInf\bneIndicator}$, and the relative individual value of information in that state, $\valueVar_\staterv$, is positive.
\end{lemma}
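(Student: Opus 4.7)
The plan is to exploit the fact that with $\accuracy^\typeInf=1$, population $\typeInf$'s type is perfectly correlated with the realized state: in state $\staterv$ every member of population $\typeInf$ has type $\typeInf\staterv$. The hypothesis $\splitfraction{\playertype^\typeInf}{1}{\bneIndicator}\in\{0,1\}$ therefore says population $\typeInf$ plays a pure strategy in state $\staterv$, so its realized cost is just the latency on its chosen route, while population $\typeUninf$'s state-independent strategy makes its realized cost in state $\staterv$ a convex combination of the two realized route latencies. The lemma will follow by showing (a) the route population $\typeInf$ chose is strictly cheaper in state $\staterv$, and (b) population $\typeUninf$ places strictly positive weight on the costlier route.

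By Propositions~\ref{prop:reg1equilibrium}--\ref{prop:reg4equilibrium}, the hypothesis holds in exactly two configurations: $\splitfraction{\typeInf}{1}{\statenorm\bneIndicator}=1$ with $\staterv=\statenorm$ (in $\regime{1}\cup\regime{2}\cup\regime{3}$), and $\splitfraction{\typeInf}{1}{\stateinc\bneIndicator}=0$ with $\staterv=\stateinc$ (in $\regime{1}$). Let $L_\staterv$ denote the equilibrium total load on route~$1$ in state $\staterv$. Because route costs are affine and the two route loads sum to $\totaldemand$, one checks that $\latency{1}{\statenorm}{L_\statenorm}<\latency{2}{}{\totaldemand-L_\statenorm}\Leftrightarrow L_\statenorm<\regConst_3$ and $\latency{1}{\stateinc}{L_\stateinc}>\latency{2}{}{\totaldemand-L_\stateinc}\Leftrightarrow L_\stateinc>\regConst_2$. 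Note that $\regConst_3$ and $\regConst_2$ are precisely the full-information Wardrop loads on route~$1$ in states $\statenorm$ and $\stateinc$, so the task reduces to a single scalar comparison in each regime.

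For the first configuration I verify $L_\statenorm<\regConst_3$ in each of the three regimes. In $\regime{1}$, this is a restatement of the upper bound $\load_{1}^{\typeUninf\bneIndicator}<\regConst_3-\fracInf\totaldemand$ established in the proof of Proposition~\ref{prop:reg1equilibrium}. In $\regime{3}$, $L_\statenorm=\fracInf\totaldemand$ and the defining condition $\fracInf<\fracBound{3}=\regConst_3/\totaldemand$ supplies the bound. In $\regime{2}$, substituting the Proposition~\ref{prop:reg2equilibrium} formulas with $\accuracy^\typeInf=1$ and collapsing the numerator yields $L_\statenorm=\regConst_4$, reducing the claim to $\regConst_4<\regConst_3$; expanding $\regConst_4=\regConst_2\!\left(1+(\slope{1}{\stateinc}-\slope{1}{\statenorm})/(\bar{\slope{1}{}}+\slope{2}{})\right)$ and cross-multiplying simplifies the difference to $-\pInc(\slope{1}{\stateinc}-\slope{1}{\statenorm})^2<0$, which is strict by Assumption~\ref{ass:route_structure}. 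For the second configuration, $L_\stateinc=\load_{1}^{\typeUninf\bneIndicator}$ (since $\splitfraction{\typeInf}{1}{\stateinc\bneIndicator}=0$ kills the population~$\typeInf$ contribution), and the lower bound $\load_{1}^{\typeUninf\bneIndicator}>\regConst_2$ from the proof of Proposition~\ref{prop:reg1equilibrium} is immediate.

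Part~(b) follows directly from the explicit split-fraction formulas: $\splitfraction{\typeUninf}{1}{\bneIndicator}<1$ throughout $\regime{1}\cup\regime{2}\cup\regime{3}$ by inspection of Propositions~\ref{prop:reg1equilibrium}--\ref{prop:reg3equilibrium}, and $\splitfraction{\typeUninf}{1}{\bneIndicator}>0$ in $\regime{1}$ follows from $\load_{1}^{\typeUninf\bneIndicator}>\regConst_2>0$. Combining~(a) and~(b) gives $\costVar_\staterv^{\typeUninf\bneIndicator}>\costVar_\staterv^{\typeInf\bneIndicator}$, and positivity of $\valueVar_\staterv$ is then the definition~\eqref{eq:rel_ind_val_inf_state}. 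I expect the main obstacle to be the $\regime{2}$ algebra---first identifying $L_\statenorm=\regConst_4$ from the Proposition~\ref{prop:reg2equilibrium} loads and then proving $\regConst_4<\regConst_3$ via the symbolic identity above---whereas the $\regime{1}$ and $\regime{3}$ cases, and the $\staterv=\stateinc$ case in $\regime{1}$, reduce to direct substitutions of bounds already established in the equilibrium proofs.
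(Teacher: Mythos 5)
Your proof is correct, but it takes a genuinely different and much more computational route than the paper's. The paper argues abstractly: if population $\typeInf$ plays a pure strategy in state $\staterv$, the BWE conditions \eqref{eq:bne_ineq_0}--\eqref{eq:bne_ineq_1} already assert that the \emph{expected} cost of the chosen route is strictly below that of the other route; since $\accuracy^\typeInf=1$ makes population $\typeInf$'s expectation coincide with the realized cost, the chosen route is realized-cheaper with no computation at all; the demand assumption \eqref{eq:demand_assumption} then forces some population-$\typeUninf$ mass onto the costlier route, giving $\costVar_\staterv^{\typeUninf\bneIndicator}>\costVar_\staterv^{\typeInf\bneIndicator}$. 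You instead re-derive the realized-cost inequality regime by regime from the explicit equilibrium loads, which requires the load-threshold equivalences $L_\statenorm<\regConst_3$ and $L_\stateinc>\regConst_2$ and, in $\regime{2}$, the identity $L_\statenorm=\regConst_4$ together with $\regConst_4<\regConst_3\Leftrightarrow -\pInc(\slope{1}{\stateinc}-\slope{1}{\statenorm})^2<0$ --- all of which check out, but which duplicate work already embedded in the Wardrop conditions of Propositions~\ref{prop:reg1equilibrium}--\ref{prop:reg4equilibrium}. The trade-off: the paper's argument is shorter, regime-free, and would survive a change of cost functions, while yours yields explicit quantitative margins and serves as a consistency check on the cost formulas \eqref{eq:cost_L_n}--\eqref{eq:cost_Ha}. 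One small gloss in your part~(b): $\splitfraction{\typeUninf}{1}{\bneIndicator}<1$ in $\regime{1}$ and $\regime{2}$ is not literally ``by inspection'' of the formulas --- it needs the demand assumption, e.g.\ via $L_\statenorm<\regConst_3<\totaldemand$ (the last inequality being exactly \eqref{eq:demand_assumption}), which is the same place the paper's proof invokes it.
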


\begin{proof}{Proof of lemma \ref{lem:ineq_equilib_cost}.}
In order for routing all demand through one route ($\splitfraction{\playertype^\typeInf}{1}{\bneIndicator} = 0$ or $1$) to be an equilibrium action for population $\typeInf$, the expected cost of the route chosen must be less than the other route. Since $\accuracy^\typeInf=1$, the signal that population $\typeInf$ players receive will always be the true state of nature, i.e. they will always receive signal $\statenorm$ (resp. $\stateinc$) when the true state is $\statenorm,$ (resp. $\stateinc$). Therefore, the expected cost calculated by population $\typeInf$ players coincides with the realized cost. Thus, in a given state, the realized cost of routing all of population $\typeInf$'s demand through the chosen route is less than the realized cost of taking the other route.

Since we assume that the demand is sufficiently high (see \eqref{eq:demand_assumption}), there is no equilibrium where all players take the same route. Thus, there must be some players of population $\typeUninf$ on the route not chosen by population $\typeInf$; these population $\typeUninf$ players necessarily receive a higher cost in equilibrium. Therefore, $\costVar^{\typeUninf\bneIndicator}_\staterv > \costVar^{\typeInf\bneIndicator}_\staterv$, and by \eqref{eq:ind_val_info_struct_state}, we obtain that the relative individual value of information in that state is positive.  \hfill \Halmos

% $\mathbb{E}_{\bar{\belief}}[\latency{\routeindex}{\staterv}{\load_\routeindex^{\playertype^\typeInf} + \load_\routeindex^{\playertype^\typeUninf}}|\playertype^\typeInf=\typeInf\statenorm] = \latency{\routeindex}{\statenorm}{\load_\routeindex^{\playertype^\typeInf} + \load_\routeindex^{\playertype^\typeUninf}}.$ Similarly, we can show 
\end{proof}

From the second lemma, we obtain that that if the population $\typeInf$ players split their demand along both routes in a given state, then the realized cost for a population $\typeInf$ player is equal to that of a population $\typeUninf$ player.

\begin{lemma}
\label{lem:eq_equilib_cost}
Consider the information environment $(\pInc,\fracInf,\accuracy^\typeInf=1,\accuracy^\typeUninf=0.5)$. In a given state $\staterv\in\{\stateinc,\statenorm\}$, if $\splitfraction{\playertype^\typeInf}{1}{\bneIndicator}\in(0,1)$, then the realized equilibrium cost in that state for a population $\typeUninf$ player is equal to that of a population $\typeInf$ player, i.e. $\costVar^{\typeUninf\bneIndicator}_\staterv = \costVar^{\typeInf\bneIndicator}_\staterv$. Therefore, the relative individual value of information in that state, $\valueVar_\staterv$, is zero. 
\end{lemma}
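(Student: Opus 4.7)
The plan is to exploit the fact that $\accuracy^\typeInf=1$ collapses population $\typeInf$'s interim belief to a point mass on the true state, so that population $\typeInf$'s expected route cost coincides with the realized route cost in the corresponding state. Concretely, by \eqref{eq:full_subj_beliefs_H_simplified}, when the true state is $\staterv$, population $\typeInf$ receives signal $\staterv$ almost surely and the interim belief $\bar{\belief}^\typeInf(\cdot|\playertype^\typeInf=\typeInf\staterv)$ places probability one on $\{\staterv\}\times\{\typeUninf\}$.

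Next, I would invoke the BWE indifference condition \eqref{eq:bne_equality}: since $\splitfraction{\typeInf}{1}{\staterv\bneIndicator}\in(0,1)$, type $\typeInf\staterv$'s expected costs on the two routes coincide. Combined with the point-mass belief above, this indifference reduces to equality of the \emph{realized} route costs in state $\staterv$:
\begin{align*}
\latency{1}{\staterv}{\load_1^{\typeInf\staterv\bneIndicator}+\load_1^{\typeUninf\bneIndicator}} = \latency{2}{\staterv}{\load_2^{\typeInf\staterv\bneIndicator}+\load_2^{\typeUninf\bneIndicator}}.
\end{align*}
Denote this common value by $\bar{\latencysymb}_\staterv$.

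Finally, I would evaluate the realized state-conditional cost \eqref{eq:state_type_cost_form} for each population. Because both routes yield the same realized cost $\bar{\latencysymb}_\staterv$ in state $\staterv$, any convex combination of the two route costs equals $\bar{\latencysymb}_\staterv$. In particular, $\costVar^{\typeInf\bneIndicator}_\staterv = \splitfraction{\typeInf}{1}{\staterv\bneIndicator}\bar{\latencysymb}_\staterv + \splitfraction{\typeInf}{2}{\staterv\bneIndicator}\bar{\latencysymb}_\staterv = \bar{\latencysymb}_\staterv$, and likewise $\costVar^{\typeUninf\bneIndicator}_\staterv = \splitfraction{\typeUninf}{1}{\bneIndicator}\bar{\latencysymb}_\staterv + \splitfraction{\typeUninf}{2}{\bneIndicator}\bar{\latencysymb}_\staterv = \bar{\latencysymb}_\staterv$, regardless of how population $\typeUninf$ splits. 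Hence $\costVar^{\typeUninf\bneIndicator}_\staterv = \costVar^{\typeInf\bneIndicator}_\staterv$, and the definition \eqref{eq:rel_ind_val_inf_state} immediately gives $\valueVar_\staterv(\pInc,\fracInf,1,0.5)=0$.

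There is no substantive obstacle here; the only subtlety is recognizing that the BWE equality condition for a \emph{mixing} informed player is a statement about ex-post route costs once the perfect-accuracy belief is applied, which is precisely what forces the uninformed players' convex combination of route costs to collapse onto the same value.
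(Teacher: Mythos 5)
Your proposal is correct and follows essentially the same route as the paper's proof: perfect accuracy makes population $\typeInf$'s expected route costs coincide with realized costs, the mixing condition \eqref{eq:bne_equality} then forces the realized costs of the two routes to be equal in that state, and hence every player's realized cost equals that common value regardless of split. The only difference is that you spell out the convex-combination step in \eqref{eq:state_type_cost_form} explicitly, which the paper leaves implicit.
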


\begin{proof}{Proof of Lemma \ref{lem:eq_equilib_cost}.}

In order for $\splitfraction{\playertype^\typeInf}{1}{\bneIndicator}\in(0,1)$ to be an equilibrium action for population $\typeInf$, the expected cost on both routes must be equal. Recall that since $\accuracy^\typeInf=1$, the signal that population $\typeInf$ players receive will always coincide with the true state of nature, and thus the expected cost will be equal to the realized cost on both routes. Therefore, in that state, the realized equilibrium costs on both routes are equal, and players of either population receive the same cost, i.e. $\costVar^{\typeUninf\bneIndicator}_\staterv = \costVar^{\typeInf\bneIndicator}_\staterv$. Following \eqref{eq:rel_ind_val_inf_state}, the relative individual value of information for that state is zero. \hfill \Halmos
\end{proof}

We present in Theorem \ref{thm:ind_val_of_info} our results on the relative expected individual value of information for the information environment $(\pInc,\fracInf,\accuracy^\typeInf=1,\accuracy^\typeUninf=0.5)$. We find that the relative expected value, $\valueVar$, is positive when $\fracInf < \fracBound{3}$, and zero otherwise. In other words, in parameter regimes $\regime{1}$--$\regime{3}$, a population $\typeInf$ player will always have a lower expected equilibrium cost than a population $\typeUninf$ player, and in regime $\regime{4}$, all players will have the same expected equilibrium costs. Formally:

%%%%%%%%%%%%%Theorem individual value of info%%%%%%%%%%%%
\begin{theorem}{}
\label{thm:ind_val_of_info}

For a given $\pInc$ and $\accuracy^\typeInf=1$, if $\fracInf < \fracBoundEta{3}$, then the relative expected individual value of information, $\valueVar>0$. If $\fracInf \geq \fracBoundEta{3}$, then $\valueVar = 0$.
\end{theorem}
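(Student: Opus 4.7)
\textbf{Proof plan for Theorem \ref{thm:ind_val_of_info}.}
The plan is to reduce the claim to a state-by-state application of Lemmas~\ref{lem:ineq_equilib_cost} and~\ref{lem:eq_equilib_cost}. First I would chain together the definitions \eqref{eq:rel_ind_val_inf}, \eqref{eq:ind_val_info_struct_exp}, \eqref{eq:exp_pop_cost}, \eqref{eq:rel_ind_val_inf_state}, \eqref{eq:ind_val_info_struct_state} and the state distribution \eqref{eq:state_prob_dist} to obtain the decomposition
\[
\valueVar(\pInc,\fracInf,1,0.5)=(1-\pInc)\,\valueVar_\statenorm(\pInc,\fracInf,1,0.5)+\pInc\,\valueVar_\stateinc(\pInc,\fracInf,1,0.5).
\]
Since $\pInc\in(0,1)$, both weights are strictly positive, so it suffices to analyze the sign of each $\valueVar_\staterv$ separately and to identify for each of the four regimes which of population~$\typeInf$'s type-dependent split fractions are extremal (pure) versus interior.

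For $\fracInf\geq\fracBoundEta{3}$ (regime~$\regime{4}$), I would invoke Proposition~\ref{prop:reg4equilibrium}: both $\splitfraction{\typeInf}{1}{\statenorm\bneIndicator}=\regConst_3/(\fracInf\totaldemand)$ and $\splitfraction{\typeInf}{1}{\stateinc\bneIndicator}=\regConst_2/(\fracInf\totaldemand)$ lie strictly inside $(0,1)$ (positivity follows from $\solConst>0$ under \eqref{eq:demand_assumption}, and the upper bound is exactly the regime condition $\fracInf\geq\fracBoundEta{3}$). Because population~$\typeInf$ splits its demand in both states, Lemma~\ref{lem:eq_equilib_cost} yields $\valueVar_\statenorm=\valueVar_\stateinc=0$, and hence $\valueVar=0$.

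For $\fracInf<\fracBoundEta{3}$ (regimes~$\regime{1}$, $\regime{2}$, $\regime{3}$), Propositions~\ref{prop:reg1equilibrium}--\ref{prop:reg3equilibrium} all prescribe $\splitfraction{\typeInf}{1}{\statenorm\bneIndicator}=1$, so population~$\typeInf$ takes route~$\routeindex_1$ exclusively whenever it receives signal~$\statenorm$. Since $\accuracy^\typeInf=1$, that signal is received precisely when the true state is~$\statenorm$, so Lemma~\ref{lem:ineq_equilib_cost} yields $\valueVar_\statenorm>0$. In state~$\stateinc$, the same propositions give $\splitfraction{\typeInf}{1}{\stateinc\bneIndicator}=0$ in regime~$\regime{1}$ (pure, so Lemma~\ref{lem:ineq_equilib_cost} gives $\valueVar_\stateinc>0$) and $\splitfraction{\typeInf}{1}{\stateinc\bneIndicator}\in(0,1)$ in regimes~$\regime{2}$ and~$\regime{3}$ (split, so Lemma~\ref{lem:eq_equilib_cost} gives $\valueVar_\stateinc=0$). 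In all three cases $\valueVar_\stateinc\geq 0$, so adding $(1-\pInc)\valueVar_\statenorm>0$ and $\pInc\,\valueVar_\stateinc\geq 0$ gives $\valueVar>0$.

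The argument is essentially bookkeeping, so the main obstacle is not mathematical depth but rather carefully cross-referencing which proposition governs each regime and verifying that the interior-versus-extremal status of each population-$\typeInf$ split fraction matches the hypothesis of the appropriate lemma. The most delicate step will be confirming strict interiority of both $\splitfraction{\typeInf}{1}{\statenorm\bneIndicator}$ and $\splitfraction{\typeInf}{1}{\stateinc\bneIndicator}$ throughout regime~$\regime{4}$, which reduces to checking the boundary definitions in \eqref{eq:regimeBounds} once specialized to $\accuracy^\typeInf=1$.
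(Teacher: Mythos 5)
Your proposal is correct and follows essentially the same route as the paper's proof: identify, regime by regime via Propositions \ref{prop:reg1equilibrium}--\ref{prop:reg4equilibrium}, whether population $\typeInf$'s split fraction is extremal or interior in each state, then apply Lemma \ref{lem:ineq_equilib_cost} or Lemma \ref{lem:eq_equilib_cost} accordingly and combine the states (your explicit decomposition $\valueVar=(1-\pInc)\valueVar_\statenorm+\pInc\valueVar_\stateinc$ is the same averaging the paper uses implicitly). The only difference is that you flag the strict-interiority check at the regime-$\regime{4}$ boundary, a detail the paper glosses over equally.
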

%%%%%%%

\begin{proof}{Proof of Theorem \ref{thm:ind_val_of_info}}

To show that the relative expected value of information is positive in $\regime{1}$ -- $\regime{3}$, we must show that the relative value of information is positive in one state and zero in the other, or positive in both states. By Lemma \ref{lem:ineq_equilib_cost} it suffices to show that population $\typeInf$ players play $\splitfraction{\playertype^\typeInf}{1}{\bneIndicator}\in\{0,1\}$ in one state, and $\splitfraction{\playertype^\typeInf}{1}{\bneIndicator} \in (0,1)$ in the other, or $\splitfraction{\playertype^\typeInf}{1}{\bneIndicator}\in\{0,1\}$ in both states. 

To show that the relative expected value of information is zero in $\regime{4}$, we must show that the the relative value of information is zero in both states. By Lemma \ref{lem:eq_equilib_cost} it suffices to show that players in population $\typeInf$ play $\splitfraction{\playertype^\typeInf}{1}{\bneIndicator} \in (0,1)$ for both states.

% Recall that the individual value of information is the difference in realized equilibrium costs between the type $\typeUninf$ players and population $\typeInf$ players. It represents the marginal cost improvement that a less informed player could achieve if they could become type $\typeInf$.

% From lemma \ref{lem:eq_equilib_cost}, it follows that in any regime where $0<\splitfraction{\playertype^\typeInf}{1}{\bneIndicator}<1$, the realized equilibrium costs for a given state on both routes will be equal, and thus $\valueVar_{ind} = 0$. Additionally, lemma \ref{lem:ineq_equilib_cost} implies that in any regime where $\splitfraction{\playertype^\typeInf}{1}{\bneIndicator}\in\{0,1\}$, the cost for population $\typeInf$ players in that state will be less than the cost for type $\typeUninf$ players, and thus $\valueVar_{ind} > 0$. 

In regime $\regime{1}$: the equilibrium split fractions for population $\typeInf$ players are $\splitfraction{\typeInf}{1}{\statenorm\bneIndicator} = 1$ and $\splitfraction{\typeInf}{1}{\stateinc\bneIndicator} = 0$ (Proposition \ref{prop:reg1equilibrium}). By Lemma \ref{lem:ineq_equilib_cost}, in this regime the relative individual value of information is positive in both states. It follows that the relative expected individual value of information is also positive. 

In regimes $\regime{2},\regime{3}$: Propositions \ref{prop:reg2equilibrium} and \ref{prop:reg3equilibrium} state that $\splitfraction{\typeInf}{1}{\statenorm\bneIndicator} = 1$, and $\splitfraction{\typeInf}{1}{\stateinc\bneIndicator}\in(0,1)$. Therefore, the relative individual value is positive in state $\statenorm$ $(\valueVar_\statenorm > 0)$, and zero in state $\stateinc$ $(\valueVar_\stateinc=0)$. It follows that the relative expected value of information is positive in these regimes. 

In regime $\regime{4}$: the equilibrium play of type $\typeInf$ for both states is to split, i.e. $\splitfraction{\typeInf}{1}{\statenorm\bneIndicator},\splitfraction{\typeInf}{1}{\stateinc\bneIndicator}\in(0,1)$, see Proposition \ref{prop:reg4equilibrium}. Therefore, by Lemma \ref{lem:eq_equilib_cost}, the realized equilibrium costs must be equal for all players in both states, thus the relative expected individual value of information is zero in this regime. \hfill \Halmos
\end{proof}

We illustrate the results of this theorem in Figures~\ref{fig:type_costs_state} and \ref{fig:type_dep_exp_cost} with numerical results generated using the parameters in Table \ref{tab:param_vals}. The costs were normalized by the social optimum cost for ease of comparison. In Figure~\ref{fig:type_dep_exp_cost}, we see that the relative expected individual value of information (the difference between the solid red line and dashed blue line) is positive when $\fracInf < \fracBoundEta{3}$, and zero when $\fracInf>\fracBoundEta{3}$.

\begin{figure}[htb]
\centering
\subfloat[$\dfrac{\costVar^{\typeUninf\bneIndicator}_\statenorm}{\costVar_\statenorm^\socOpt}$, $\dfrac{\costVar^{\typeInf\bneIndicator}_\statenorm}{\costVar_\statenorm^\socOpt}$ vs. $\fracInf$, $\pInc=0.2$]{
\includegraphics[trim={1cm 6cm 1cm 6cm},clip,width=0.4\linewidth]{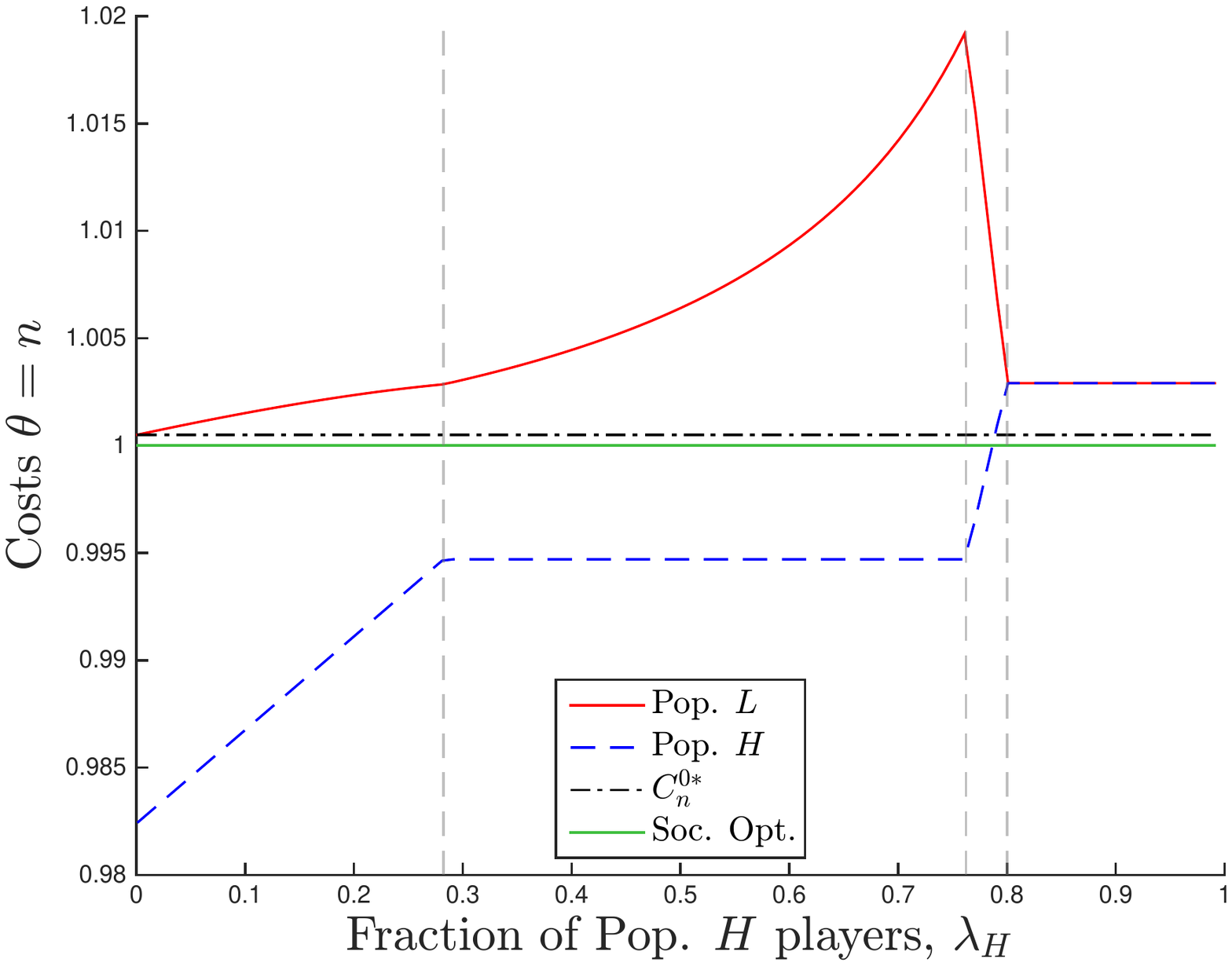}}
\subfloat[$\dfrac{\costVar^{\typeUninf\bneIndicator}_\statenorm}{\costVar_\statenorm^\socOpt}$, $\dfrac{\costVar^{\typeInf\bneIndicator}_\statenorm}{\costVar_\statenorm^\socOpt}$ vs. $\fracInf$, $\pInc=0.6$]{
\includegraphics[trim={1cm 6cm 1cm 6cm},clip,width=0.4\linewidth]{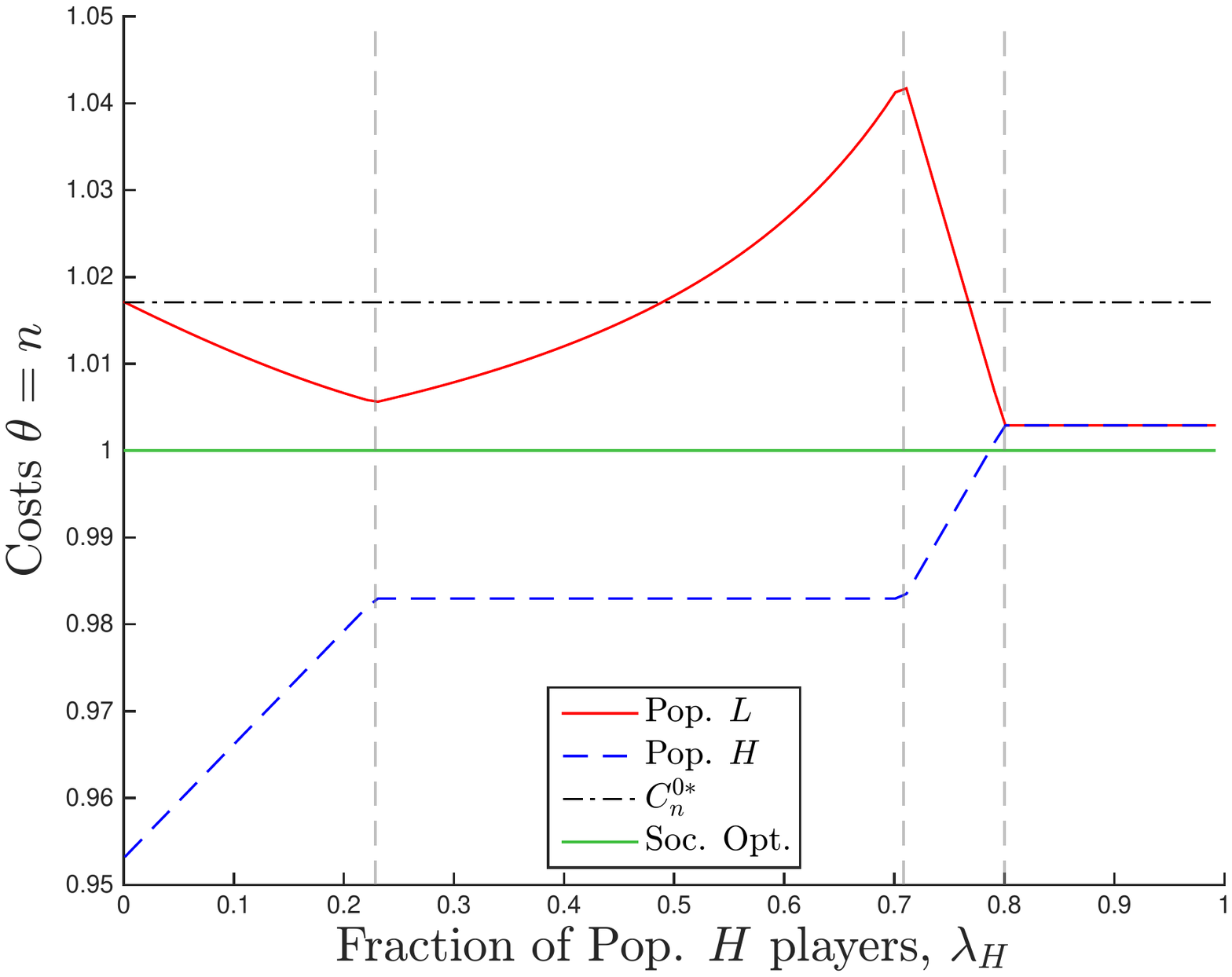}}\\
\subfloat[$\dfrac{\costVar^{\typeUninf\bneIndicator}_\stateinc}{\costVar_\stateinc^\socOpt}$ $\dfrac{\costVar^{\typeInf\bneIndicator}_\stateinc}{\costVar_\stateinc^\socOpt}$ vs. $\fracInf$, $\pInc=0.2$]{
\includegraphics[trim={1cm 6cm 1cm 6cm},clip,width=0.4\linewidth]{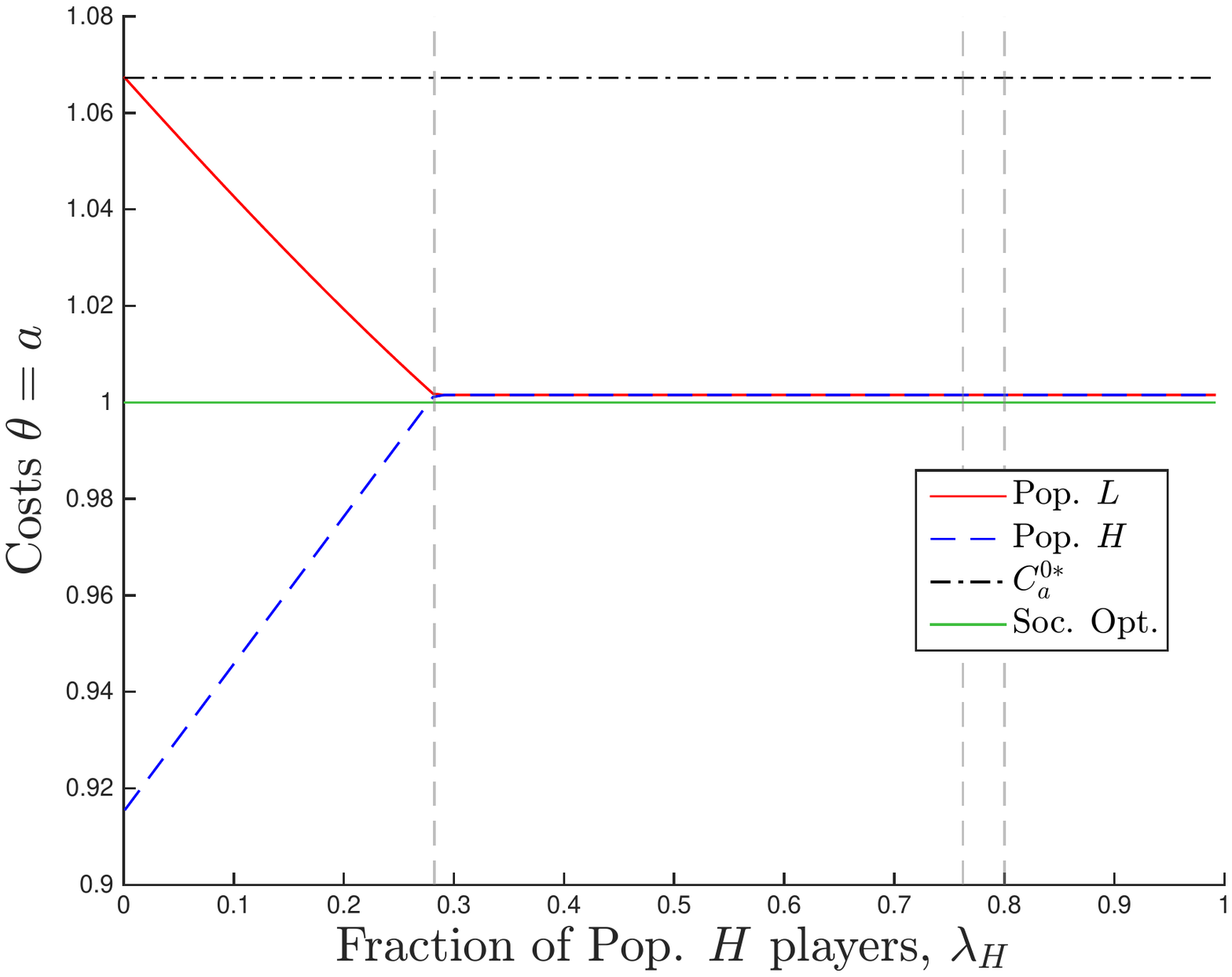}}
\subfloat[$\dfrac{\costVar^{\typeUninf\bneIndicator}_\stateinc}{\costVar_\stateinc^\socOpt}$ $\dfrac{\costVar^{\typeInf\bneIndicator}_\stateinc}{\costVar_\stateinc^\socOpt}$ vs. $\fracInf$, $\pInc=0.6$]{
\includegraphics[trim={1cm 6cm 1cm 6cm},clip,width=0.4\linewidth]{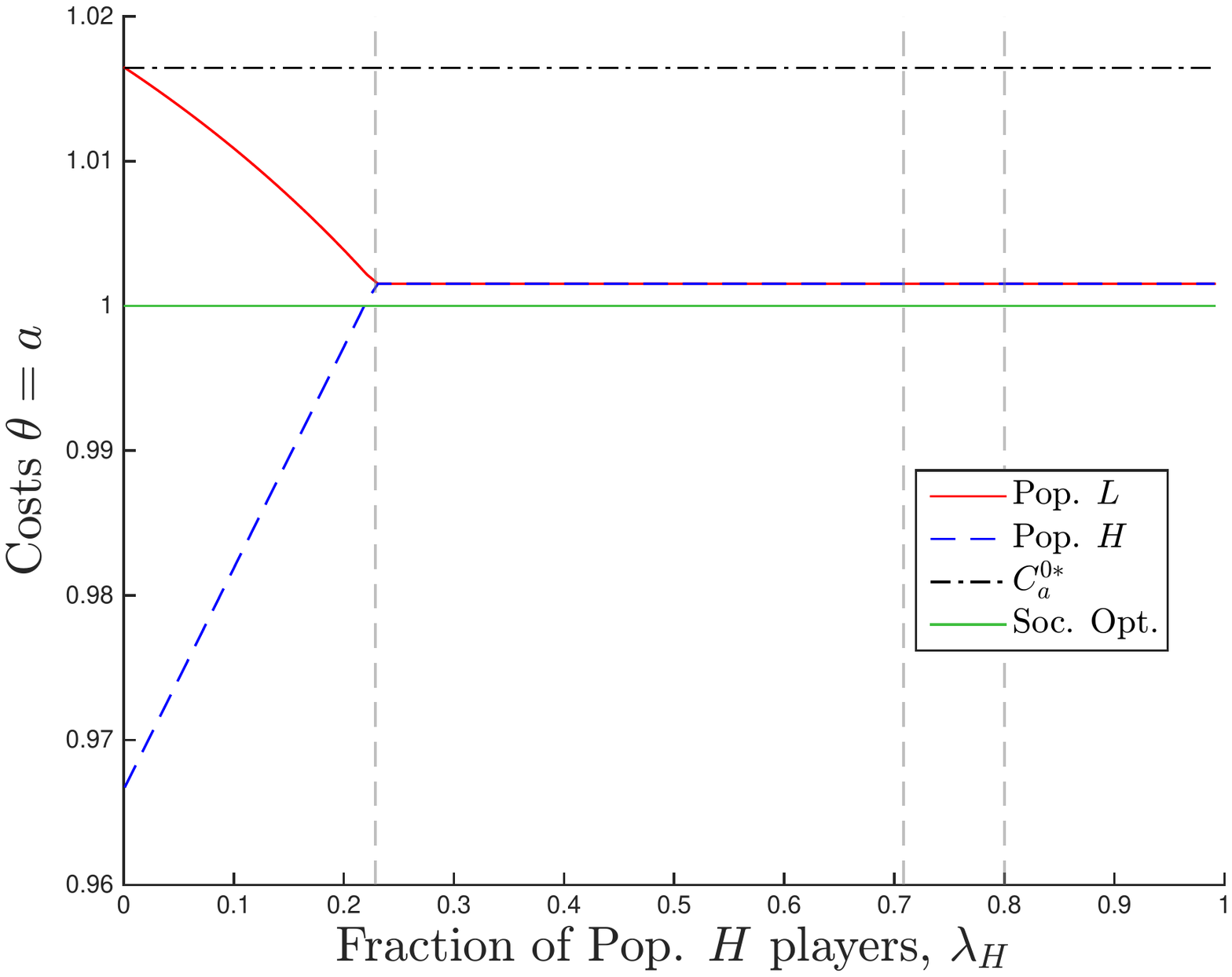}}
\caption{Population- and state-specific costs vs. the fraction of informed users; normalized by soc. opt. cost.}
\label{fig:type_costs_state}
\end{figure}

\begin{figure}[htb]
\centering
\subfloat[ $\dfrac{\costVar^{\typeUninf\bneIndicator}}{\costVar^\socOpt}$, $\dfrac{\costVar^{\typeInf\bneIndicator}}{\costVar^\socOpt}$ vs. $\fracInf$, $\pInc=0.2$]{
\includegraphics[trim={1cm 6cm 1cm 6cm},clip,width=0.4\linewidth]{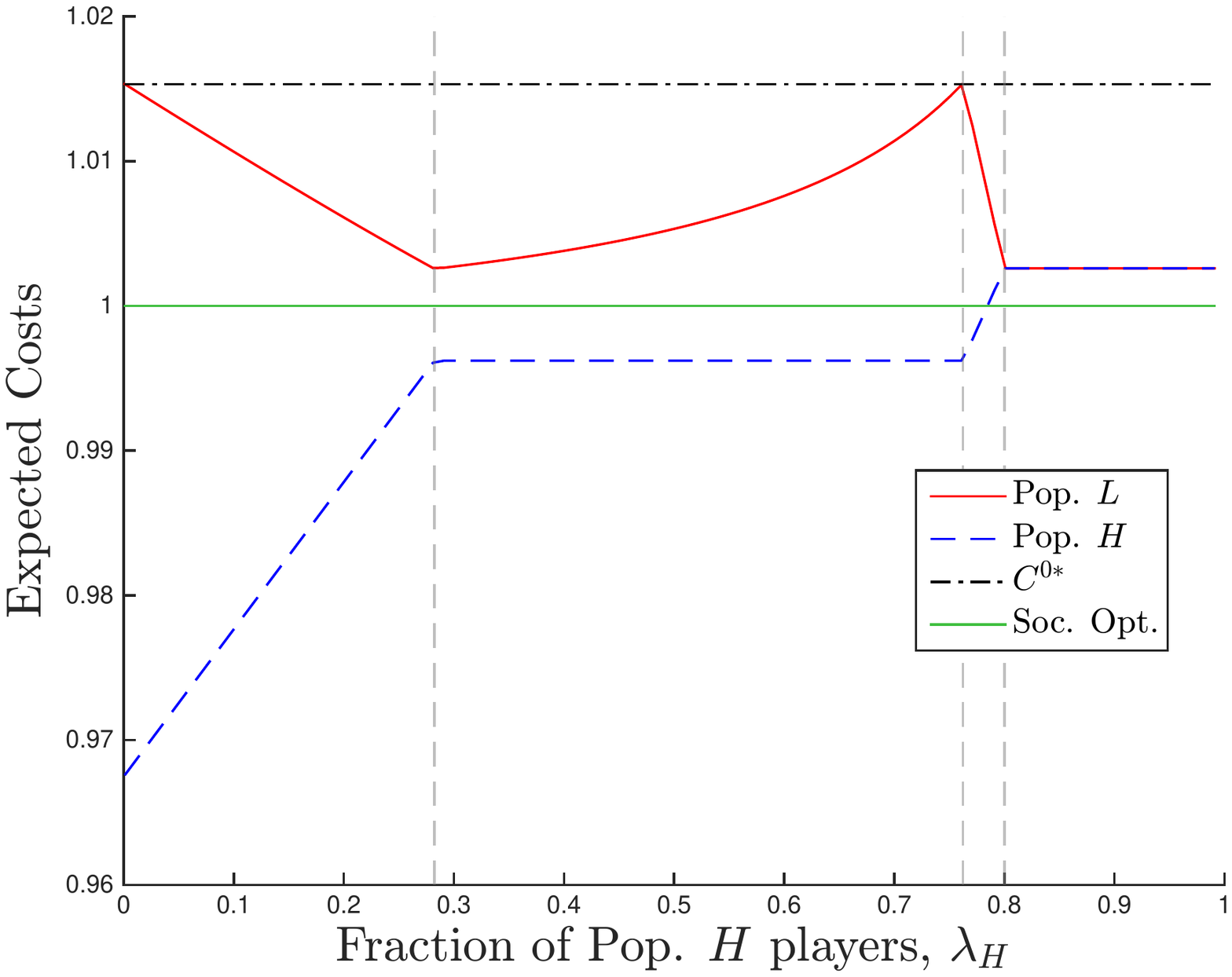}}
\subfloat[ $\dfrac{\costVar^{\typeUninf\bneIndicator}}{\costVar^\socOpt}$, $\dfrac{\costVar^{\typeInf\bneIndicator}}{\costVar^\socOpt}$ vs. $\fracInf$, $\pInc=0.6$]{
\includegraphics[trim={1cm 6cm 1cm 6cm},clip,width=0.4\linewidth]{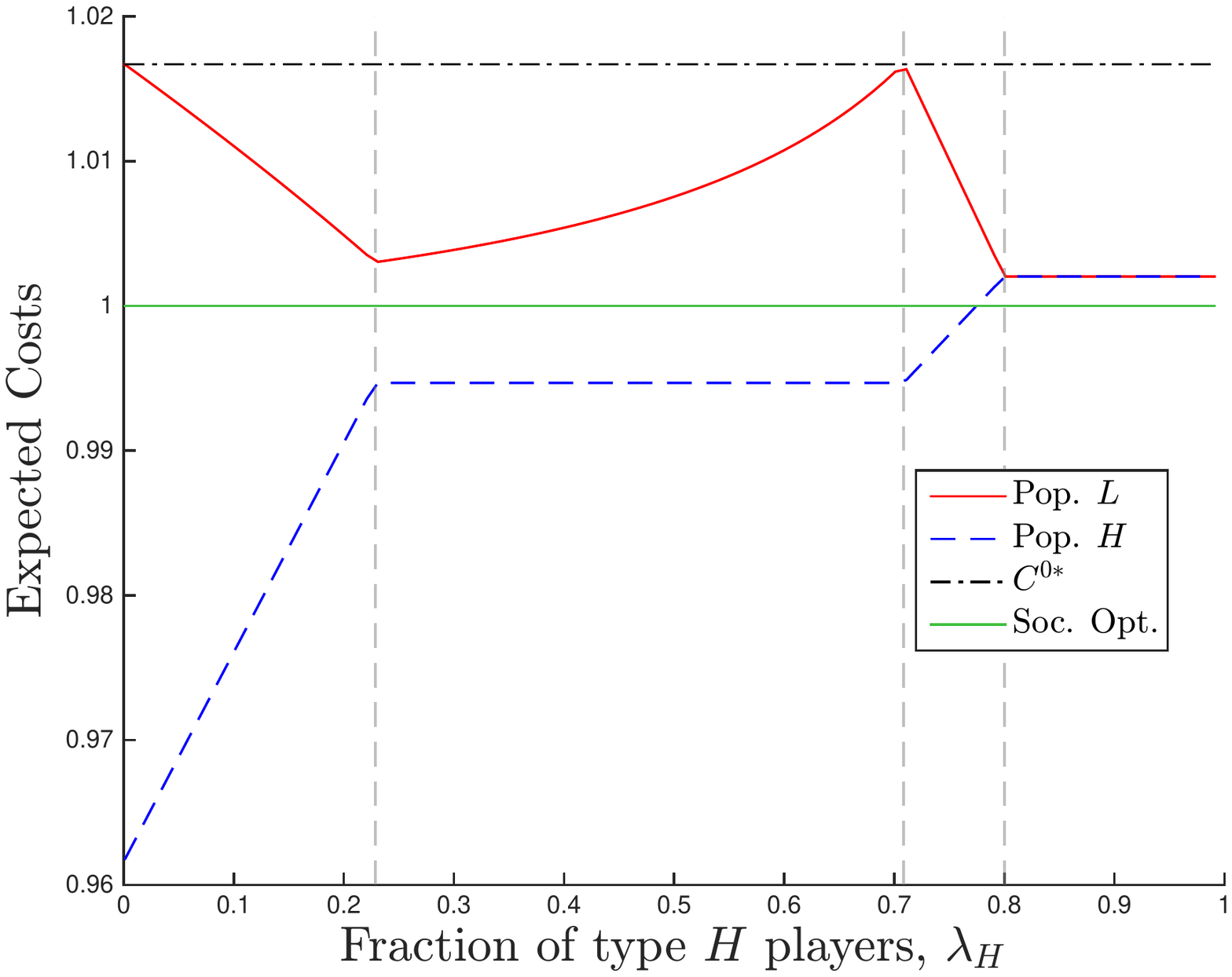}}
\caption{Expected cost in equilibrium for players in each population; normalized by soc. opt. cost.}
\label{fig:type_dep_exp_cost}
% subsection subsection_name (end)
\end{figure}

It is interesting to note that the expected cost for population $\typeInf$ is non-decreasing in $\fracInf$. That is, the more of the population that has access to the network state information, the higher the expected costs for type $\typeInf$ (Figure \ref{fig:type_dep_exp_cost}). Additionally, players that don't receive information still enjoy a reduction in expected equilibrium cost when others are receiving information. These effects were also noted by \cite{mahmassani1991system} in their simulations. This poses a potential conundrum for TIS providers: providers may want to increase market share to maximize revenue, but in doing so, they may limit or even nullify the usefulness of their service to their subscribers. Indeed, in regime $\regime{4}$, players with access to information experience no additional benefit compared to those without. 

We note that the relative expected individual value of information is greatest when $\fracInf$ is close to zero (Figure \ref{fig:type_dep_exp_cost}). This is intuitive, since information provides the most value when the fewest number of players can take advantage of it. One may also expect that the relative expected individual value of information would decrease as $\fracInf$ increases, since more players are utilizing the information. Indeed, this is the case in $\regime{1}$ and $\regime{3}$ it does decrease. However, in $\regime{2},$ the relative expected individual value of information increases with $\fracInf$. While the expected cost for population $\typeInf$ players does not change in $\regime{2}$, the expected cost for population $\typeUninf$ players increases as $\fracInf$ increases.

% subsubsection aggregate_cost (end)

\subsection{Social value of information} % (fold)
\label{sub:social_value_of_information}

% subsection social_value_of_information (end)
Recall from \eqref{eq:exp_soc_val_of_info} that the expected social value of information, $\welfare$, represents the reduction in the expected equilibrium cost that an average player enjoys under the given information environment, $(\pInc,\fracInf,\accuracy^\typeInf=1,\accuracy^\typeUninf=0.5)$, compared to the baseline situation where nobody has any access to information (i.e. $\fracInf=0$). Theorem \ref{thm:soc_val_of_info} concretizes the relationship between the expected social value of information and the fraction of population $\typeInf$ players.

%%%%%%%%%%Theorem - social value of info%%%%%%%%%%%%%
\begin{theorem}{}
\label{thm:soc_val_of_info}

For the information environment $(\pInc,\fracInf,\accuracy^\typeInf=1,\accuracy^\typeUninf=0.5)$, as the fraction of population $\typeInf$ players increases:
\begin{enumerate}
  \item[i.)] in $\regime{1}$, $\welfare$ increases in $\fracInf$,
  \item[ii.)] in $\regime{2}$, $\welfare$ is constant in $\fracInf$,
  \item[iii.)] in $\regime{3}$, $\begin{cases}
  \welfare \text{ decreases in } \fracInf, &\text{ if } \fracBoundEta{2} \geq \dfrac{2\slope{2}{}\totaldemand+\intercept{2}-\intercept{1}}{2\totaldemand(\slope{1}{\statenorm}+\slope{2}{})}\\
  % \text{increases}, &\text{ if } \dfrac{2\slope{2}{}\totaldemand+\intercept{2}-\intercept{1}}{2\totaldemand(\slope{1}{\statenorm}+\slope{2}{})}<\dfrac{\fracBoundEta{2}+\fracBoundEta{3}}{2}\\
    \welfare \text{ increases then decreases in } \fracInf, &\text{ if } \fracBoundEta{2}<\dfrac{2\slope{2}{}\totaldemand+\intercept{2}-\intercept{1}}{2\totaldemand(\slope{1}{\statenorm}+\slope{2}{})} < \fracBoundEta{3} \\
    \welfare \text{ increases in } \fracInf, &\text{ if }\fracBoundEta{3} \leq \dfrac{2\slope{2}{}\totaldemand+\intercept{2}-\intercept{1}}{2\totaldemand(\slope{1}{\statenorm}+\slope{2}{})}
  \end{cases}$,
  \item[iv.)] in $\regime{4}$, $\welfare$ is constant in $\fracInf$
\end{enumerate}
\end{theorem}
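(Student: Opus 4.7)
The plan is to use $\welfare(\pInc,\fracInf,1,0.5)=\costVar^{0\bneIndicator}-\bar\costVar^\bneIndicator$, where the baseline cost $\costVar^{0\bneIndicator}$ is independent of $\fracInf$, so the monotonicity of $\welfare$ in $\fracInf$ is opposite to that of $\bar\costVar^\bneIndicator$. In each of the four regimes I would substitute the equilibrium strategies from Propositions~\ref{prop:reg1equilibrium}--\ref{prop:reg4equilibrium} into
\[
\bar\costVar^\bneIndicator=\sum_{\staterv\in\{\statenorm,\stateinc\}}\pr(\staterv)\Bigl[L_1^\staterv\,\latency{1}{\staterv}{L_1^\staterv}+(\totaldemand-L_1^\staterv)\,\latency{2}{}{\totaldemand-L_1^\staterv}\Bigr],
\]
where $L_1^\staterv$ is the \emph{aggregate} load on $\routeindex_1$ in state $\staterv$. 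Because each summand is a strictly convex quadratic in $L_1^\staterv$ whose unique minimizer is the state-dependent social optimum $\load^{\staterv\socOpt}_1$, the entire analysis reduces to tracking $L_1^\statenorm(\fracInf)$ and $L_1^\stateinc(\fracInf)$.

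Parts (ii) and (iv) follow from a direct check that the aggregate loads are constant in $\fracInf$ on the corresponding regime. Combining $\splitfraction{\typeInf}{1}{\statenorm\bneIndicator}=1$ with the loads in Proposition~\ref{prop:reg2equilibrium} yields, on $\regime{2}$, $L_1^\statenorm=(\regConst_1-\pInc\regConst_2)/(1-\pInc)$ and $L_1^\stateinc=\regConst_2$; Proposition~\ref{prop:reg4equilibrium} gives, on $\regime{4}$, $L_1^\statenorm=\regConst_3$ and $L_1^\stateinc=\regConst_2$. Neither depends on $\fracInf$, so $\bar\costVar^\bneIndicator$, and hence $\welfare$, are constant in both regimes.

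For part (i), Proposition~\ref{prop:reg1equilibrium} gives $L_1^\statenorm(\fracInf)=\regConst_1+\pInc\fracInf\totaldemand$ and $L_1^\stateinc(\fracInf)=\regConst_1-(1-\pInc)\fracInf\totaldemand$; differentiating the state-wise quadratics produces
\[
\frac{d\bar\costVar^\bneIndicator}{d\fracInf}=2\pInc(1-\pInc)\totaldemand\,B(\fracInf),\qquad B(\fracInf):=(\slope{1}{\statenorm}+\slope{2}{})L_1^\statenorm-(\slope{1}{\stateinc}+\slope{2}{})L_1^\stateinc,
\]
which is linear in $\fracInf$. At $\fracInf=0$ one has $B(0)=\regConst_1(\slope{1}{\statenorm}-\slope{1}{\stateinc})<0$ by $(A\ref{ass:route_structure})_b$; at the right endpoint $\fracInf=\fracBoundEta{1}$, where $L_1^\stateinc=\regConst_2$, a short algebraic manipulation using the definitions of $\regConst_1$ and $\regConst_2$ collapses $B(\fracBoundEta{1})$ to $-\pInc\solConst(\slope{1}{\stateinc}-\slope{1}{\statenorm})^2/[(\bar\slope{1}{}+\slope{2}{})(\slope{1}{\stateinc}+\slope{2}{})]<0$. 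By linearity, $B<0$ throughout $[0,\fracBoundEta{1})$, so $\bar\costVar^\bneIndicator$ strictly decreases and $\welfare$ strictly increases on $\regime{1}$.

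The main obstacle is part (iii). By Proposition~\ref{prop:reg3equilibrium}, $L_1^\statenorm(\fracInf)=\fracInf\totaldemand$ is linear in $\fracInf$ while $L_1^\stateinc=\regConst_2$ stays constant, so the state-$\stateinc$ contribution to $\bar\costVar^\bneIndicator$ drops out and the whole $\fracInf$-dependence is carried by the strictly convex quadratic
\[
\costVar_\statenorm^\bneIndicator(\fracInf)=\fracInf\totaldemand\bigl(\slope{1}{\statenorm}\fracInf\totaldemand+\intercept{1}\bigr)+(1-\fracInf)\totaldemand\bigl(\slope{2}{}(1-\fracInf)\totaldemand+\intercept{2}\bigr).
\]
Its first-order condition yields the unique minimizer
\[
\fracInf^\star=\frac{2\slope{2}{}\totaldemand+\intercept{2}-\intercept{1}}{2\totaldemand(\slope{1}{\statenorm}+\slope{2}{})}=\frac{\load^{\statenorm\socOpt}_1}{\totaldemand}.
\]
The trichotomy then follows from the placement of $\fracInf^\star$ relative to the endpoints $\fracBoundEta{2}$ and $\fracBoundEta{3}$ of $\regime{3}$: if $\fracInf^\star\le\fracBoundEta{2}$ then $\costVar_\statenorm^\bneIndicator$ is increasing on $\regime{3}$ and $\welfare$ decreases; if $\fracInf^\star\ge\fracBoundEta{3}$ it is decreasing and $\welfare$ increases; in the intermediate case $\welfare$ rises to a peak at $\fracInf^\star$ and then falls. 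The substantive observation is that the regime-$3$ critical fraction coincides with the state-$\statenorm$ socially optimal split $\load^{\statenorm\socOpt}_1/\totaldemand$, which is exactly what aligns the threshold in the theorem statement with the geometry of the state-$\statenorm$ parabola.
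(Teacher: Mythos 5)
Your proposal is correct, and its skeleton is the same as the paper's: since $\costVar^{0\bneIndicator}$ does not depend on $\fracInf$, everything reduces to the monotonicity of $\bar{\costVar}^\bneIndicator$, which is (piecewise) quadratic in $\fracInf$ on each regime, constant on $\regime{2}$ and $\regime{4}$, and governed on $\regime{3}$ by the location of the vertex $\widetilde{\fracInf}=\dfrac{2\slope{2}{}\totaldemand+\intercept{2}-\intercept{1}}{2\totaldemand(\slope{1}{\statenorm}+\slope{2}{})}$ relative to $\fracBoundEta{2}$ and $\fracBoundEta{3}$. Where you differ is in the packaging and in the regime-$\regime{1}$ step. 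The paper substitutes the equilibrium split fractions to get the explicit quadratic coefficients of $\bar{\costVar}^\bneIndicator$, computes the minimizer $\widehat{\fracInf}$, and verifies $\fracBoundEta{1}\leq\widehat{\fracInf}$ by reducing that inequality to $\slope{1}{\statenorm}\leq\slope{1}{\stateinc}$; you instead route everything through the aggregate route-$1$ loads $L_1^\statenorm(\fracInf),L_1^\stateinc(\fracInf)$, exploit that each state's contribution is a strictly convex quadratic in its aggregate load minimized at the state-dependent social optimum, and on $\regime{1}$ check the sign of the (affine) derivative at the two endpoints $\fracInf=0$ and $\fracInf=\fracBoundEta{1}$ (both checks again collapse to $\slope{1}{\stateinc}>\slope{1}{\statenorm}$). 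I verified your endpoint computation $B(\fracBoundEta{1})=-\pInc\solConst(\slope{1}{\stateinc}-\slope{1}{\statenorm})^2/[(\bar{\slope{1}{}}+\slope{2}{})(\slope{1}{\stateinc}+\slope{2}{})]$ and your constancy claims $L_1^\statenorm=(\regConst_1-\pInc\regConst_2)/(1-\pInc)$ on $\regime{2}$ (which equals the paper's $\regConst_4$) and $L_1^\statenorm=\regConst_3$, $L_1^\stateinc=\regConst_2$ on $\regime{4}$; all are consistent with Propositions~\ref{prop:reg2equilibrium} and~\ref{prop:reg4equilibrium}. Your load-based lens buys two things the paper's calculation obscures: the state-$\stateinc$ term visibly drops out of the $\fracInf$-dependence in $\regime{3}$, and the critical fraction is identified as $\load^{\statenorm\socOpt}_1/\totaldemand$, which explains \emph{why} the threshold in the theorem has the form it does. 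The only cosmetic discrepancy is that your $\bar{\costVar}^\bneIndicator$ is a total rather than per-player cost, a factor of $\totaldemand$ that is irrelevant to monotonicity.
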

%%%%%%%%%

\begin{proof}{Proof of Theorem \ref{thm:soc_val_of_info}}
Recall from \eqref{eq:exp_soc_val_of_info} that the expected social value of information is the difference between the baseline expected cost, $\costVar^{0\bneIndicator}$, and the expected equilibrium cost $\bar{\costVar}^\bneIndicator$. Since $\costVar^{0\bneIndicator}$ is a constant, in order to show that the expected social value of information increases in $\fracInf$ in regime $\regime{1}$, is constant in regime $\regime{2}$, etc., it suffices to show that the expected social cost $\bar{\costVar}^\bneIndicator$ decreases in $\fracInf$ in regime $\regime{1},$ etc. From Propositions \ref{prop:reg1equilibrium}--\ref{prop:reg4equilibrium}, we can calculate the expected social cost $\bar{\costVar}^\bneIndicator$ for each regime \eqref{eq:exp_soc_cost}:
\begin{align}
\bar{\costVar}^\bneIndicator=\begin{cases}
  (\fracInf)^2 \left(\totaldemand \pInc(1-\pInc)(\slope{1}{\stateinc}+\slope{1}{\statenorm}+\slope{2}{}-\bar{\slope{1}{}})  \right) + \fracInf \regConst_1 \left((1-\pInc)\slope{1}{\statenorm}+(2\pInc-1)\bar{\slope{1}{}} - \pInc\slope{1}{\stateinc}\right)\\
+ \frac{\regConst_1^2}{\totaldemand}\bar{\slope{1}{}}+\slope{2}{}\left(\totaldemand-2\regConst_1+\frac{\regConst_1^2}{\totaldemand}+\frac{\regConst_1}{\totaldemand}\intercept{1}+\left(1-\frac{\regConst_1}{\totaldemand}\right)\right), &\text{in } \regime{1}\\
  \pInc\left(\slope{2}{}\left(\totaldemand - \frac{\solConst}{\slope{1}{\stateinc}+\slope{2}{}}\right) + \intercept{2}\right) + (1-\pInc)\left(\frac{\regConst_4}{\totaldemand}\left(\slope{1}{\statenorm} \regConst_4+\intercept{1}\right) + \left(1 - \frac{\regConst_4}{\totaldemand}\right)\left(\slope{2}{}(\totaldemand - \regConst_4)+\intercept{2}\right) \right), &\text{in } \regime{2}\\
  \pInc (\slope{1}{\statenorm} \regConst_2+\intercept{1}) + (1-\pInc)((\fracInf)^2  \slope{1}{\statenorm} \totaldemand + \fracInf\intercept{1} + (1-\fracInf)^2\slope{2}{}\totaldemand+(1-\fracInf)\intercept{2}), &\text{in } \regime{3}\\
  \pInc\left(\slope{2}{} \left(\totaldemand - \frac{\solConst}{\slope{1}{\stateinc}+\slope{2}{}} \right) + \intercept{2}\right) + (1-\pInc)\left(\slope{2}{}\left(\totaldemand-\frac{\solConst}{\slope{1}{\statenorm}+\slope{2}{}}\right)+\intercept{2}\right), &\text{in } \regime{4}
\end{cases} \label{eq:exp_soc_cost_analytic_form}
\end{align}

First, we would like to show that in regime $\regime{1}$, $\bar{\costVar}^\bneIndicator$ decreases in $\fracInf$. We note that in regime $\regime{1}$, $\bar{\costVar}^\bneIndicator$ is quadratic in $\fracInf$ and is minimized at $\widehat{\fracInf} = \dfrac{\regConst_1\left(\pInc\slope{1}{\stateinc}-(1-\pInc)\slope{1}{\statenorm}-(2\pInc-1)\bar{\slope{1}{}}\right)}{2\left(\totaldemand \pInc(1-\pInc)(\slope{1}{\stateinc}+\slope{1}{\statenorm}+\slope{2}{} - \bar{\slope{1}{}}\right)}$. For $\fracInf$ below this threshold, the derivative of the quadratic is negative. Therefore, if the right boundary of regime $\regime{1}$ is at or below this threshold, i.e. $\fracBoundEta{1} \leq \widehat{\fracInf}$, then $\bar{\costVar}^\bneIndicator$ is decreasing in $\fracInf$ in this regime. Solving the inequality $\fracBoundEta{1} \leq \widehat{\fracInf}$ yields the condition $\slope{1}{\statenorm}\leq\slope{1}{\stateinc}$, which is satisfied by assumption $(A\ref{ass:route_structure})_b$. Hence, we obtain that the expected social cost, $\bar{\costVar}^\bneIndicator$, is decreasing in $\fracInf$ in $\regime{1}$.

It is trivial to observe in \eqref{eq:exp_soc_cost_analytic_form} that in $\regime{2}$ and $\regime{4}$ the expected social cost $\bar{\costVar}^\bneIndicator$ is constant in $\fracInf$.

Lastly, we argue that in $\regime{3}$, $\bar{\costVar}^\bneIndicator$ has different behavior depending on the parameter values. We note from \eqref{eq:exp_soc_cost_analytic_form} that in regime $\regime{3}$, $\bar{\costVar}^\bneIndicator$ is quadratic in $\fracInf$ and is minimized at $\widetilde{\fracInf} = \dfrac{2\slope{2}{}\totaldemand+\intercept{2}-\intercept{1}}{2\totaldemand(\slope{1}{\statenorm}+\slope{2}{})}$. Below this threshold, the derivative of the quadratic is negative, and above this threshold, it is positive. Hence, if this threshold is at or below the left boundary of regime $\regime{3}$, i.e. $\widetilde{\fracInf} \leq \fracBoundEta{2}$, then the derivative is positive throughout $\regime{3}$ and the expected social cost increases in $\fracInf$.
If this threshold falls between the boundaries of regime $\regime{3}$, i.e. $\fracBoundEta{2}<\widetilde{\fracInf}<\fracBoundEta{3}$, then the derivative is negative on the interval $[\fracBoundEta{2},\widetilde{\fracInf})$, zero at $\widetilde{\fracInf}$, and positive on the interval $(\widetilde{\fracInf}, \fracBoundEta{3}]$. Therefore, in regime $\regime{3}$, the expected equilibrium cost, $\bar{\costVar}^\bneIndicator$, decreases then increases in $\fracInf$. Finally, if the threshold occurs at or above the right boundary of $\regime{3}$, i.e. $\widetilde{\fracInf}\geq \fracBoundEta{3}$, then the derivative is negative throughout regime $\regime{3}$, and $\bar{\costVar}^\bneIndicator$ is decreasing in $\fracInf$. \hfill \Halmos
\end{proof}

% \begin{corollary}{\textbf{Change in expected social cost from }$\regime{2}$\textbf{ to } $\regime{4}$}

% The change in expected social cost from $\regime{2}$ to $\regime{4}$ is given by:
% \begin{align}
% \Delta \bar{\costVar}_{2\rightarrow4} := (1-\pInc)\left(\slope{2}{}\left(\totaldemand - \frac{\solConst}{\slope{1}{\statenorm} + \slope{2}{}}\right) + \intercept{2} - \left(\frac{\regConst_4}{\totaldemand}\left(\slope{1}{\statenorm} \regConst_2+\intercept{1}\right) + \left(1 - \frac{\regConst_4}{\totaldemand}\right)\left(\slope{2}{}(\totaldemand - \regConst_4)+\intercept{2}\right) \right) \right).
% \end{align}

% $\Delta \bar{\costVar}_{2\rightarrow4}$ is nonnegative iff $\dfrac{2\slope{2}{}\totaldemand+\intercept{2}-\intercept{1}}{2\totaldemand(\slope{1}{\statenorm}+\slope{2}{})}\leq\dfrac{\fracBoundEta{2}+\fracBoundEta{3}}{2}$
% \end{corollary}

Corollary \ref{cor:global_min_soc_cost} below provides the expressions for the $\fracInf$ where the expected social cost is minimized (and consequently, the expected social value of information is maximized). We denote this fraction as $\fracInf_{\min}$. Above this fraction of highly-informed players, there is no additional reduction of expected social cost in increasing the fraction of population $\typeInf$ players. One can view this fraction as the ``optimal'' level of information distribution for reduction in social cost. A similar result was observed in simulations by \cite{mahmassani1991system}. Notably, this fraction is less than or equal to the threshold where the relative individual value of information goes to zero (see Thm. \ref{thm:ind_val_of_info}). If $\fracInf_{\min}<\fracBoundEta{3}$, then there exists a range of $\fracInf$ where it is personally advantageous for population $\typeUninf$ players to become population $\typeInf$ players, but it is harmful to society for them to do so.

\begin{corollary}{\textbf{Global minimum of expected social cost in equilibrium}.} \label{cor:global_min_soc_cost}

For the information environment $(\pInc,\fracInf,\accuracy^\typeInf=1,\accuracy^\typeUninf=0.5)$, the smallest $\fracInf$ where the minimum expected social cost is achieved is given below:
\begin{align}
\fracInf_{\min}:=
\begin{cases}
  \fracBoundEta{1}, &\text{ if }\fracBoundEta{2} \geq \dfrac{2\slope{2}{}\totaldemand+\intercept{2}-\intercept{1}}{2\totaldemand(\slope{1}{\statenorm}+\slope{2}{})}\\
  \dfrac{2\slope{2}{}\totaldemand+\intercept{2}-\intercept{1}}{2\totaldemand(\slope{1}{\statenorm}+\slope{2}{})} &\text{ if }\fracBoundEta{2} < \dfrac{2\slope{2}{}\totaldemand+\intercept{2}-\intercept{1}}{2\totaldemand(\slope{1}{\statenorm}+\slope{2}{})}<\fracBoundEta{3}\\
  \fracBoundEta{3} &\text{ if }\fracBoundEta{3}\leq \dfrac{2\slope{2}{}\totaldemand+\intercept{2}-\intercept{1}}{2\totaldemand(\slope{1}{\statenorm}+\slope{2}{})}.
\end{cases}
\end{align}
\end{corollary}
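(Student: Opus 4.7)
The plan is to derive Corollary~\ref{cor:global_min_soc_cost} as a direct consequence of Theorem~\ref{thm:soc_val_of_info} combined with the continuity of $\bar{\costVar}^\bneIndicator$ in $\fracInf$. Since $\welfare(\pInc,\fracInf,1,0.5)=\costVar^{0\bneIndicator}-\bar{\costVar}^\bneIndicator$ and $\costVar^{0\bneIndicator}$ does not depend on $\fracInf$, maximizing the expected social value of information is equivalent to minimizing the expected social cost, so it suffices to determine the smallest $\fracInf$ that minimizes $\bar{\costVar}^\bneIndicator$.

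First I would record the monotonicity results from Theorem~\ref{thm:soc_val_of_info}: on $\regime{1}$ the expected social cost is strictly decreasing in $\fracInf$; on $\regime{2}$ and $\regime{4}$ it is constant; and on $\regime{3}$ it is a convex quadratic in $\fracInf$ whose vertex is located at $\widetilde{\fracInf}:=\dfrac{2\slope{2}{}\totaldemand+\intercept{2}-\intercept{1}}{2\totaldemand(\slope{1}{\statenorm}+\slope{2}{})}$. I would also observe that $\bar{\costVar}^\bneIndicator$ is continuous across regime boundaries; this can be checked by evaluating the piecewise closed form~\eqref{eq:exp_soc_cost_analytic_form} at $\fracBoundEta{1}$, $\fracBoundEta{2}$, and $\fracBoundEta{3}$ and confirming agreement from either side (equivalently, the equilibrium split-fraction profiles of adjacent regimes coincide at their shared boundary by Propositions~\ref{prop:reg1equilibrium}--\ref{prop:reg4equilibrium}, so the induced costs must too).

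Given continuity and the piecewise monotonicity above, I would carry out a case analysis on the location of $\widetilde{\fracInf}$ relative to $\fracBoundEta{2}$ and $\fracBoundEta{3}$, matching the three cases in the corollary. If $\widetilde{\fracInf}\leq\fracBoundEta{2}$, the quadratic is non-decreasing throughout $\regime{3}$, so the graph of $\bar{\costVar}^\bneIndicator$ decreases on $\regime{1}$, sits at its minimum across the plateau $\regime{2}$, then rises strictly on $\regime{3}$ and remains at that strictly higher constant on $\regime{4}$; the smallest $\fracInf$ attaining the minimum is therefore the left endpoint of $\regime{2}$, namely $\fracBoundEta{1}$. If $\fracBoundEta{2}<\widetilde{\fracInf}<\fracBoundEta{3}$, the unique global minimum of the quadratic lies strictly inside $\regime{3}$ at $\widetilde{\fracInf}$, which by continuity and strict decrease on $[\fracBoundEta{2},\widetilde{\fracInf})$ lies strictly below both plateau values, giving $\fracInf_{\min}=\widetilde{\fracInf}$. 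Finally, if $\widetilde{\fracInf}\geq\fracBoundEta{3}$, the quadratic is non-increasing on $\regime{3}$, so $\bar{\costVar}^\bneIndicator$ keeps decreasing through $\regime{3}$ and then levels off on $\regime{4}$, making $\fracBoundEta{3}$ the smallest minimizer.

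The only step requiring genuine verification (rather than bookkeeping) is continuity of $\bar{\costVar}^\bneIndicator$ at the regime boundaries, and even that is a routine algebraic check from~\eqref{eq:exp_soc_cost_analytic_form}; everything else is an immediate reading-off of argmins from a piecewise function whose monotonic behavior on each piece was already established in Theorem~\ref{thm:soc_val_of_info}.
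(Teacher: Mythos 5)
Your proposal is correct and matches the paper's (implicit) argument: the corollary is presented as an immediate consequence of the piecewise monotonicity established in Theorem \ref{thm:soc_val_of_info}, exactly the reading-off of argmins you describe. Your explicit attention to continuity of $\bar{\costVar}^\bneIndicator$ across the regime boundaries is a small but worthwhile addition that the paper leaves unstated.
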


The social costs for each state are illustrated in Figure~\ref{fig:agg_costs_state} for $\pInc=0.2$ and $0.6,$ and the expected social costs are illustrated in Figure~\ref{fig:aggCosts}. When few players have access to incident information in regime $\regime{1}$, providing information to players significantly reduces the expected social cost. However, in $\regime{2}$, there is no change in social cost from providing information to more players. In regime $\regime{3}$, under certain conditions, there can be a modest reduction of expected social cost (Fig.~\ref{fig:soc_exp_cost_p60}). Under other conditions, however, increasing $\fracInf$ increases the expected social cost (Fig.~\ref{fig:soc_exp_cost_p20}). In $\regime{4}$, there is no change in expected cost as $\fracInf$ increases.

\begin{figure}[H]
\centering
\subfloat[$\dfrac{\costVar^\bneIndicator_\statenorm}{\costVar_\statenorm^\socOpt}$ vs. $\fracInf$, $\pInc=0.2$]{
\includegraphics[trim={1cm 6cm 1cm 6cm},clip,width=0.4\linewidth]{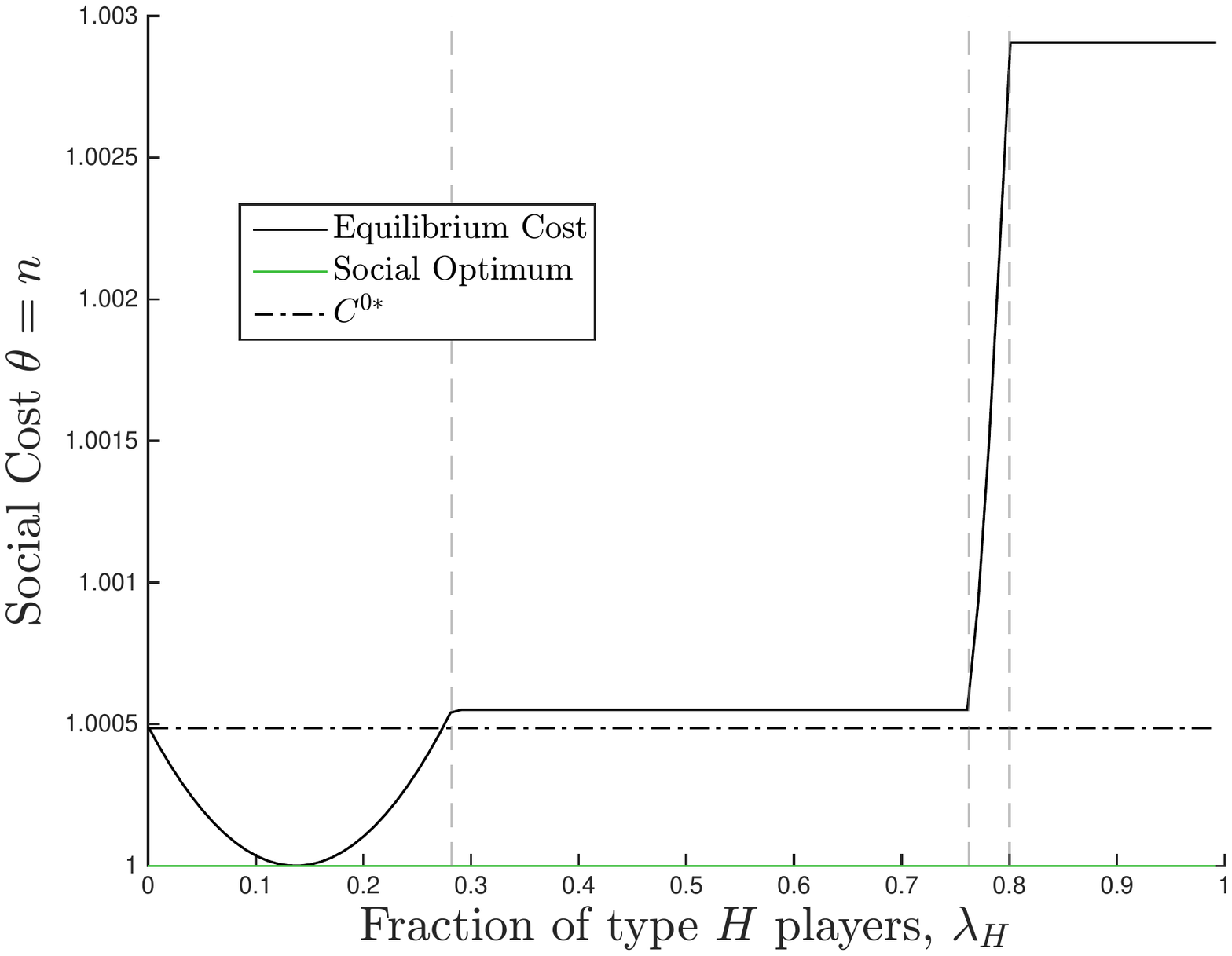}\label{fig:soc_norm_cost_p20}}
\subfloat[$\dfrac{\costVar^\bneIndicator_\statenorm}{\costVar_\statenorm^\socOpt}$ vs. $\fracInf$, $ \pInc=0.6$]{
\includegraphics[trim={1cm 6cm 1cm 6cm},clip,width=0.4\linewidth]{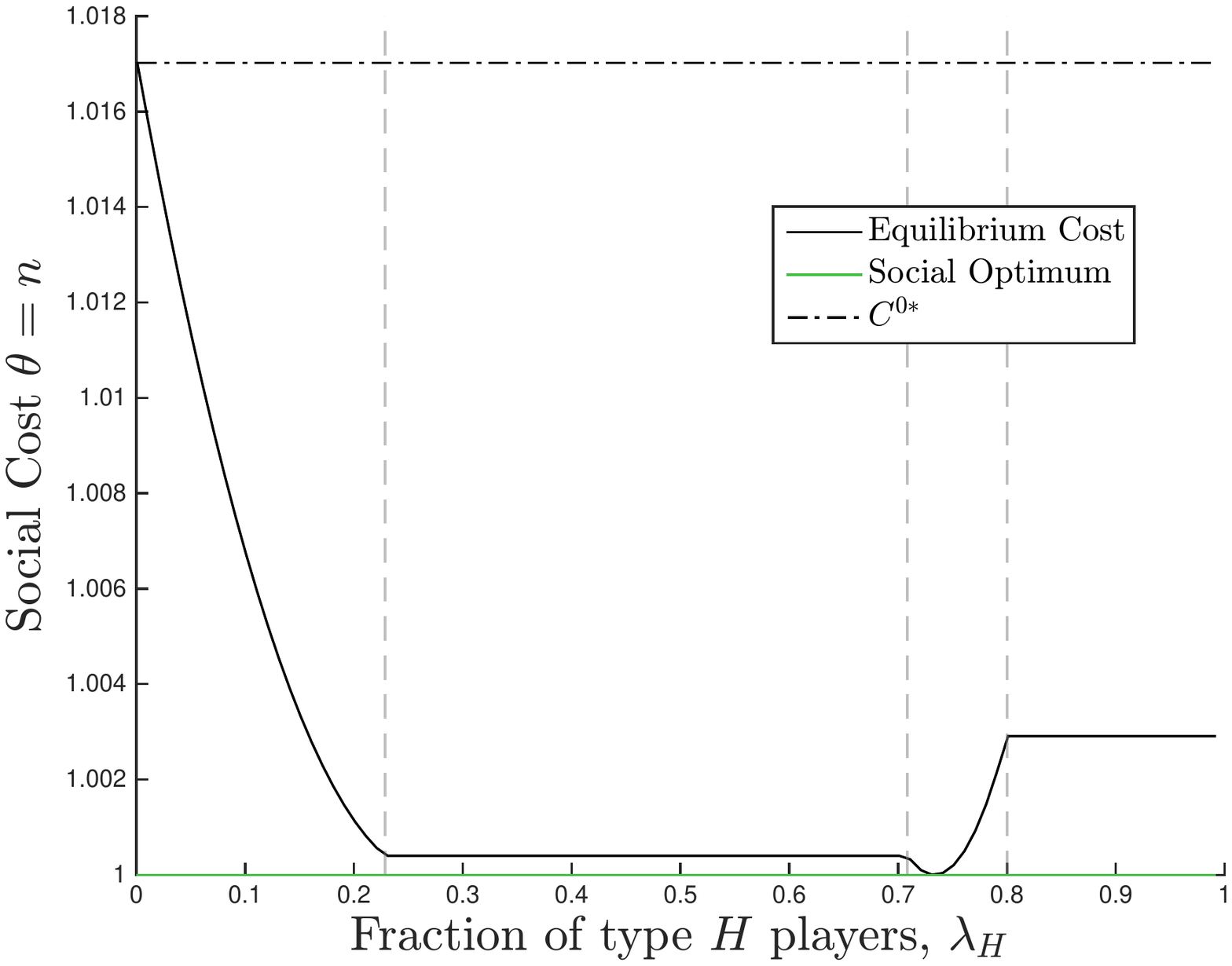}\label{fig:soc_norm_cost_p60}}\\
\subfloat[$\dfrac{\costVar^\bneIndicator_\stateinc}{\costVar_\stateinc^\socOpt}$ vs. $\fracInf$, $\pInc=0.2$]{
\includegraphics[trim={1cm 6cm 1cm 6cm},width=0.4\linewidth]{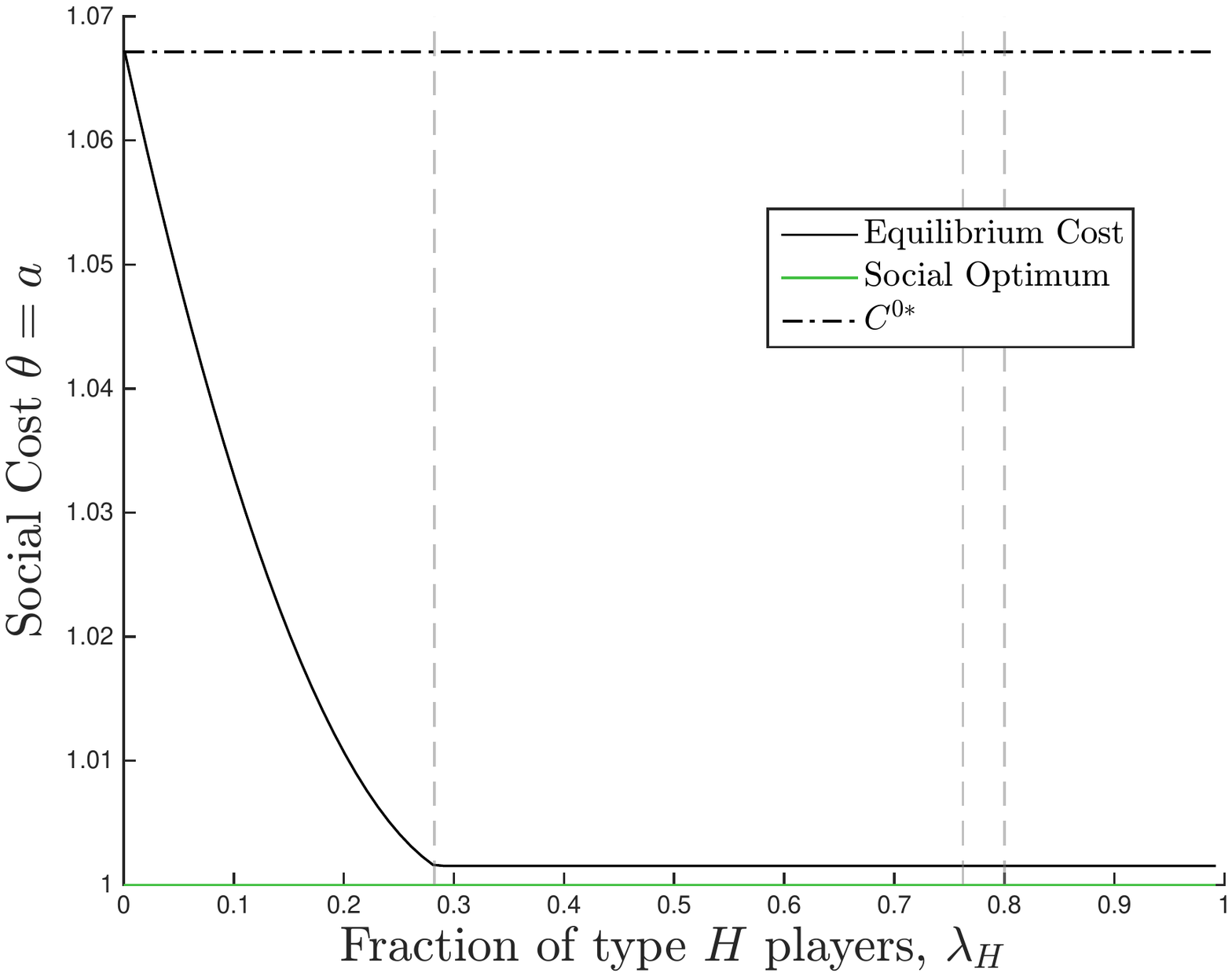}\label{fig:soc_inc_cost_p20}}
\subfloat[$\dfrac{\costVar^\bneIndicator_\stateinc}{\costVar_\stateinc^\socOpt}$ vs. $\fracInf$, $\pInc=0.6$]{
\includegraphics[trim={1cm 6cm 1cm 6cm},width=0.4\linewidth]{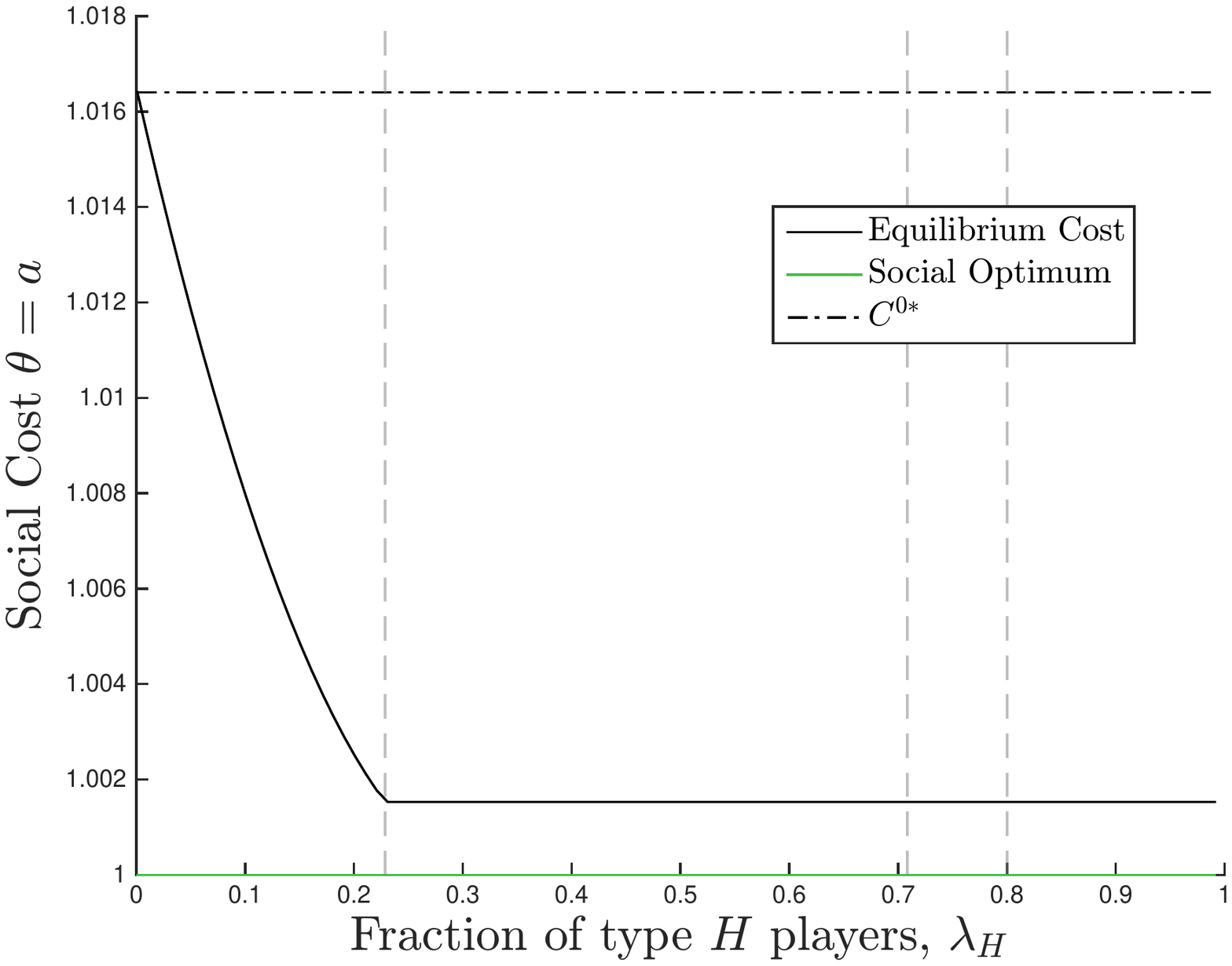}\label{fig:soc_inc_cost_p60}}
\caption{Social costs vs. fraction of informed users $\fracInf$; normalized by soc. opt. cost.}
\label{fig:agg_costs_state}
\end{figure}

\begin{figure}[H]
\centering
\subfloat[$\dfrac{\bar{\costVar}^\bneIndicator}{\costVar^\socOpt}$ vs. $\fracInf$, $\pInc=0.2$]{
\includegraphics[trim={1cm 6cm 1cm 6cm},clip,width=0.4\linewidth]{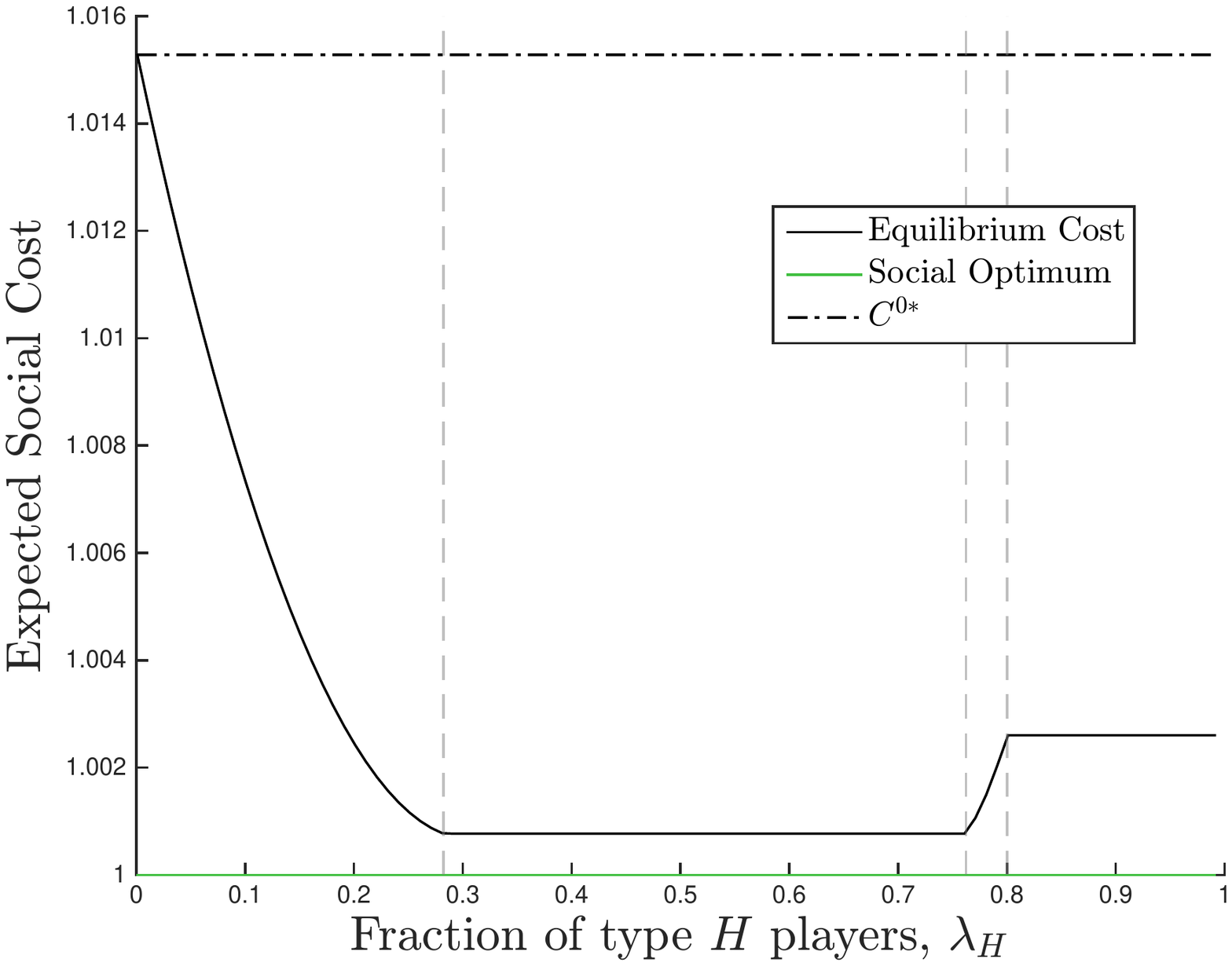}\label{fig:soc_exp_cost_p20}}
\subfloat[$\dfrac{\bar{\costVar}^\bneIndicator}{\costVar^\socOpt}$ vs. $\fracInf$, $\pInc=0.6$]{
\includegraphics[trim={1cm 6cm 1cm 6cm},clip,width=0.4\linewidth]{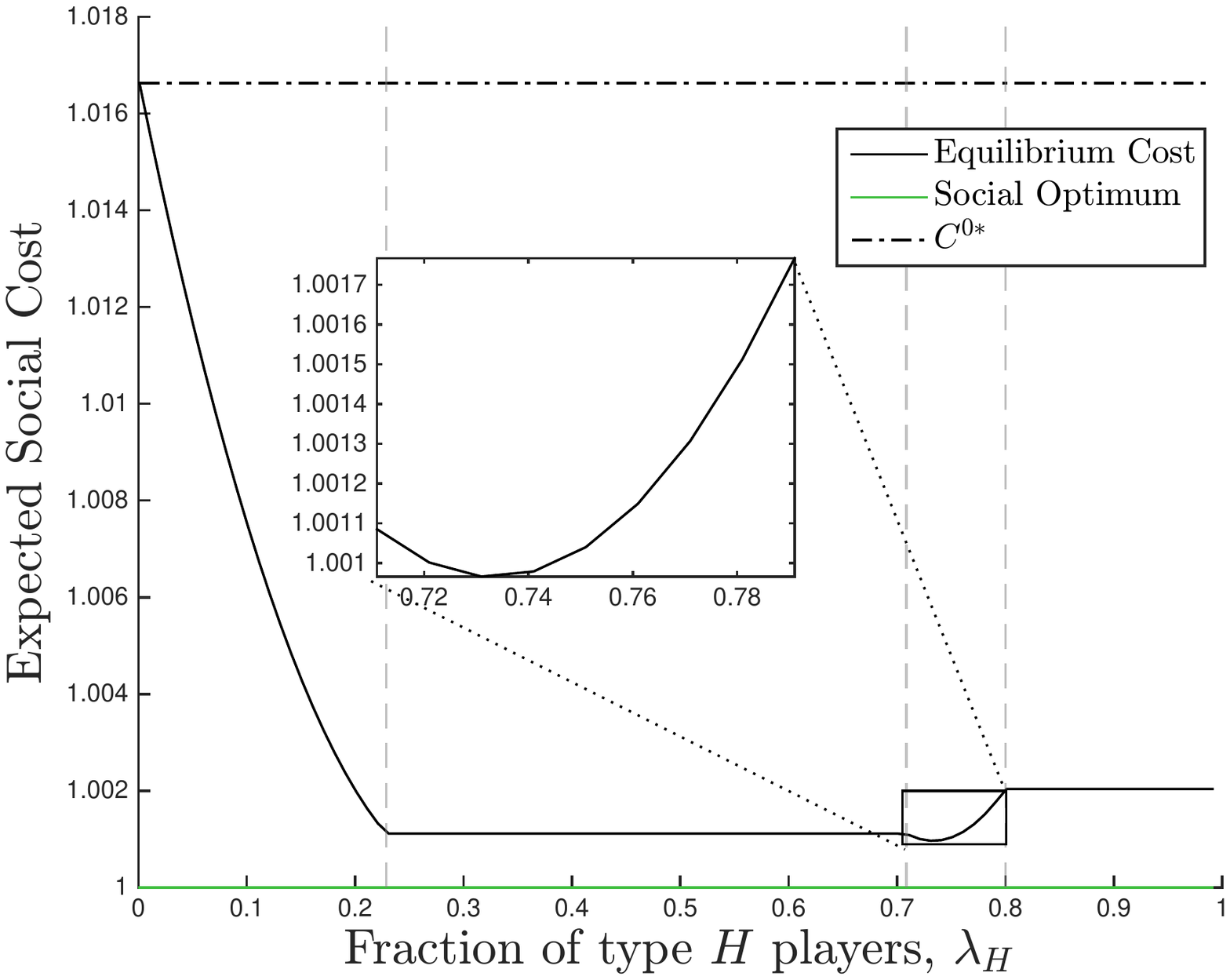}\label{fig:soc_exp_cost_p60}}
\caption{Expected social cost vs. fraction of informed users $\fracInf$; normalized by soc. opt. cost.}
\label{fig:aggCosts}
\end{figure}

% Figure \ref{fig:aggCosts} illustrates our result in Theorem \ref{thm:soc_val_of_info} that above a critical fraction coinciding with $\fracBoundEta{1}$, there is no improvement in social welfare. Notably, this critical fraction is below the critical fraction where improvement in individual welfare ceases ($\fracBoundEta{3}$); this means that in regimes $\regime{2},\regime{3}$, there is an individual incentive to switch from type $\typeUninf$ to type $\typeInf$ if possible, but there is no societal benefit in doing so. This is especially interesting since we observe that in regime $\regime{3}$, increasing the fraction of informed players is actually harmful to social welfare. This implies that there may be a regime where there is personal incentive to adopt a behavior that is harmful to society. The results from this simple model illustrate the non-trivial effects of information on both individual and social welfare. As traffic information becomes increasingly available to both travelers and system operators, it is especially important to model and predict the effects of information.
% \include{benchmarks}
%!TEX root = main.tex

\section{Concluding Remarks}
\label{sec:discussion}
As TIS become more ubiquitous, it becomes even more important to study the effects of information. This paper introduces a Bayesian congestion game model that allows for the study of the effects of heterogeneous information about traffic incidents on route choices. Our model provides a unifying analytical framework which considers incident probability, information distribution, and information accuracy. We are able to capture many of the effects of information observed by \cite{mahmassani1991system}, namely: (i) there exists an optimal fraction (less than 1) of commuters with information that achieves the maximum reduction in social cost; (ii) the cost for commuters with information increases as the fraction of informed commuters increases; and (iii) even commuters without information receive a reduction in cost when others have access to information.

In our analysis of the value of information, we only focused on the case where population $\typeInf$ players receive perfectly accurate information about the incident state. However, real TIS may be noisy and inaccurate, and additional research is needed to fully characterize the effects of the distribution of imperfect information. While our equilibrium results hold for general $\accuracy^\typeInf$ (see figure Figure~\ref{fig:equilib_etaH_075} for the equilibrium regimes and split fractions for $\accuracy^\typeInf=0.75$), further work is necessary to fully characterize and understand the effects of noisy information on equilibrium route choices and costs.

\begin{figure}[htb]
\centering
\subfloat[Equilibrium regimes in the $\fracInf-\pInc$ plane]{
\includegraphics[trim={2.5cm 0cm 2.5cm 1cm},clip,width=0.45\linewidth]{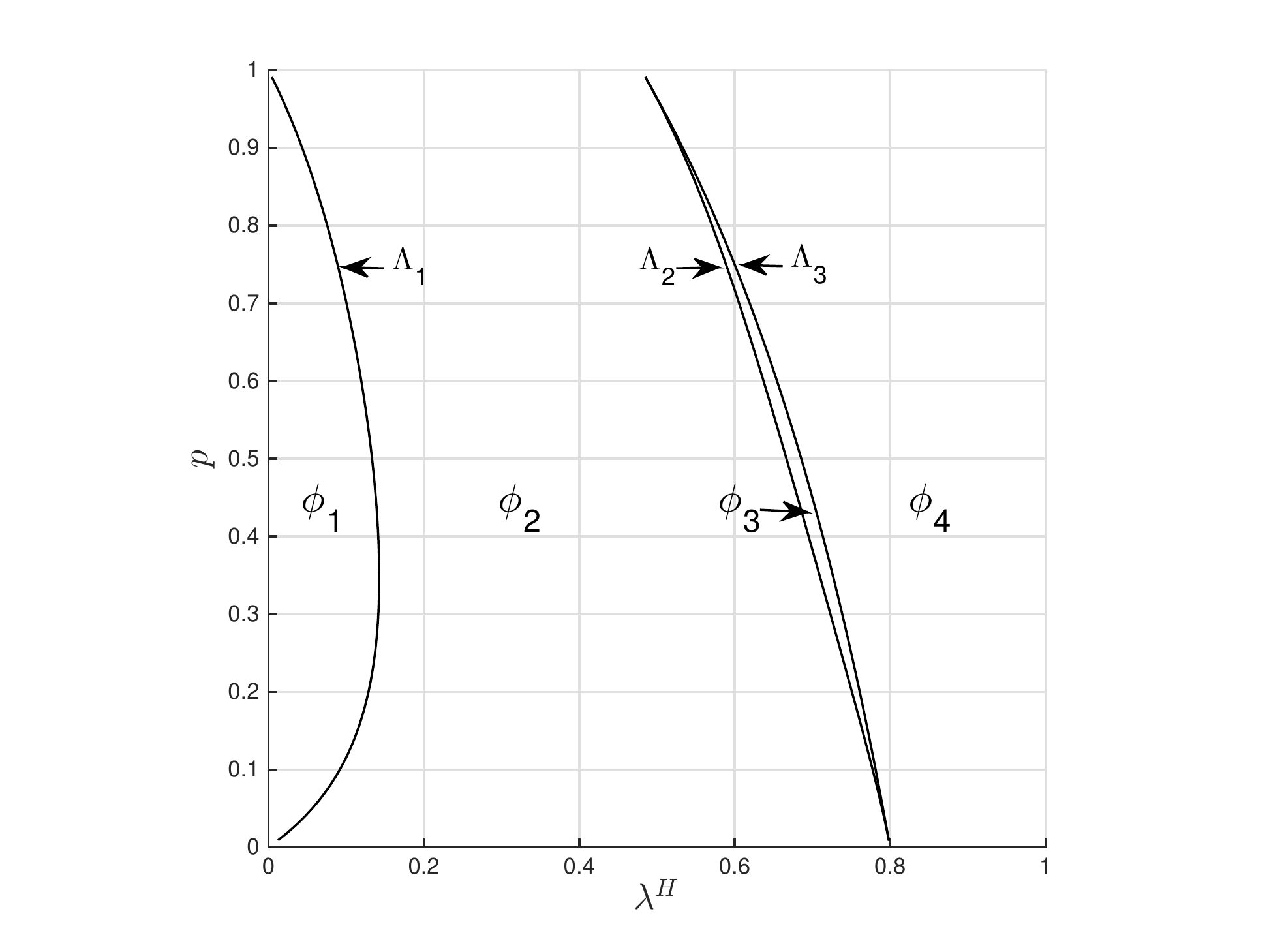}\label{fig:regBound_etaH075}}
\subfloat[Equilibrium split fractions vs. $\fracInf$, $\pInc=0.2$]{
\includegraphics[trim={2.5cm 0cm 2.5cm 1cm},clip,width=0.45\linewidth]{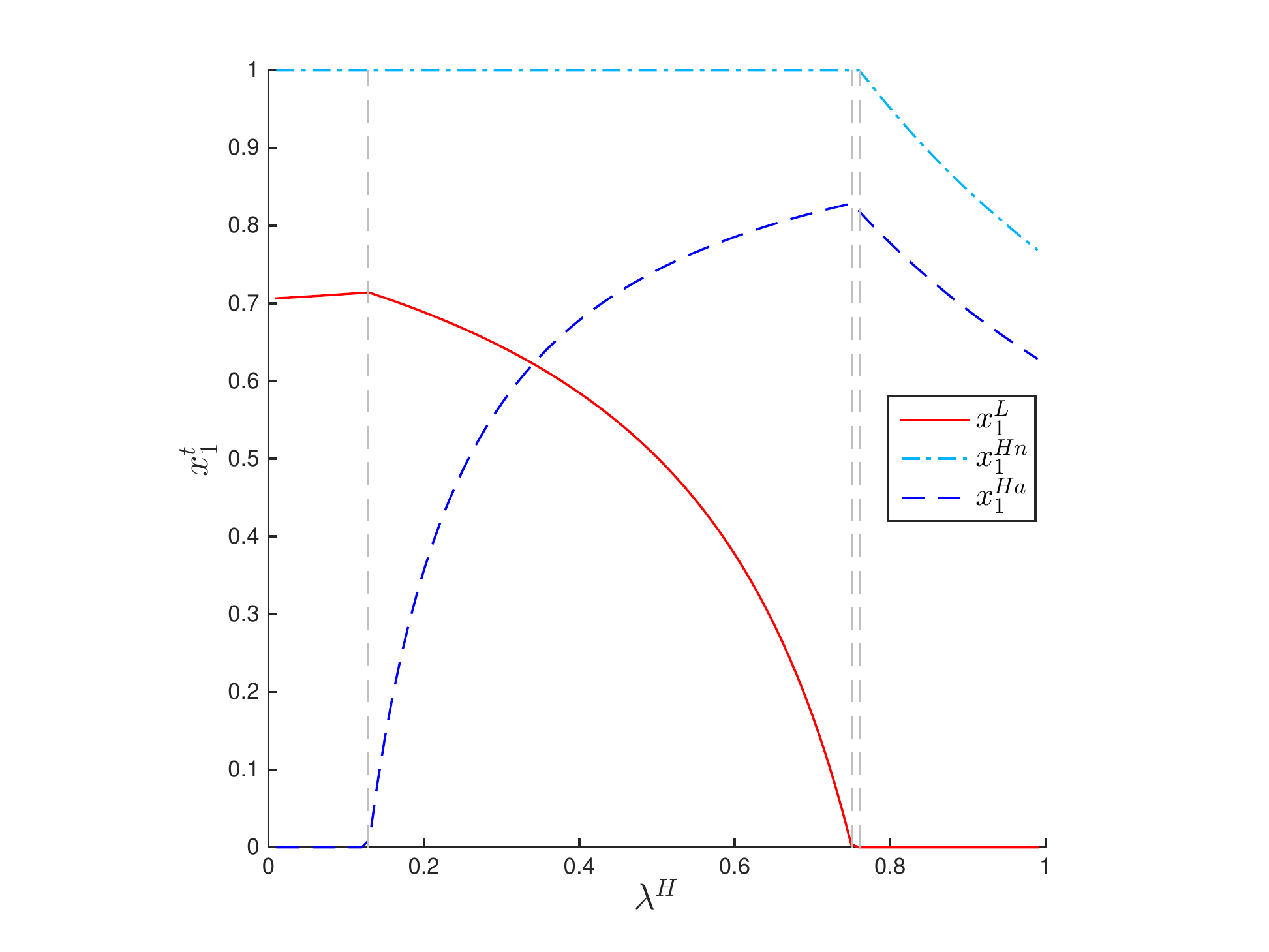}\label{fig:splitfrac_etaH075}}
\caption{Equilibrium regimes and split fractions for $\accuracy^\typeInf=0.75$.}
\label{fig:equilib_etaH_075} 
\end{figure}

The results of this paper raise some practical considerations. TIS providers must take into account that the value of information to informed commuters decreases as the fraction of highly-informed commuters increase. This is especially pertinent for services like Waze, which use crowdsourced data. As more people use such services, the quality of information may improve due to having more users reporting information, but the relative benefit of having access to information may decrease as more commuters take advantage of it. TIS providers must balance these effects in order to provide a useful service. 

We also identify that there are conditions where it would be advantageous for an individual to switch from being a less-informed to a highly-informed commuter (i.e. relative value of information is positive), but it would be detrimental for society if they did so. Our model can help social planners by identifying conditions where this may occur.

\ACKNOWLEDGMENT{%
We are grateful to Dr. Michael Schwarz and Prof. Hani Mahmassani for useful discussions.
This work was supported in part by the Google Research Award ``Estimating Social Welfare of Traffic Information Systems'', National Science Foundation (NSF) sponsored project ``CPS: Frontiers: Collaborative Research: Foundations of Resilient CybEr-Physical Systems (FORCES)'' (award \# CNS-1238959, CNS-1238962,
CNS-1239054, CNS-1239166), and NSF award ``CAREER: Resilient Design of Networked Infrastructure Systems: Models, Validation, and Synthesis'' (award \# CNS-1453126). J. Liu also acknowledges support by the MIT CEE Gradate Research Fellowship.}% Leave this (end of acknowledgment)

% Appendix here
% Options are (1) APPENDIX (with or without general title) or 
%             (2) APPENDICES (if it has more than one unrelated sections)
% Outcomment the appropriate case if necessary
%
% \begin{APPENDIX}{<Title of the Appendix>}
% \end{APPENDIX}
%
%   or 
%
% \begin{APPENDICES}
% \section{<Title of Section A>}
% \section{<Title of Section B>}
% etc
% \end{APPENDICES}

% References here (outcomment the appropriate case) 

% CASE 1: BiBTeX used to constantly update the references 
%   (while the paper is being written).
\bibliographystyle{ormsv080} % outcomment this and next line in Case 1
% \bibliography{sources,traffic,traffic_old} % if more than one, comma separated

% CASE 2: BiBTeX used to generate mypaper.bbl (to be further fine tuned)

 % outcomment this line in Case 2

\end{document}